\newtheorem{theorem}{Theorem}
\newtheorem{example}{Example}
\newtheorem{remark}[theorem]{Remark}
\newtheorem{proposition}[theorem]{Proposition}
\newtheorem{lemma}[theorem]{Lemma}
\newcommand{\E}{\mathbb E}
\newcommand{\e}{\mathrm e}
\newcommand{\D}{\mathrm{d}}
\newcommand{\F}{\mathcal F}
\newcommand{\I}{\mathrm i}
\newcommand{\PP}{\mathbb P}
\newcommand{\Var}{\mathrm{Var}}
\begin{document}

\title{Modeling microstructure price dynamics with symmetric Hawkes and diffusion model using ultra-high-frequency stock data}
\date{}

\author{Kyungsub Lee\footnote{Department of Statistics, Yeungnam University, Gyeongsan, Gyeongbuk 38541, Korea} and Byoung Ki Seo\footnote{Corresponding author, School of  Management Engineering, UNIST(Ulsan National Institute of Science and Technology), Ulsan 44919, Korea}}
\maketitle

\begin{abstract}
This study examine the theoretical and empirical perspectives of the symmetric Hawkes model of the price tick structure.
Combined with the maximum likelihood estimation, the model provides a proper method of volatility estimation specialized in ultra-high-frequency analysis.
Empirical studies based on the model using the ultra-high-frequency data of stocks in the S\&P 500 are performed.
The performance of the volatility measure, intraday estimation, and the dynamics of the parameters are discussed.
A new approach of diffusion analogy to the symmetric Hawkes model is proposed with the distributional properties very close to the Hawkes model.
As a diffusion process, the model provides more analytical simplicity when computing the variance formula, incorporating skewness and examining the probabilistic property.
An estimation of the diffusion model is performed using the simulated maximum likelihood method and shows similar patterns to the Hawkes model.
\end{abstract}

\section{Introduction} 
The extensive observations and analysis of ultra-high-frequency financial data has become increasingly available due to the development of computing schemes, massive storage devices, and electronic trade systems in the financial markets.
The ultra-high-frequency data includes the price dynamics and various types of trade orders recorded in seconds or with a shorter time resolution.
Therefore, there has been growing attention in the necessity for proper analysis and modeling of ultra-high-frequency financial data among practitioners and theorists.

One of the important subjects of modeling ultra-high-frequency data is the price dynamics in micro level with tick structures.
To describe the micro structure of the price dynamics and order flows, the Hawkes process \citep{Hawkes1971point, Hawkes1971} has been used to consider the non-time-homogeneous features of the duration between price changes or orders such as clustering and mutual effect.
The Hawkes process belongs to the class of point processes and is defined by constructing the conditional intensity processes as a function of previous events.

\cite{Hewlett2006} examined the model of the arrival times of trades and the price impacts based on a symmetric bivariate Hawkes process.
\cite{Large2007} examined the market resilience after large trades using the limit order book data and mutually excited multivariate Hawkes processes.
\cite{Bowsher2007} introduced a generalized Hawkes model to analyze the relationship between the trading times and mid price changes.
With mutually excited Hawkes processes that have a strong microscopic mean reversion property, \cite{Bacry2013} constructed a model that accounts for the market microstructure noise and the Epps effect.

On the other hand, \cite{Fonseca2014Self} focused on the clustering behaviors of trades using self-excited Hawkes processes with an application to the generalized method of moments estimations.
\cite{Fonseca2014} provided the moment conditions and autocorrelation functions of self and mutually excited Hawkes processes to exhibit both clustering and mean reversion.
\cite{Bacry2014} proposed a multivariate Hawkes process to model the price dynamics and the market impact of market orders to account for the various stylized facts of the market microstructure.
For more previous financial studies on market microstructure or price dynamics based on point processes or intensity modeling, 
the reader should refer to \cite{Bauwens2009}, \cite{Embrechts2011}, \cite{Bacry2012}, \cite{Zheng2014} and \cite{ChoeLee2014}.
The Hawkes process has also been applied to modeling the credit and contagion risk, see \cite{Errais2010}, \cite{Ait2010} and \cite{Dassios2012}.

This paper focuses on the tick price dynamics and volatility estimation.
The realized volatility estimator \citep{Barndorff2002a,Barndorff2002b,ABDL} in the ultra-high-frequency dynamics can be biased; 
when one uses every sample of ultra-high-frequency financial data to calculate the finite sum approximation of the integrated volatility due to the microstructure noise and clustering property, see \cite{Hansen}.
The adjustment methods of the bias in a nonparametric fashion \citep{Zhang2005, Ait2005, Ait2011} have been introduced.
In these approaches, one supposes that the observed price process consists of the latent efficient price and noise term around the efficient price process.
In contrast, in the Hawkes models or diffusion approach introduced in this paper and related literatures, one  models the observed price movements directly, which may include the noise, and compute the closed form formula for the variance of the return and analyze the properties of the variance.

Empirical studies to compare the volatilities calculated by Hawkes modeling and realized quadratic variation using the stock prices of the S\&P 500 were performed.
Because the Hawkes model approach incorporates all of the arrival times of the price change within a millisecond time resolution, such richness of data provides the efficiency of the volatility estimation.
This paper reports the relative efficiency of the Hawkes volatility compared to the realized volatility in simulation studies.
Therefore, owing to the rich information in ultra-high-frequency data combined with efficient likelihood estimation methods, one can estimate the parameters and volatilities within a relatively short time period of observation.
This is one of important features of the Hawkes model, and with this property, 
this paper presents the empirical results of the intraday volatility dynamics based on the Hawkes model.
By observing the intraday volatility variation in every moment, one can respond to sudden market movements more effectively.

In addition, a diffusion counterpart of the Hawkes model for the micro price dynamics is introduced.
The diffusion model consists of the square root processes for both volatility and drift.
The proposed diffusion model has similar properties to the symmetric Hawkes model of price process such as the strong correlation of the mean process over the time lag on a small time scale
and hence it incorporates the market microstructure noise.
This paper reports that the diffusion models generate the distribution very close to the corresponding Hawkes models using the Kolmogorov forward equation.
As a diffusion model, it is simpler to compute variance formula, able to introduce the leverage parameter which explain the skewness and provides the insight about the distributional property of return.
In addition, using simulated likelihood estimation method, the model parameters and volatility of the equity returns are examined.

The remainder of the paper is organized as follows.
Section~\ref{Sect:Hawkes} introduces the Hawkes model for the micro price dynamics with the basic setup similar to \cite{Hewlett2006}.
Section~\ref{Sect:diffusion} proposes and discusses the diffusion analogy of the symmetric Hawkes model.
Section~\ref{Sect:empirical} shows the empirical results with the symmetric Hawkes model and the corresponding diffusion model.
The daily and intraday variation of the Hawkes parameters and volatility with several stock data of the S\&P 500 are shown.
Section~\ref{Sect:concl} concludes the paper.
The proofs and further explanations are gathered in the Appendix.

\section{Hawkes process for tick dynamics}~\label{Sect:Hawkes}
\subsection{Point process}
This section starts with the introduction of the Hawkes process, which belongs to the class of point processes, (see, \cite{Daley}).
A point process, $N$, is formally defined on a state space, $\mathcal X$, as a mapping from a probability space $(\Omega, \PP)$ to $\mathcal N$, where $\mathcal N$ denotes the space of all counting measures on  the $\sigma$-field of $\mathcal X$'s Borel sets, $\mathcal B_{\mathcal X}$.
The space $\mathcal X$ is a complete separable metric space and  to study the tick-dynamics of a stock price movements, this paper focuses on the case that $\mathcal X = \mathbb R$, the time domain.
As a counting measure, $N(A,\omega)$ has a non-negative integer value for any measurable set $A\in\mathcal B_{\mathcal X}$ and is finite for any bounded measurable $A$.
Using the Dirac measure, $\delta_x$, defined for every $x\in\mathcal X$, the counting measure is represented by
$$ N = \sum_{i} k_i \delta_{x_i}$$
where $\{x_i\}$ is a countable set with at most finitely many $x_i$ in any bounded Borel set and $k_i$ is a positive integer.
This paper only considers the simple counting measure, i.e., $k_i =1$ for all $i$.

A point process $N$ can be regarded as a stochastic process by letting $N(t,\omega) = N((-\infty,t],\omega)$.
Consider a filtered probability space $(\Omega,\{\F_t\}, \PP)$, $-\infty<t\leq T$, where the $\sigma$-field $\F_t$ is generated by $N(t)$.
The Hawkes process is an orderly stationary point process $N$ constructed by modeling the conditional intensity, $\lambda$.
The conditional intensity function is represented as an adapted process to $\{\F_t\}$ such that $\lambda(t)\D t = \E[N(t+\D t)-N(t)|\F_{t}]$. 
For an $M$-dimensional Hawkes process $(N_1, \ldots, N_M)$, each intensity, $\lambda_i(t)$ of $N_i$ is assumed to be
$$\lambda_i(t) = \mu_i + \sum_{j=1}^M \int_{-\infty}^{t} \phi_{i,j}(t-u) \D N_j(u) $$
where $\phi_{i,j}(t-u)$ is normally a deterministic function and called kernel.
The integration of the r.h.s. is the stochastic integration defined pathwise.
To apply the stochastic integration theory in the later, the Hawkes and intensity processes are considered to be right continuous processes with left limits.

\subsection{Self and mutually excited Hawkes}\label{Subsec:full}

This subsection briefly reviews the self and mutually excited Hawkes model.
Consider a two dimensional Hawkes process $(N_1, N_2)$ with exponential decay kernels in the conditional intensities with constants $\mu_i$, $\alpha_{ij}$ and $\beta_{ij}$, for $0<t$:
\begin{align}
\lambda_1 (t) &= \mu_1 + \int_{-\infty}^{t} \alpha_{11} \e^{-\beta_{11}(t-u)}\D N_1(u)+ \int_{-\infty}^{t} \alpha_{12} \e^{-\beta_{12}(t-u)}\D N_2(u)\nonumber\\
&= \mu_1 + \lambda_{11}(0)\e^{-\beta_{11} t} + \lambda_{12}(0)\e^{-\beta_{12} t} + \int_{0}^{t} \alpha_{11} \e^{-\beta_{11}(t-u)}\D N_1(u)+ \int_{0}^{t} \alpha_{12} \e^{-\beta_{12}(t-u)}\D N_2(u),\label{Eq:lambda1}
\end{align}
and
\begin{align}
\lambda_2 (t) &= \mu_2 + \int_{-\infty}^{t} \alpha_{21} \e^{-\beta_{21}(t-u)}\D N_1(u) + \int_{-\infty}^{t} \alpha_{22} \e^{-\beta_{22}(t-u)}\D N_2(u)\nonumber\\
&= \mu_2 + \lambda_{21}(0)\e^{-\beta_{21} t} + \lambda_{22}(0)\e^{-\beta_{22} t} + \int_{0}^{t} \alpha_{21} \e^{-\beta_{21}(t-u)}\D N_1(u)+ \int_{0}^{t} \alpha_{22} \e^{-\beta_{22}(t-u)}\D N_2(u)\label{Eq:lambda2}
\end{align}
where
\begin{align*}
\lambda_{ij}(t)= \int_{-\infty}^{t} \alpha_{ij} \e^{-\beta_{ij}(t-u)}\D N_j(u).
\end{align*}
In this paper, this model is called the fully characterized self and mutually excited Hawkes process compared to the symmetric Haweks process introduced later.
Note that $\lambda_{11}$ and $\lambda_{22}$ are self-excited components, $\lambda_{12}$ and $\lambda_{21}$ are mutually excited components, and every parameter such as $\alpha_{ij}$ and $\beta_{ij}$, can have a different value.
This model was proposed by \cite{Bacry2013} and was studied for a simplified version focused on the self-excited term.
The self and mutually excited Hawkes model and its moment properties are studied in \cite{Fonseca2014}.

The components of the intensity processes, $\lambda_{ij}$, can be rewritten by 
\begin{equation}
\lambda_{ij}(t)= q_{ij} \int_{-\infty}^{t}  \beta_{ij}\e^{-\beta_{ij}(t-u)}\D N_j(u)\label{Eq:normalized}
\end{equation}
where $q_{ij} := \frac{\alpha_{ij}}{\beta_{ij}}$ and the integrand, $\beta_{ij}\e^{-\beta_{ij}(t-u)}$, is a normalized decaying function in the sense that
$$ \int_0^\infty \beta_{ij}\e^{-\beta_{ij}\tau} \D\tau = 1.$$
The coefficients, $q_{ij} $, form a branching matrix, $Q = \{ q_{ij} \}_{i,j=1,2}$ 
and if the spectral radius, the maximum of the absolute eigenvalues of $Q$, is less than 1, then the Hawkes process is well defined \citep{Hawkes1974, Bremaud1981}.

The stock price process can be assumed to be represented by the difference between two Hawkes processes,
\begin{equation}
S_t = S_0 + \delta \{N_1(t) - N_2(t) - (N_1(0) - N_2(0))\} \label{Eq:price}
\end{equation}
where $\delta$ denotes the unit size of the price movement in the tick structure of price dynamics.
(In the previous subsection, $\delta_x$ was used to denote the Dirac measure.
On the other hand, without the subscript, $\delta$ is a constant that represents the tick size.)
The process $N_1$ represents the up movements of the price process and $N_2$ represents the down movements.

However, the fully characterized Hawkes model is too complicated not only in the number of parameters but also in the fact that the model becomes four dimensional problems when dealing with the moment conditions as explained in \ref{Sect:intensity}.
(Nonetheless, we will provide some empirical results with the fully characterized model in Section~\ref{Sect:empirical}.)
In the next subsection, we consider a simpler version.

\subsection{Symmetric Hawkes process}\label{subsect:symmetric}

This subsection explains the symmetric Hawkes model for the price dynamics.
The empirical study shows that the symmetric version also well represents the basic properties of the tick dynamics.
To simplify the model from the fully characterized version, the parameter condition is imposed as
$$ \alpha_c := \alpha_{12} = \alpha_{21},\quad \alpha_s := \alpha_{11} = \alpha_{22}$$
$$\beta:=\beta_{11} = \beta_{12} = \beta_{21} = \beta_{22}, \quad \mu:=\mu_1=\mu_2.$$
Then 
\begin{align}
\lambda_1 (t) &= \mu + \int_{-\infty}^{t} \alpha_s \e^{-\beta(t-u)}\D N_1(u)+ \int_{-\infty}^{t} \alpha_c \e^{-\beta(t-u)}\D N_2(u) \label{Eq:symmetric1}\\
&= \mu + (\lambda_1(0) - \mu)\e^{-\beta t} + \int_{0}^{t} \alpha_s \e^{-\beta(t-u)}\D N_1(u)+ \int_{0}^{t} \alpha_c \e^{-\beta(t-u)}\D N_2(u)\nonumber\\
\lambda_2 (t) &= \mu + \int_{-\infty}^{t} \alpha_c \e^{-\beta(t-u)}\D N_1(u) + \int_{-\infty}^{t} \alpha_s \e^{-\beta(t-u)}\D N_2(u) \label{Eq:symmetric2}\\
&= \mu + (\lambda_2(0) - \mu)\e^{-\beta t} + \int_{0}^{t} \alpha_c \e^{-\beta(t-u)}\D N_1(u) + \int_{0}^{t} \alpha_s \e^{-\beta(t-u)}\D N_2(u).\nonumber
\end{align}
This can also be written as
\begin{align*}
\D \lambda_1(t) &= \beta(\mu-\lambda_1(t))\D t + \alpha_s \D N_1(t) + \alpha_c \D N_2(t) \\
&=\left\{ \beta\mu + (\alpha_s - \beta)\lambda_1(t) + \alpha_c \lambda_2(t) \right\} \D t + \alpha_s (\D N_1(t) -\lambda_1(t) \D t) + \alpha_c (\D N_1(t) -\lambda_1(t) \D t) \\
\D \lambda_2(t) &= \beta(\mu-\lambda_2(t))\D t + \alpha_c \D N_1(t) + \alpha_s \D N_2(t)\\
&=\left\{ \beta\mu + \alpha_c\lambda_1(t) + (\alpha_s - \beta) \lambda_2(t) \right\} \D t + \alpha_c (\D N_1(t) -\lambda_1(t) \D t) + \alpha_s (\D N_1(t) -\lambda_1(t) \D t).
\end{align*}
Note that
$$\alpha_c \lambda_{11}(t) = \alpha_s \lambda_{21}(t), \quad \alpha_s \lambda_{12}(t) = \alpha_c \lambda_{22}(t).$$
By setting $\beta_{11} = \beta_{12}$ and $\beta_{21} = \beta_{22}$, the processes $(N_1, N_2, \lambda_1,\lambda_2)$ are Markov and the differential equation system of the expected intensities becomes two dimensional.

By the differential forms of $\lambda_i$, 
\begin{equation}
\begin{bmatrix}
\ell'_1(t|s)\\
\ell'_2(t|s)
\end{bmatrix}
= 
\begin{bmatrix}
\alpha_{s}-\beta & \alpha_{c} \\
\alpha_{c} & \alpha_{s}-\beta
\end{bmatrix}
\begin{bmatrix}
\ell_1(t|s) \\
\ell_2(t|s)
\end{bmatrix}
+
\begin{bmatrix}
\beta \mu \\
\beta \mu
\end{bmatrix} \label{Eq:system}
\end{equation}
where $\ell_i(t|s) = \E_s[\lambda_i(t)]$ and the derivatives are with respect to $t$.
Let 
$$
M= \begin{bmatrix}
\alpha_{s}-\beta & \alpha_{c} \\
\alpha_{c} & \alpha_{s}-\beta
\end{bmatrix}.
$$
The eigenvalues of $M$ are
$$ (\xi_1, \xi_2) = (-\beta-\alpha_c+\alpha_s, -\beta+\alpha_c+\alpha_s),$$
and the corresponding eigenvectors are $(-1,1)$ and $(1,1)$, respectively.
If the eigenvalues are all negative, then the solution to the system converges to the particular solution as time approaches infinity.
This is equivalent to the condition that the spectral radius of the branching matrix is less than one 
where, in the sense of parametrization in Eq.~\eqref{Eq:normalized}, the branching matrix is
$$ Q = \begin{bmatrix} q_s & q_c \\ q_c & q_s \end{bmatrix}$$
with $q_s := \alpha_s/\beta$ and $q_c := \alpha_c/\beta$.

The solution of system \eqref{Eq:system} is
$$
\begin{bmatrix}
\E_s[\lambda_1(t)]\\
\E_s[\lambda_2(t)]
\end{bmatrix}
=
\frac{ - \lambda_1(s) + \lambda_2(s)}{2}\e^{\xi_1(t-s)}\begin{bmatrix} -1\\1 \end{bmatrix} + \frac{\lambda_1(s) + \lambda_2(s)}{2}\e^{\xi_2(t-s)}\begin{bmatrix} 1\\1 \end{bmatrix} -\frac{\mu\beta}{\xi_2}\left(1-\e^{\xi_2(t-s)}\right)\begin{bmatrix} 1\\1 \end{bmatrix}.
$$
The long-run expectations of the intensities as $t \rightarrow \infty$, i.e., the particular solution of the system \eqref{Eq:system} is 
$$
\dfrac{\mu\beta}{\beta-(\alpha_s + \alpha_c)}
\begin{bmatrix}
1 \\ 1 
\end{bmatrix}
=
-\frac{\mu\beta}{\xi_2}
\begin{bmatrix}
1 \\ 1 
\end{bmatrix}
.
$$
In the latter, for computational ease, it is usually assumed that the intensity processes are in the stationary state at time 0, i.e.,
\begin{equation}
\lambda_1(0) = \lambda_2(0) = \frac{\mu \beta}{\beta - \alpha_s - \alpha_c} = -\frac{\mu \beta}{\xi_2}. \label{Eq:initial}
\end{equation}

The formula for the variance of the return generated by the symmetric Hawkes model is quite simple, as represented in Proposition~\ref{Prop:vol}.
The simplicity largely depends on the symmetry of the parameter setting and the assumption of the stationary state condition at time 0.
Indeed, the stationary condition does not significantly affect the result on the variance formula in the high-frequency price dynamics modeling
as the expectations of the intensities quickly converge.
The formula was derived independently but \cite{Fonseca2014} reported a similar result (however, different from the exponential term below).

\begin{proposition}\label{Prop:vol}
Assume that the price process, $S$, follows the difference of two symmetric Hawkes processes defined by Eqs.~\eqref{Eq:price},\eqref{Eq:symmetric1}, and \eqref{Eq:symmetric2}.
Under the stationarity condition of the intensity processes at time 0 as in Eq.~\eqref{Eq:initial},
the variance of the return is represented by
\begin{align*}
\Var\left(\frac{S_t - S_0}{S_0}\right) = 
\frac{2\delta^2\lambda_1(0)}{S_0^2\xi_1^2}\left\{ \beta^2 t - 2(\alpha_s - \alpha_c)\beta\left(\frac{\e^{\xi_1 t}-1}{\xi_1}\right) + (\alpha_s -\alpha_c)^2\left( \frac{\e^{2\xi_1 t}-1}{2\xi_1} \right) \right\}
\end{align*}
\end{proposition}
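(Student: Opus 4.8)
The plan is to reduce everything to second moments of the driftless difference process together with one auxiliary scalar, and then to solve a small \emph{triangular} system of linear ODEs obtained from the jump change-of-variables formula. First I would set $Z_t := (S_t-S_0)/\delta = (N_1(t)-N_1(0))-(N_2(t)-N_2(0))$, so that $\Var\!\left((S_t-S_0)/S_0\right) = (\delta^2/S_0^2)\,\Var(Z_t)$. Under the symmetric parametrization and the stationary initialization \eqref{Eq:initial}, the solution formula for $\E_s[\lambda_i(t)]$ displayed above gives $\E[\lambda_1(t)]=\E[\lambda_2(t)]=\lambda_1(0)$ for every $t$; hence $\E[Z_t]=\int_0^t\E[\lambda_1-\lambda_2]\,\D s=0$ and $\Var(Z_t)=\E[Z_t^2]$. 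I then introduce the auxiliary process $y_t:=\lambda_1(t)-\lambda_2(t)$, which by the differential forms of $\lambda_i$ solves $\D y_t=-\beta y_t\,\D t+(\alpha_s-\alpha_c)(\D N_1-\D N_2)$ and starts at $y_0=0$.

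Next I would derive a closed moment system by applying the change-of-variables formula for finite-variation processes with jumps (Itô's formula for jump processes) to $Z^2$, $Zy$ and $y^2$, taking expectations, and using two facts: the compensated integrals $\int f_{s-}(\D N_i-\lambda_i\,\D s)$ are mean-zero martingales, so $\E\!\int f_{s-}\,\D N_i=\E\!\int f_{s-}\lambda_i\,\D s$; and the point process is simple, so $N_1$ and $N_2$ share no jumps and $(\D N_1-\D N_2)^2=\D N_1+\D N_2$. The discontinuous covariations are $\Delta Z\,\Delta y=(\alpha_s-\alpha_c)(\D N_1+\D N_2)$ and $(\Delta y)^2=(\alpha_s-\alpha_c)^2(\D N_1+\D N_2)$. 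Writing $v:=\E[y^2]$, $w:=\E[Zy]$, $m:=\E[Z^2]$ and using $\E[\lambda_1+\lambda_2]=2\lambda_1(0)$, this produces the triangular linear system
\begin{align*}
v'(t) &= 2\xi_1 v(t) + 2(\alpha_s-\alpha_c)^2\lambda_1(0), \\
w'(t) &= \xi_1 w(t) + v(t) + 2(\alpha_s-\alpha_c)\lambda_1(0), \\
m'(t) &= 2w(t) + 2\lambda_1(0),
\end{align*}
with $v(0)=w(0)=m(0)=0$, where $\xi_1=(\alpha_s-\alpha_c)-\beta$ is precisely the eigenvalue of $M$ identified above.

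Finally I would solve the system from the top down: the decoupled equation for $v$ is elementary, its solution is substituted into the $w$-equation, which is integrated with the integrating factor $\e^{-\xi_1 t}$, and then $m$ is recovered from $m'=2w+2\lambda_1(0)$. A convenient consistency check is that, after substitution, the right-hand side collapses to a perfect square, $m'(t)=\tfrac{2\lambda_1(0)}{\xi_1^2}\bigl(\beta-(\alpha_s-\alpha_c)\e^{\xi_1 t}\bigr)^2$; integrating this from $0$ to $t$ and multiplying by $\delta^2/S_0^2$ yields exactly the claimed expression.

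The ODEs for $v$ and $w$ are routine first-order linear equations, so I expect the genuine care to lie entirely in the second step: the jump-calculus bookkeeping. The main obstacle is correctly accounting for the discontinuous covariation terms $\Delta Z\,\Delta y$ and $(\Delta y)^2$ (rather than treating the processes as continuous) and rigorously justifying the vanishing of the compensated stochastic integrals, which requires the integrability afforded by the stationary regime and the well-definedness condition (spectral radius of $Q$ below one). Once the quadratic-variation contributions $\D N_1+\D N_2$ are pinned down, the triangular structure $v\to w\to m$ renders the remaining derivation mechanical.
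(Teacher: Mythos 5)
Your proposal is correct: the jump-calculus bookkeeping, the triangular system for $v=\E[y^2]$, $w=\E[Zy]$, $m=\E[Z^2]$, the zero initial conditions (which use $\lambda_1(0)=\lambda_2(0)$ from Eq.~\eqref{Eq:initial}), and the perfect-square collapse
\begin{equation*}
m'(t)=\frac{2\lambda_1(0)}{\xi_1^2}\bigl(\beta-(\alpha_s-\alpha_c)\e^{\xi_1 t}\bigr)^2
\end{equation*}
all check out and integrate to exactly the stated formula after multiplying by $\delta^2/S_0^2$. Your route is, however, genuinely leaner than the paper's. The paper works component-wise: it first solves the coupled $2\times 2$ system for $\bigl(\E[\lambda_1^2],\,\E[\lambda_1\lambda_2]\bigr)$, then a second coupled system for $\bigl(\E[\lambda_1 N_1],\,\E[\lambda_1 N_2]\bigr)$, then assembles $\E[N_1^2]$ and $\E[N_1 N_2]$, and only in the last step forms $\E[(N_1-N_2)^2]=2\bigl(\E[N_1^2]-\E[N_1 N_2]\bigr)$; both eigenvalues $\xi_1,\xi_2$ and integration constants $c_1,c_2,d_1,d_2$ appear throughout, and the $\xi_2$-modes cancel only in that final subtraction. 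You instead project onto the antisymmetric coordinate at the outset, $Z=N_1-N_2$ and $y=\lambda_1-\lambda_2$, which is precisely the $(-1,1)$ eigendirection of the matrix $M$; by the parameter symmetry this pair is autonomous, so each of the paper's coupled systems collapses to a scalar ODE and $\xi_2$ never enters. The underlying machinery is identical (compensated-martingale identities, absence of common jumps so that $(\D N_1-\D N_2)^2=\D N_1+\D N_2$, and the same cascade of intensity moments, then mixed moments, then counting moments), but your decomposition buys brevity and makes transparent why only $\xi_1$ survives in the final answer, whereas the paper's computation additionally yields the individual quantities $\E[\lambda_1^2(t)]$, $\E[N_1^2(t)]$, $\E[N_1(t)N_2(t)]$, which have independent interest. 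One point to keep honest in a full write-up: as you note, the vanishing of the compensated stochastic integrals requires square-integrability of the intensities under the stationary regime with the spectral radius of $Q$ below one; the paper's proof assumes the same implicitly (its preparatory lemma is stated without proof), so neither argument is weaker on this score.
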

\begin{proof}
See \ref{Sect:var_Hawkes}.
\end{proof}

\begin{remark}\label{Rmk:reparm}
If $t$ is sufficiently large, then the variance is approximated by
\begin{align*}
\Var\left(\frac{S_t - S_0}{S_0}\right) &\approx \frac{2\delta^2\beta^2 \lambda_1(0) t}{S_0^2\xi_1^2 }  = \frac{2\delta^2 \mu t}{S_0^2(1-q_s + q_c)^2(1-q_s-q_c)}\\
&=\frac{2\delta^2 \mu t}{S_0^2 \left(1- \frac{\alpha_s}{\beta} + \frac{\alpha_c}{\beta}\right)^2 \left(1-\frac{\alpha_s}{\beta}-\frac{\alpha_c}{\beta}\right)}.
\end{align*}
In this approximation, the following parameterization of the symmetric Hawkes process is useful for the volatility estimation.
The parameter $\mu$ is represented by a formula consisting of the annualized daily volatility and the other parameters in the Hawkes model and are given by
\begin{align*}
\mu &=- \frac{\sigma_{\mathrm{ann}}^2}{T}\cdot\frac{\xi_1^2\xi_2}{2\beta^3 \delta^2_r}\\
&= -\frac{\sigma_{\mathrm{ann}}^2}{T}\cdot\frac{(1-q_s + q_c)^2(1-q_s-q_c)}{2\delta^2_r}\end{align*}
where $\sigma_{\mathrm{ann}}$ denotes the annualized volatility, $\delta_r = \delta/S_0$ and $T$ is one year.
\end{remark}

\begin{remark}
Proposition 2 in \cite{Fonseca2014} showed the formula for the mean signature plot:
$$ \frac{\nu^2}{2}\Lambda \left(\kappa^2 + (1-\kappa^2)\frac{(1-\e^{-\tau \gamma}}{\gamma \tau} \right).$$
Based on the definition of the mean signature plot,
by setting $\tau=t$ and multiplying the mean signature plot by $t/S(0)^2$,
the meaning of the formula is the same as the $\Var\left(\frac{S_t - S_0}{S_0}\right)$ in Proposition 1 of our paper.
If we rewrite \cite{Fonseca2014}'s formula with the notations used in our paper, then we have
$$ \frac{2\delta^2 \lambda_1(0) }{S_0^2 \xi_1^2} \left\{\beta^2 t + \left( (\alpha_s - \alpha_c)^2 - 2\beta(\alpha_s-\alpha_c) \right) \left(\frac{\e^{\xi_1 t}-1}{\xi_1}\right) \right\}.$$
The result of this formula is different from the formula for Proposition \label{Prop:vol} in the exponential term.
However, as discussed in Remark~\ref{Rmk:reparm}, the exponential term is negligible if $t$ is large.

\end{remark}

\subsection{Simulation study}

In this subsection, simulation studies are performed with the symmetric Hawkes processes.
With predetermined parameter settings, 500 sample paths of the price processes defined by the difference between the two symmetric Hawkes processes with 5.5 hours' time horizon are generated.
For each path, the maximum likelihood estimation is performed using the realized arrival times of the simulated path.
Table~\ref{Table:simulation} lists the results.
The detailed information about the simulation method, see \ref{Sect:simul} and
for the likelihood estimation, see \ref{Sect:likelihood}.
The table consists of two panels with different parameter settings.

The row `mean' is for the sample mean of the likelihood estimates of 500 samples.
The row `std.' is for the sample standard deviations of the estimates.
The column `H. vol' is for the mean of the volatility estimates calculated by the likelihood estimates of $\mu, \alpha_s, \alpha_c, \beta$ using Proposition~\ref{Prop:vol}.
This is compared with the theoretic volatility computed by Proposition~\ref{Prop:vol} in the row of `True'.
The column `TSRV' reports the two scale realized volatility (TSRV) proposed by \cite{Zhang2005},
which is known to be an unbiased estimator in the presence of independent market microstructure noise.
For the TSRV computation, the small time scale is 1 second and the large time scale is 5 minutes.

The Hawkes volatility and TSRV both are quite close to the true value of the volatility.
The standard deviations of the Hawkes volatility are smaller than the standard deviations of the TSRV, implying the efficiency of the maximum likelihood estimation.
More precisely, in the maximum likelihood estimation of the Hawkes model, all the information about the time arrivals of events are used without missing single events over the observed period.
On the other hand, in the computation of the realized volatility under the equidistant setting, it is needed to choose specific points that belong to the sub-grids of the interval.

\begin{table}
\caption{Simulation study with 500 samples}\label{Table:simulation}
\centering
\begin{tabular}{cccccccccc}
\hline
& $\mu$ & $\alpha_s$ & $\alpha_c$ & $\beta$ &  H. vol & TSRV\\
\hline
True & 0.0100 & 0.4000 & 0.5000 & 1.5000 & 0.1171 & 0.1171\\
mean & 0.0100 & 0.4021 & 0.5027 & 1.5024 & 0.1177 & 0.1165 \\
std. & (0.0005) & (0.0394) & (0.0428) & (0.0841) &  (0.0057) & (0.0114)\\
\hline
True & 0.0500 & 0.6500 & 0.2000 & 1.7000 & 0.3396 & 0.3396\\
mean & 0.0500 & 0.6514 & 0.2012 & 1.7027 & 0.3400 & 0.3370 \\
std. & (0.0014) & (0.0282) & (0.0144) & (0.0643) & (0.0103) & (0.0283)\\
\hline
\end{tabular}
\end{table}

The likelihood function of the symmetric Hawkes model may not be concave but is concave when $\beta$ is fixed.
For any given observed jump times $t_i$, the log likelihood function of the up jump over interval $[0,T]$ is
\begin{align*}
\log L_1(T) ={}& \int_0^T \log \lambda_1 (u) \D N_1(u) - \int_0^T \lambda_1(u) \D u \\
={}& \sum_{t_i < T} \left( \log \lambda_{1}(t_i) - \int_{t_{i-1}}^{t_i} \lambda_{1}(u) \D u \right) - \int_{t_N}^{T} \lambda_{1}(u) \D u \\
={}& \sum_{t_i < T} \left( \log \lambda_{1}(t_i) - \frac{\e^{\beta \tau_i}-1}{\beta}\lambda_{1}(t_i)  \right) - \frac{\e^{\beta (T-t_N)}-1}{\beta}\lambda_{1}(T)
\end{align*}
where $t_N$ is the last jump time up to $T$ and $\tau_i= t_{i+1}-t_i$.
Using Eq.~\eqref{Eq:symmetric1}, the term $\log \lambda_{1}(t_i) - \frac{\e^{\beta \tau_i}-1}{\beta}\lambda_{1}(t_i)$ is represented by
\begin{align*}
&\log \lambda_{1}(t_i) - \frac{\e^{\beta \tau_i}-1}{\beta}\lambda_{1}(t_i)\\
={}& \log \left\{ \lambda_{1}(0)\e^{-\beta t_i } + \mu (1-\e^{-\beta t_i }) + \alpha_s \int_0^{t_i}  \e^{-\beta (t_i -u)} \D N_1 (u)  + \alpha_c \int_0^{t_i}  \e^{-\beta (t_i -u)} \D N_2 (u) \right\}\\
&-  \frac{\e^{\beta \tau_i}-1}{\beta} \left\{ \lambda_{1}(0)\e^{-\beta t_i } + \mu (1-\e^{-\beta t_i }) + \alpha_s\int_0^{t_i}  \e^{-\beta (t_i -u)} \D N_1 (u) +  \alpha_c \int_0^{t_i} \e^{-\beta (t_i -u)} \D N_2 (u) \right\}.
\end{align*}
When $\beta$ is fixed, then the term is represented by
$$ \log \lambda_{1}(t_i) - \frac{1-\e^{-\beta \tau_i}}{\beta}\lambda_{1}(t_i) = \log(c_{i,0} + c_{i,1} \mu + c_{i,2} \alpha_s + c_{i,3} \alpha_c) - \frac{\e^{\beta \tau_i}-1}{\beta}(c_{i,0} + c_{i,1} \mu + c_{i,2} \alpha_s + c_{i,3} \alpha_c)$$
for some constants $c_{i,0}, c_{i,1}, c_{i,2},$ and $c_{i,3}$.
By simple calculation, we have the negative semidefinite Hessian matrix of the term with respect to $\mu, \alpha_s, \alpha_c$ is
\begin{align*}
H_{1,i} = 
\frac{1}{\lambda^2_{1}(t_i)}
\begin{bmatrix}
- c_{i,1}^2 & -c_{i,1} c_{i,2} & -c_{i,1} c_{i,3}  \\
- c_{i,1}c_{i,2} & -c_{i,2}^2 & -c_{i,2} c_{i,3}  \\
-c_{i,1} c_{i,3} & -c_{i,2} c_{i,3} & -c_{i,3}^2  \\
\end{bmatrix}.
\end{align*}
Similarly, we define $H_{2,i}$,
and the Hessian matrix of the $\log L(T)$ is $H = \sum_{t_i < T} (H_{1,i} + H_{2,i})$ which is also negative semidefinite and implies the log-likelihood function is conditionally concave when $\beta$ is fixed.

This means that if we compute the log-likelihood for every value of a reasonable set of $\beta$,
(a numerical procedure will perform this task well, because the log-likelihood function is concave for any fixed $\beta$),
and by comparing the computed values, we can find the maximum log-likelihood.
Therefore, we set a possible interval for $\beta$, for example, $\beta \in [1,3]$, and with sufficiently small step size, for example, 0.0001, we can find the estimates which make the log-likelihood close enough to the 
maximum log-likelihood.
The examples with the above simulation set of Table~\ref{Table:simulation} is shown in Figure~\ref{Fig:LLF}.
Although the above method can guarantee finding the maximum value, because it is time-consuming, we generally use the numerical procedure such as the BFGS algorithm based on the newton method to find the maximum likelihood estimates.
Although the proof of the BFGS algorithm's global convergence for the nonconvex function is not yet known, it is also known that global convergence works well in most cases \citep{li2001global}.
For more information about the algorithm and its implementation, consult \cite{broyden1970convergence} and \cite{ nash2014best}.

In this simulation study, two methods show quite close results.
For simulation set 1, 
the estimates computed by fixing $\beta$ are $\mu = 0.0099, \alpha_s = 0.6590, \alpha_c = 0.0.4864, \beta =2.0646$
and the estimates through the BFGS algorithm are $\mu = 0.0099, \alpha_s = 0.6590, \alpha_c = 0.4864, \beta =2.0346$.
For simulation set 2, 
the estimates computed by fixing $\beta$ are $\mu = 0.0502, \alpha_s = 0.6273, \alpha_c = 0.2085, \beta =1.6861$
and the estimates through the BFGS algorithm are $\mu = 0.0502, \alpha_s = 0.6272, \alpha_c = 0.2084, \beta =1.6860$.
In other simulation examples not recorded here, the BFGS algorithm always yields very similar results when compared with the method of fixing $\beta$.
Since the method of fixing $\beta$ is relatively time-consuming, 
by assuming that the BFGS algorithm provides very accurate estimates,
we use the BFGS algorithm in future estimations.

\begin{figure}
\includegraphics[width=0.45\textwidth]{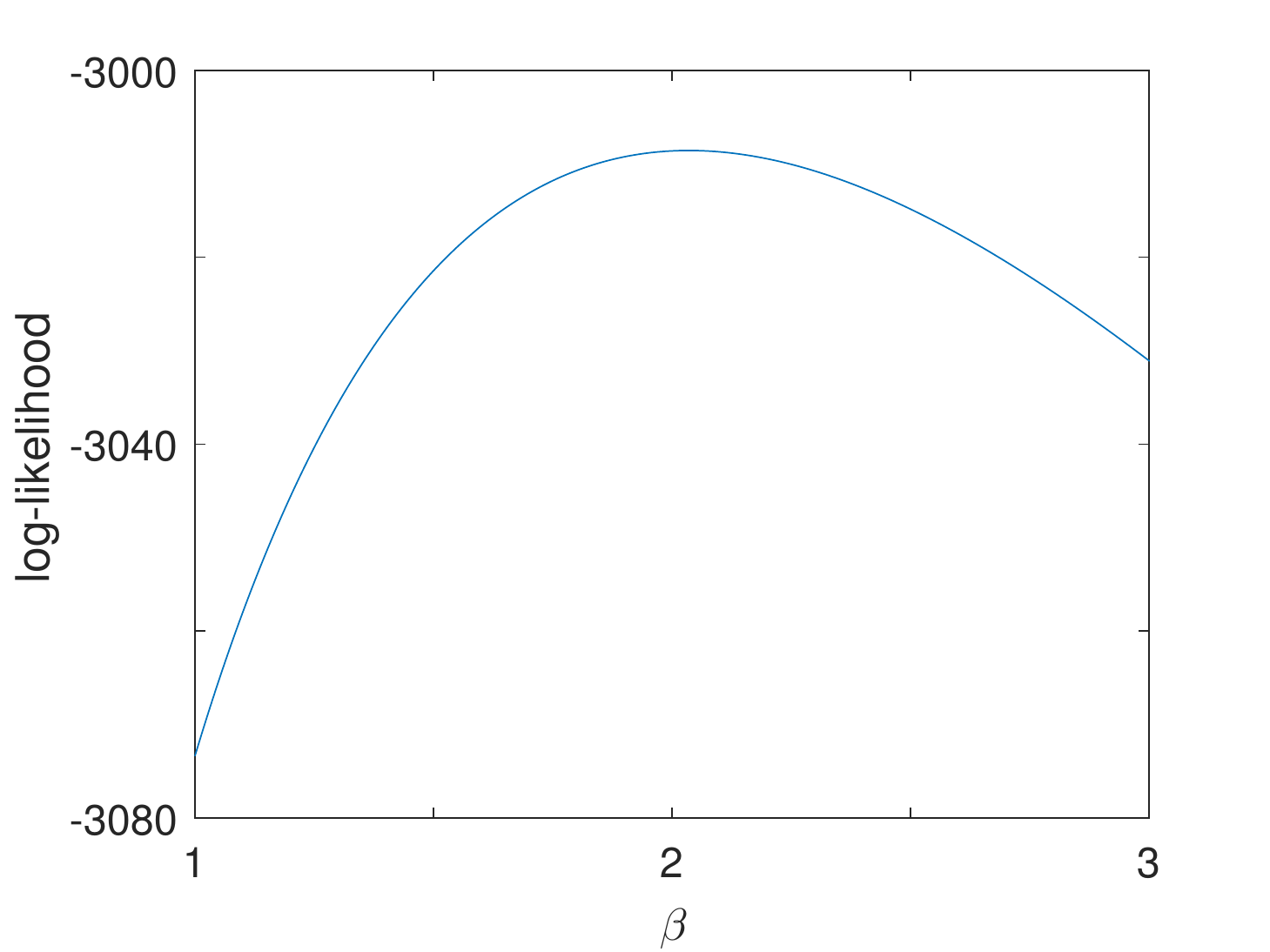}
\includegraphics[width=0.45\textwidth]{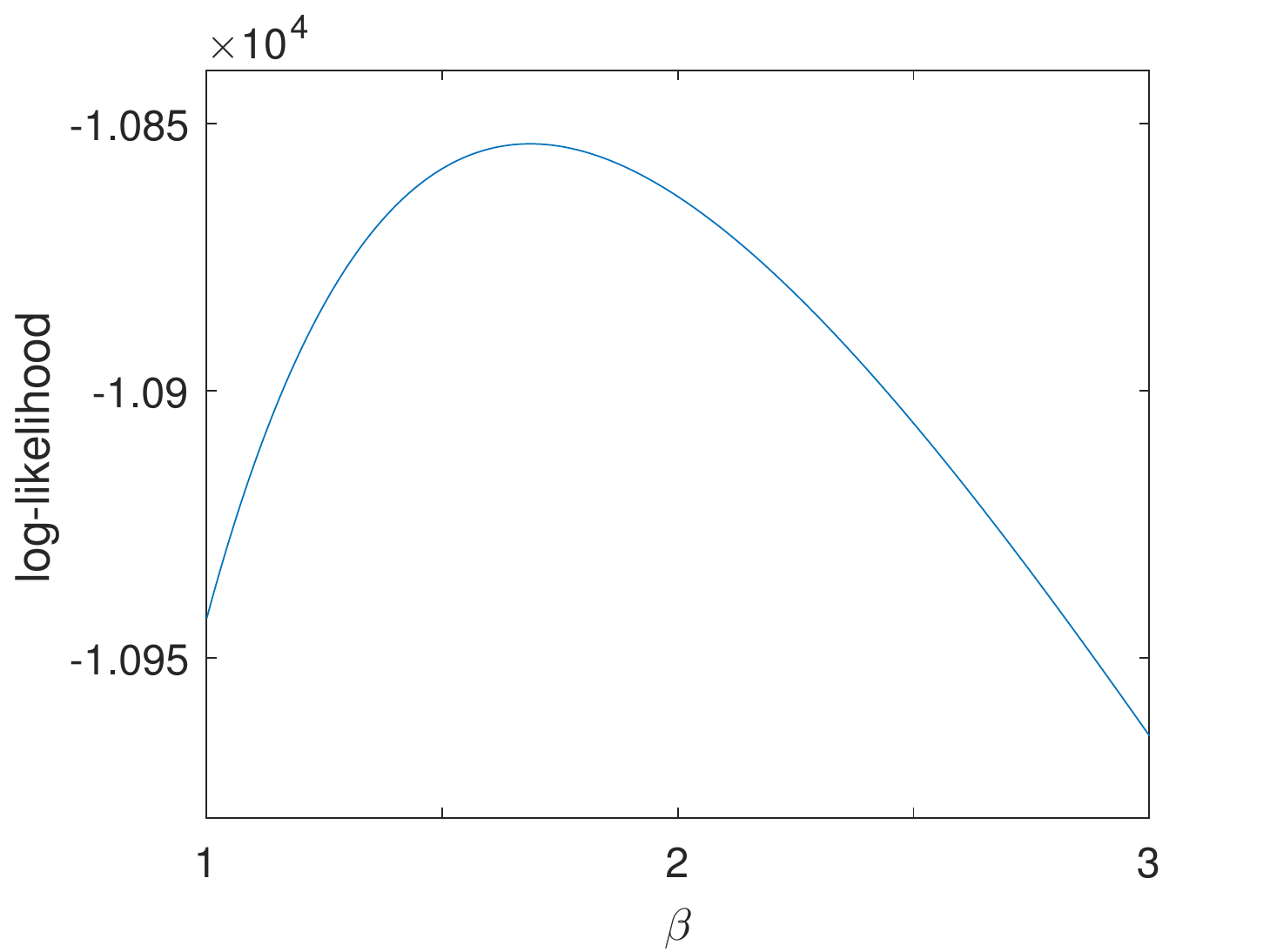}
\caption{Maximum log-likelihood function when $\beta$ is fixed for simulation set 1 (left) and 2 (right)}\label{Fig:LLF}
\end{figure}

\section{Diffusion analogy}\label{Sect:diffusion}

\subsection{Diffusion model}\label{Subsect:diffusion}
This subsection proposes a new diffusion approach for the tick structure.
The diffusion model is analogous to the symmetric Hawkes model and has a similar probabilistic property.

When the price process is represented by the difference of the two Hawkes process, the increment of the price process can be rewritten as
\begin{align*}
\Delta S(t) &= \Delta \{ \delta (N_1(t)-N_2(t)) \}\\
&= \delta (\lambda_1(t) - \lambda_2(t)) \Delta t + \delta\sqrt{\lambda_1(t)}\frac{\Delta N_1(t) - \lambda_1(t) \Delta t}{\sqrt{\lambda_1(t)\Delta t}}\sqrt{\Delta t} - \delta\sqrt{\lambda_2(t)}\frac{\Delta N_2(t) - \lambda_2(t) \Delta t}{\sqrt{\lambda_2(t)\Delta t}}\sqrt{\Delta t}.
\end{align*}
Based on the empirical studies, a sufficient number of price changes were observed during, e.g., one minute, 
and hence the normal approximation to the Poisson distribution
$$ \frac{\Delta N_i(t) - \lambda_i(t) \Delta t}{\sqrt{\lambda_i(t)\Delta t}} \sim N(0, 1)$$
can be considered.
Therefore, it is natural to consider a diffusion analogy to the symmetric Hawkes model such as
$$ \delta \sqrt{\lambda_1(t)}\frac{\D N_1(t) - \lambda_1(t) \D t}{\sqrt{\lambda_1(t)}} - \delta \sqrt{\lambda_2(t)}\frac{\D N_2(t) - \lambda_2(t) \D t}{\sqrt{\lambda_2(t)}} \approx \delta \sqrt{\lambda_1(t)} \D B_1(t) + \delta \sqrt{\lambda_2(t)} \D B_2(t) $$
for some independent Brownian motions $B_1$ and $B_2$.

By the independence, the infinitesimal variance is
$$\Var_t \left( \delta\sqrt{\lambda_1(t)} \D B_1(t) + \delta\sqrt{\lambda_1(t)} \D B_2(t) \right) = \delta^2 (\lambda_1(t) + \lambda_2(t)) \D t$$
which can be written
$$ \delta\sqrt{\lambda_1(t) + \lambda_2(t)}\D W^s_t =  \delta \sqrt{\lambda_1(t)} \D B_1(t) + \delta \sqrt{\lambda_2(t)} \D B_2(t) $$
for some Brownian motion $W^s$.
In the left hand side, $\delta^2 (\lambda_1(t) + \lambda_2(t))$ as the instantaneous variance $V_t$ of the price process.
In addition, by treating $ \delta(\lambda_1(t) - \lambda_2(t))$ as the mean process $n_t$ of the price process, a diffusion analogy of the price process can be derived as follows:
\begin{equation}
\D S_t = n_t \D t + \sqrt{V_t}\D W^s_t. \label{Eq:S_diff}
\end{equation}

Now the diffusion analogies of $n_t$ and $V_t$ are constructed.
This is because, by the definition of $\lambda_i(t)$, 
\begin{align*}
\D \{ \delta(\lambda_1(t) - \lambda_2(t)) \} ={}& (\alpha_s - \alpha_c - \beta)\delta(\lambda_1(t) - \lambda_2(t))\D t \\
&+ (\alpha_s-\alpha_c) \left\{ \delta\sqrt{\lambda_1(t)}\frac{\D N_1(t) - \lambda_1(t) \D t}{\sqrt{\lambda_1(t)}} - \delta\sqrt{\lambda_2(t)}\frac{\D N_2(t) - \lambda_2(t) \D t}{\sqrt{\lambda_2(t)}} \right\},
\end{align*}
let
\begin{align*}
\D n_t &= (a_s - a_c - b)n_t\D t + (a_s-a_c)\sqrt{V_t} \D W^s_t\\
&=: -\kappa_1 n_t \D t + \phi \sqrt{V_t} \D W^s_t
\end{align*}
where $a_s, a_c, b$ are the diffusion counterparts of $\alpha_s, \alpha_c, \beta$, respectively.

The micro structure of the price dynamics are slightly different from the macro dynamics
as the non-zero drift term in the price process is observed.
The drift term in the micro dynamics is also called the microstructure noise and related to the mutually excited feature in the Hawkes model.

In addition, because
\begin{align*}
\D\{\delta^2(\lambda_1(t) + \lambda_2(t))\} ={}& \{ 2\beta \mu \delta^2 + (\alpha_s + \alpha_c - \beta)\delta^2(\lambda_1(t) + \lambda_2(t)) \} \D t \\
&+ \delta (\alpha_s+\alpha_c)  \left\{ \delta\sqrt{\lambda_1(t)}\frac{\D N_1(t) - \lambda_1(t) \D t}{\sqrt{\lambda_1(t)}} + \delta \sqrt{\lambda_2(t)}\frac{\D N_2(t) - \lambda_2(t) \D t}{\sqrt{\lambda_2(t)}} \right\},
\end{align*}
under the similar argument of the drift, such as 
\begin{align*}
V_t &= \delta^2 (\lambda_1(t) + \lambda_2(t)), \\
\sqrt{V_t} \D W^v_t &= \delta \sqrt{\lambda_1(t)}\D B_1 (t) - \delta \sqrt{\lambda_2(t)}\D B_2 (t)\\
&= \delta\sqrt{\lambda_1(t)}\frac{\D N_1(t) - \lambda_1(t) \D t}{\sqrt{\lambda_1(t)}} + \delta \sqrt{\lambda_2(t)}\frac{\D N_2(t) - \lambda_2(t) \D t}{\sqrt{\lambda_2(t)}},
\end{align*}
the familiar square root variance process as introduced in \cite{Heston1993} is derived:
\begin{align*}
\D V_t &= (b-a_s-a_c)\left\{\frac{2bm\delta^2}{b-a_s-a_c} - V_t \right\} \D t + \delta(a_s+a_c)\sqrt{V_t} \D W^v_t\\
&=: \kappa_2 (\theta -V_t) \D t + \gamma \sqrt{V_t} \D W^v_t.
\end{align*}
where $m$ is the diffusion counterpart of $\mu$.

In this reasoning, the correlation $\rho$ that satisfies $ \D [W^s, W^v]_t = \rho_t \D t$ is represented by
$$ \rho_t = \frac{\lambda_1(t) - \lambda_2(t)}{\lambda_1(t) + \lambda_2(t)}.$$
In addition, if there is no jump for a sufficiently long interval and hence $\lambda_1(t) \rightarrow \lambda_{\infty}$ and $\lambda_2(t) \rightarrow \lambda_{\infty}$, then $\rho_t \rightarrow 0 $.

Even though $\rho$ is represented by $\lambda$s or converges to zero,
we consider that this constraint is better to be relaxed for flexibility of the model.
Note that the above derivation is not an exact mathematical justification, but rather to provide an intuition to construct a diffusion model for the micro structure of price dynamics.
For example, if necessary, the asymmetry in the price dynamics is simply introduced by a constant leverage parameter $\rho$ such that 
$$ \D [W^s, W^v]_t = \rho \D t$$
as in the typical macro level price dynamics modeling.

Overall, the price, mean and variance process are as follows:
\begin{align*}
\D S_t &= n_t \D t + \sqrt{V_t} \D W^s_t,\\
\D n_t &= -\kappa_1 n_t \D t + \phi \sqrt{V_t} \D W^s_t,\\
\D V_t &= \kappa_2 (\theta -V_t) \D t + \gamma \sqrt{V_t} \D W^v_t, \quad \D [W^s, W^v]_t = \rho \D t
\end{align*}
with the parameter relations:
\begin{align*}
\kappa_1 &= b - a_s + a_c \\
\kappa_2 &= b - a_s - a_c \\
\theta &= \frac{2bm\delta^2}{b-a_s-a_c} \\
\gamma &= \delta(a_s+a_c) \\
\phi &= a_s - a_c  = \frac{\gamma}{\delta} - \kappa_1 + \kappa_2.
\end{align*}
Note that with this analogy, $\kappa_1$ corresponds to $-\xi_1$ in the symmetric Hawkes model and $\kappa_2$ corresponds to $-\xi_2$.
The diffusion model is not an exact mathematical limit of the (symmetric) Hawkes model\
but has a very close distributional property with the Hawkes model.
For recent studies about the limit theorem of the Hawkes process, consult \cite{jaisson2015limit}.

\subsection{Basic property}

The diffusion model has several advantages.
First, by the forward Kolmogorov equation, the joint probability density function $f(s,n,v,t)$ of the diffusion model with $s=S_t, n=n_t, v=V_t$ at time $t$ satisfies the following partial differential equation
\begin{align*}
\frac{\partial f}{\partial t} ={}& -n \frac{\partial f}{\partial s} + \kappa_1 \frac{\partial}{\partial n}nf - \kappa_2 \frac{\partial}{\partial v}(\theta - v) f + \frac{v}{2}\frac{\partial^2 f}{\partial s^2} + \frac{\phi^2 v}{2}\frac{\partial^2 f}{\partial n^2} + \frac{\gamma^2}{2}\frac{\partial^2}{\partial v^2}vf \\
&+ \phi v\frac{\partial^2 f}{\partial s \partial n} + \gamma\rho\phi \frac{\partial^2}{\partial n \partial v}vf + \gamma\rho \frac{\partial^2}{\partial s \partial v}vf
\end{align*}
and the density function can be computed via a numerical procedure such as finite difference method.

More precisely,
because the variable $s$ comes up only in the derivative operators in the above equation, to reduce the dimension of the PDE, consider the Fourier transform of $f$ with respect to s.
That is
$$ \hat f(n,v,t;\psi) = \int_{-\infty}^{\infty} f(s,n,v,t)\e^{-\I \psi s}\D s$$
and the Fourier transforms of $\frac{\partial f}{\partial s}$ and $\frac{\partial^2 f}{\partial s^2}$ are $\I \psi \hat f$ and $-\psi^2 \hat f$, respectively.
Thus, by applying the Fourier transform to the PDE, 
\begin{align*}
\frac{\partial \hat f}{\partial t} ={}& (\gamma\rho\phi + \kappa_1 n + \I \phi \psi v) \frac{\partial \hat f}{\partial n} + \frac{\phi^2 v}{2}\frac{\partial^2 \hat f}{\partial n^2} + \left\{-\kappa_2(\theta-v)+\gamma^2 + \I\gamma\rho\psi v \right\}\frac{\partial \hat f}{\partial v} + \frac{\gamma^2}{2}v\frac{\partial^2 \hat f}{\partial v^2} \\
&+ \gamma\rho\phi v \frac{\partial^2 \hat f}{\partial n \partial v} + \left\{ \I\psi (\gamma \rho - n) - \frac{\psi^2}{2}v + \kappa_1 + \kappa_2 \right\} \hat f.
\end{align*}
The transformed function $\hat f$ can be computed by numerical procedures and
the probability density function of the price is generated by applying inverse Fourier transform to the computed $\hat f$.
The distributions of the diffusion model and the simulated histograms of the corresponding Hawkes process are compared in Figure~\ref{Fig:hist}.
The parameter settings are 
$$\mu = m = 0.09, \alpha_s = a_s = 0.6, \alpha_c = a_c = 0.3, \beta = b = 2.5, \delta = 0.2, S_0 = 1000$$
and the time horizon is 30 seconds.

\begin{figure}
\centering
\includegraphics[width=0.45\textwidth]{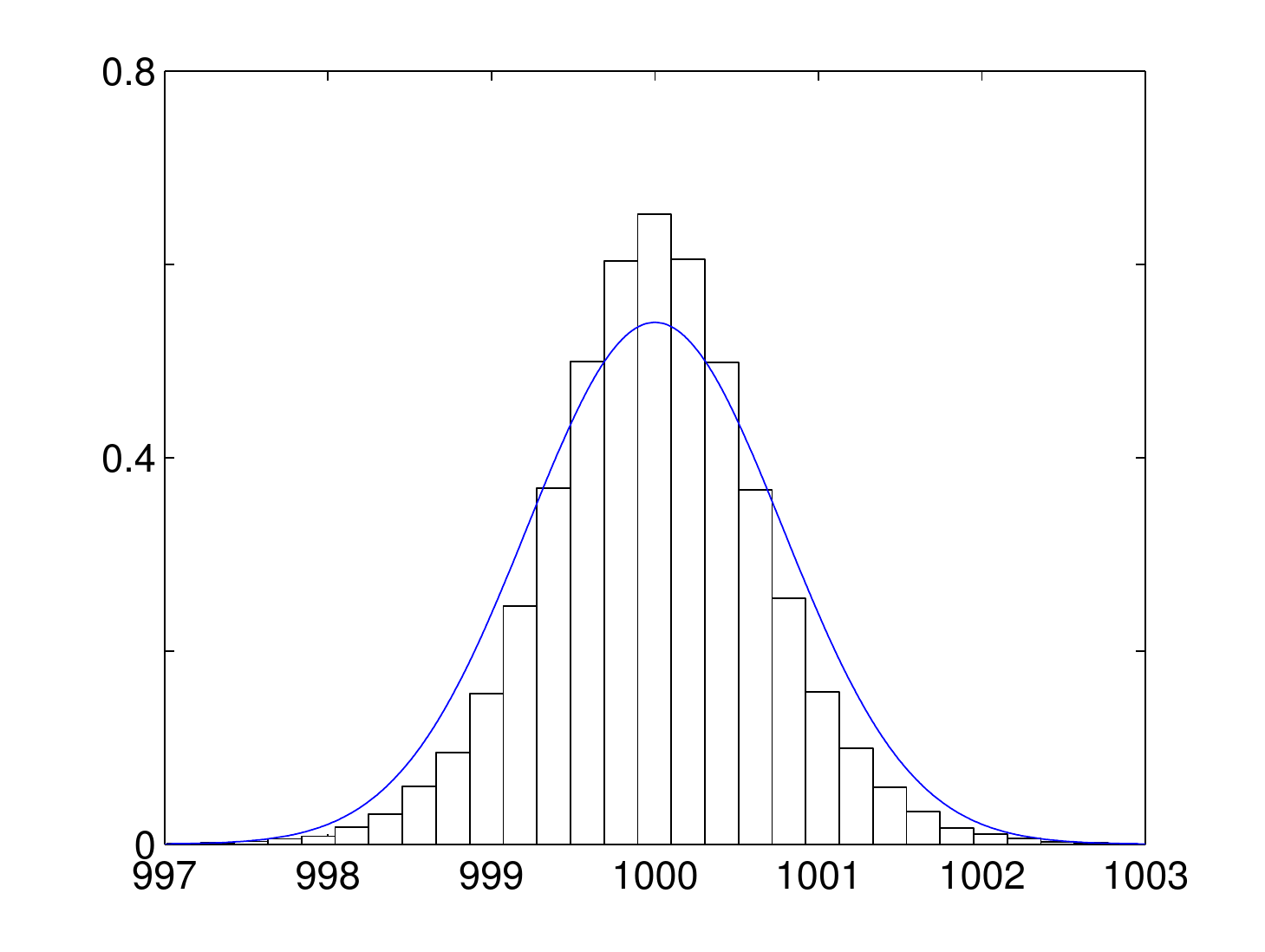}
\caption{Numerically computed probability density function of the price driven by the diffusion model and a histogram of the Hawkes model price by the simulation with 30 seconds (right)}\label{Fig:hist}
\end{figure}

Second, the derivation of the variance formula in the diffusion model is relatively simple compared to the symmetric Hawkes model due to the analytical simplicity of the diffusion processes.
To derive the variance formula of the return, for simplicity, it is assumed that the variance process $V_t$ is in the stationary state at time 0.
This is a similar assumption that the intensity processes in the symmetric Hawkes model are in the stationary state at time 0.
If $V_t$ is in the stationarity state at time 0, then 
$$ V_0 = \theta = \frac{2bm\delta^2}{b-a_s-a_c} $$
and since $\E[V_s] = \theta$,
$$ \int_0^t \E[V_s] \D s = \frac{2bm\delta^2t}{b-a_s-a_c}. $$
Similarly, if the mean process $n_t$ is in the stationarity state at time 0, then $n_0 =0$.

\begin{proposition}\label{Prop:voldiff}
Assume that the price process $S$ follows Eq.~\eqref{Eq:S_diff} and the instantaneous variance $V$ and mean processes $n$ are in the stationary state at time 0.
The variance of the return is
$$\Var\left( \frac{S_t - S_0}{S_0} \right) = \frac{1}{S^2_0} \left\{ \frac{\phi ^2\theta\left( - \e^{-2\kappa_1 t} + 4 \e^{-\kappa_1 t} - 3 +2  \kappa_1 t\right)}{2 \kappa_1^3} + \frac{2\phi \theta \left(\kappa_1 t-1+ \e^{-\kappa_1 t}\right) }{\kappa_1^2} + \theta t\right\}.$$
\end{proposition}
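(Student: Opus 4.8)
The plan is to use the integral representation $S_t - S_0 = \int_0^t n_u \D u + \int_0^t \sqrt{V_u}\,\D W^s_u$ obtained from Eq.~\eqref{Eq:S_diff}, and to reduce the variance to a sum of three expectations. First I would check that the return has zero mean: since $n_0 = 0$ and $\D n_t = -\kappa_1 n_t \D t + \phi\sqrt{V_t}\,\D W^s_t$, taking expectations gives $\frac{\D}{\D t}\E[n_t] = -\kappa_1 \E[n_t]$ with $\E[n_0]=0$, so $\E[n_u]=0$ for all $u$; together with the martingale property of the stochastic integral this yields $\E[S_t - S_0] = 0$. Hence $\Var(S_t - S_0) = \E[(S_t - S_0)^2]$, and expanding the square produces
\begin{align*}
\E[(S_t - S_0)^2] = {}& \E\!\left[\left(\int_0^t n_u \D u\right)^{\!2}\right] + 2\,\E\!\left[\int_0^t n_u \D u \int_0^t \sqrt{V_u}\,\D W^s_u\right] \\
&+ \E\!\left[\left(\int_0^t \sqrt{V_u}\,\D W^s_u\right)^{\!2}\right].
\end{align*}

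The two main ingredients are the stationarity identity $\E[V_u] = \theta$ and the explicit Ornstein--Uhlenbeck solution $n_u = \phi\int_0^u \e^{-\kappa_1(u-r)}\sqrt{V_r}\,\D W^s_r$ (using $n_0=0$). For the stationarity identity I would note that the CIR drift $\kappa_2(\theta - V_t)$ gives $\frac{\D}{\D t}\E[V_t] = \kappa_2(\theta - \E[V_t])$, so $V_0 = \theta$ forces $\E[V_u]=\theta$ throughout. The third term is then immediate from the It\^o isometry: $\E[(\int_0^t \sqrt{V_u}\,\D W^s_u)^2] = \int_0^t \E[V_u]\,\D u = \theta t$, giving the $\theta t$ summand. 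For the cross term I would apply the (polarized) It\^o isometry to the two integrals against the common Brownian motion $W^s$, obtaining $\E[n_u \int_0^t \sqrt{V_{u'}}\,\D W^s_{u'}] = \phi\int_0^u \e^{-\kappa_1(u-r)}\E[V_r]\,\D r = \frac{\phi\theta}{\kappa_1}(1 - \e^{-\kappa_1 u})$; integrating in $u$ over $[0,t]$ produces the middle summand $\frac{2\phi\theta(\kappa_1 t - 1 + \e^{-\kappa_1 t})}{\kappa_1^2}$.

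The first term is the laborious one and is where I expect the real work to lie. Using the OU representation and the isometry, for $u \le v$ one gets $\E[n_u n_v] = \frac{\phi^2\theta}{2\kappa_1}(\e^{-\kappa_1(v-u)} - \e^{-\kappa_1(u+v)})$, so that $\E[n_u n_v] = \frac{\phi^2\theta}{2\kappa_1}(\e^{-\kappa_1|u-v|} - \e^{-\kappa_1(u+v)})$ in general. The first term of the variance is then the double integral $\int_0^t\int_0^t \E[n_u n_v]\,\D u\,\D v$, which splits into $\int_0^t\int_0^t \e^{-\kappa_1|u-v|}\D u\,\D v = \frac{2}{\kappa_1^2}(\kappa_1 t - 1 + \e^{-\kappa_1 t})$ and $\int_0^t\int_0^t \e^{-\kappa_1(u+v)}\D u\,\D v = \frac{(1-\e^{-\kappa_1 t})^2}{\kappa_1^2}$; combining these and expanding $(1-\e^{-\kappa_1 t})^2$ yields $\frac{\phi^2\theta(-\e^{-2\kappa_1 t} + 4\e^{-\kappa_1 t} - 3 + 2\kappa_1 t)}{2\kappa_1^3}$. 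Adding the three summands and dividing by $S_0^2$ gives the claimed formula. The main obstacle is the bookkeeping in the $|u-v|$ double integral (splitting the square into $u<v$ and $u>v$ and correctly recombining the exponentials); I would also stress the structural point that the leverage parameter $\rho$ never enters, because every stochastic integral here is taken against $W^s$ and only $\E[V_u]=\theta$, which is independent of $\rho$, is needed.
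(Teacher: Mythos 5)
Your proof is correct and all of its intermediate formulas check out, but it takes a genuinely different computational route from the paper's. The paper starts from the identical three-term decomposition and the same isometry step giving $\theta t$, but it evaluates the two nontrivial terms $x(t)=\E[(\int_0^t n_u\,\D u)^2]$ and $y(t)=\E[\int_0^t n_u\,\D u\int_0^t\sqrt{V_u}\,\D W^s_u]$ by deriving linear ODEs directly from the SDE for $n$ --- namely $y'(t)=-\kappa_1 y(t)+\phi\theta t$ and $x'(t)=-2\kappa_1 x(t)+2\phi y(t)$ --- and solving them. You instead use the variation-of-constants representation $n_u=\phi\int_0^u \e^{-\kappa_1(u-r)}\sqrt{V_r}\,\D W^s_r$ together with the polarized It\^o isometry to obtain the autocovariance $\E[n_u n_v]=\frac{\phi^2\theta}{2\kappa_1}\bigl(\e^{-\kappa_1|u-v|}-\e^{-\kappa_1(u+v)}\bigr)$, reducing everything to elementary double integrals; your evaluations agree exactly with the paper's $x(t)$ and $y(t)$. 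Your route yields the covariance structure of the drift process as a by-product, makes completely transparent why the leverage parameter $\rho$ never enters (the paper leaves this implicit), and closes a small gap: you verify $\E[n_u]=0$, hence $\E[S_t-S_0]=0$, which is what licenses identifying the second moment with the variance --- the paper computes $\E[(S_t-S_0)^2]$ and calls it the variance without comment. What the paper's ODE route buys is uniformity: it is the same moment-ODE technique used for the Hawkes variance in Appendix D, where no closed-form kernel representation is available, and it avoids the $|u-v|$ case-splitting that you correctly flag as the main bookkeeping burden of your approach. Both arguments rest on the same two inputs, $n_0=0$ and $\E[V_u]\equiv\theta$ for all $u$.
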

\begin{proof}
See \ref{append:proof1}.
\end{proof}
If $t$ is sufficiently large, then the variance is approximated by
$$\Var\left( \frac{S_t - S_0}{S_0} \right) \approx \frac{1}{S^2_0} \left( \frac{\phi ^2\theta   t}{ \kappa_1^2} + \frac{2\phi \theta t}{\kappa_1} + \theta t\right) = \frac{b^2 \theta t}{S_0^2 \kappa_1^2}$$
which is analogous to Remark~\ref{Rmk:reparm}.
When $\phi = 0$, i.e., the drift of the price process is zero, the variance of the return is simply $\theta t/S_0^2$.
The volatility of the diffusion model computed by Proposition~\ref{Prop:voldiff} and the volatility of the symmetric Hawkes model computed by Proposition~\ref{Prop:vol} were compared with the following parameter settings
$$\alpha_s = a_s = 1.2, \alpha_c=a_c=0.3, \beta=b=2.2, \mu=m=0.01, \delta/S_0 = 0.002$$
in Figure~\ref{Fig:volcomp}.
The volatilities were not annualized to show the increasing shape with time.
The two volatilities are quite close to each other.

\begin{figure}
        \centering
        \begin{subfigure}[b]{0.45\textwidth}
                \includegraphics[width=\textwidth]{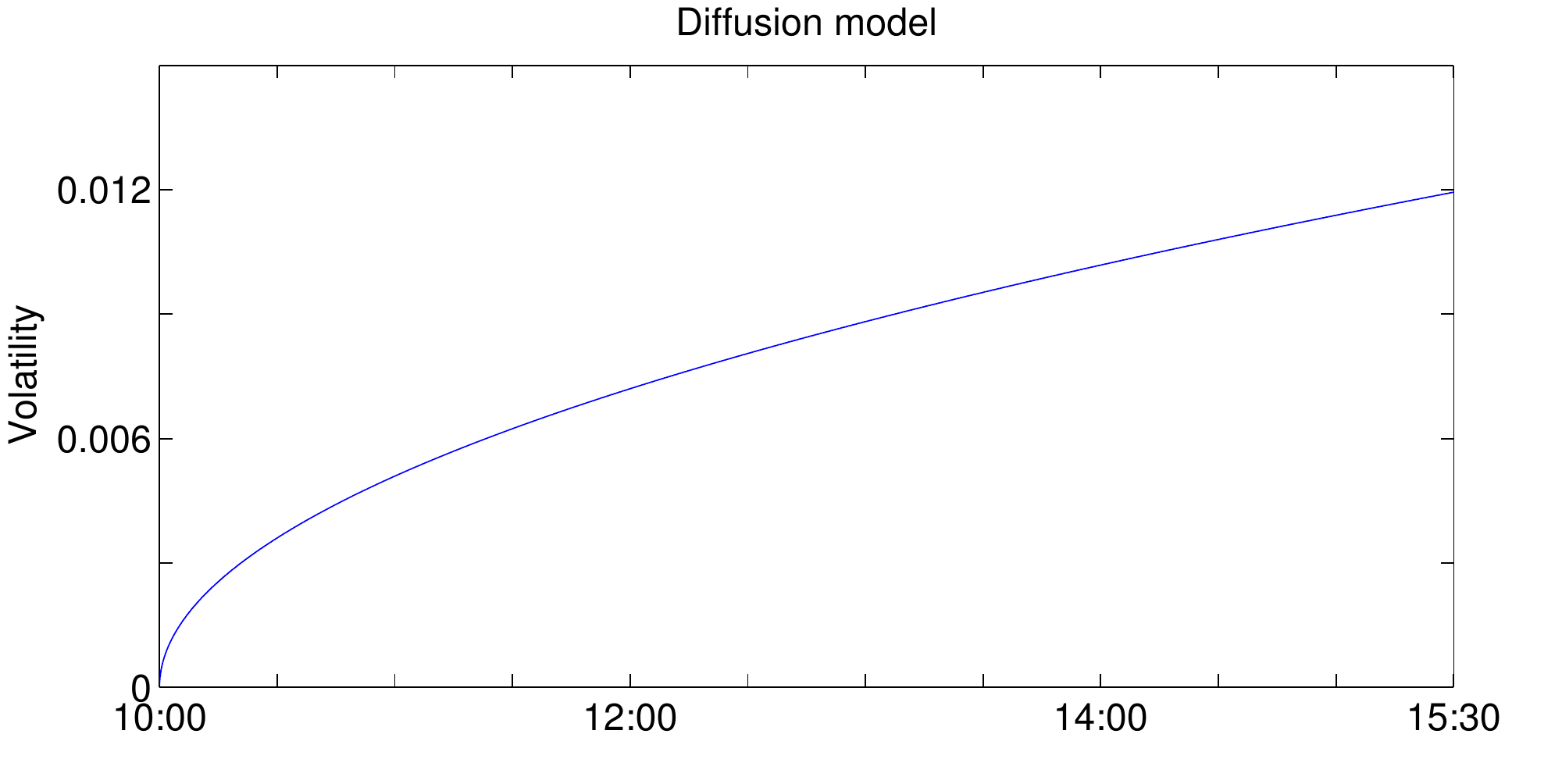}
                \caption{Volatility by the diffusion model}
                \label{Fig:vol_Diffusion}
        \end{subfigure}
        \centering
        \begin{subfigure}[b]{0.45\textwidth}
                \includegraphics[width=\textwidth]{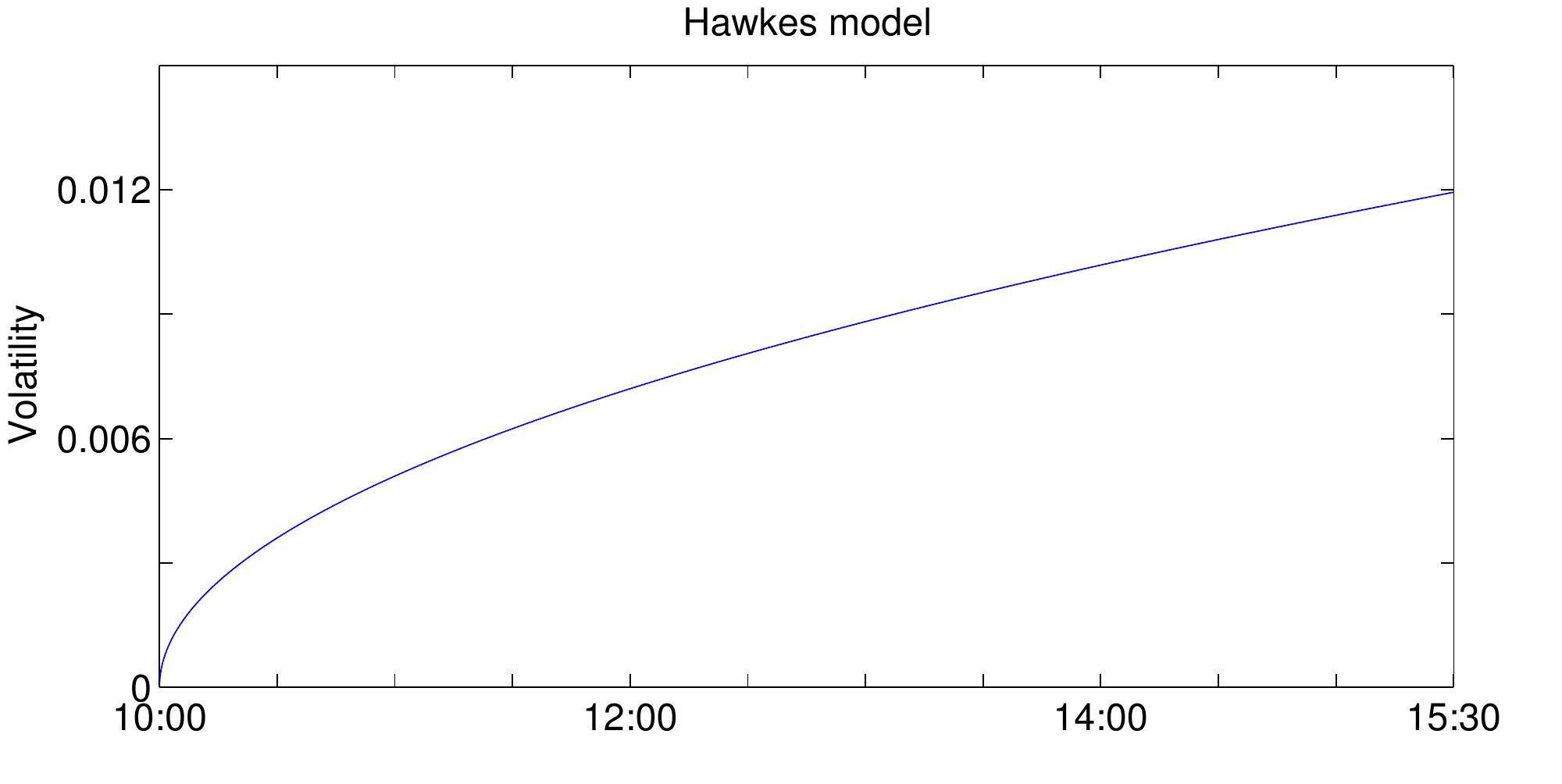}
                \caption{Volatility by the symmetric Hawkes model}
                \label{Fig:vol_Hawkes}
        \end{subfigure}
        \caption{The comparison between the volatility computed by the diffusion model and the symmetric Hawkes model}\label{Fig:volcomp}
\end{figure}

Figure~\ref{fig3_vol_surface_diff}  shows the annualized volatility surface as a function of $\kappa_1$ and $\phi$ with a fixed $\theta = 4\times10^{-9}$.
With a fixed $\kappa_1$, with increasing $\phi = a_s - a_c$, (when the self-excited coefficient $a_s$ is larger than the mutually excited coefficient $a_c$)
the volatility increases.
This result is expected because the self-excited coefficient is related to trade clustering.

The increasing rate of the volatility with respect to $\phi$ depends on the level of $\kappa_1$.
Because $\kappa_1 = b -\phi$, with a fixed $\phi$, a large $\kappa_1$ implies a large $b$ and a short persistence.
In addition, a small $\kappa_1$ implies a small $b$ and a long persistence.
Therefore, when $\phi<0$, i.e., the mutually excited effect is larger than the self excited effect, 
a longer persistence (smaller $\kappa_1$) of the mutually excited effect implies a smaller volatility and a shorter persistence implies a larger volatility.
On the other hand, when $\phi>0$, i.e., the self excited effect is larger than the mutually excited effect,
a longer persistence of the self excited effect implies a larger volatility and a shorter persistence implies a smaller volatility.
This contrast is visualized in Figure~\ref{fig3_vol_surface_diff} with the different sign of the slope of the volatility with respect to $\kappa_1$ depending on whether $\phi>0$.

\begin{figure}
\centering
\includegraphics[width=0.5\textwidth]{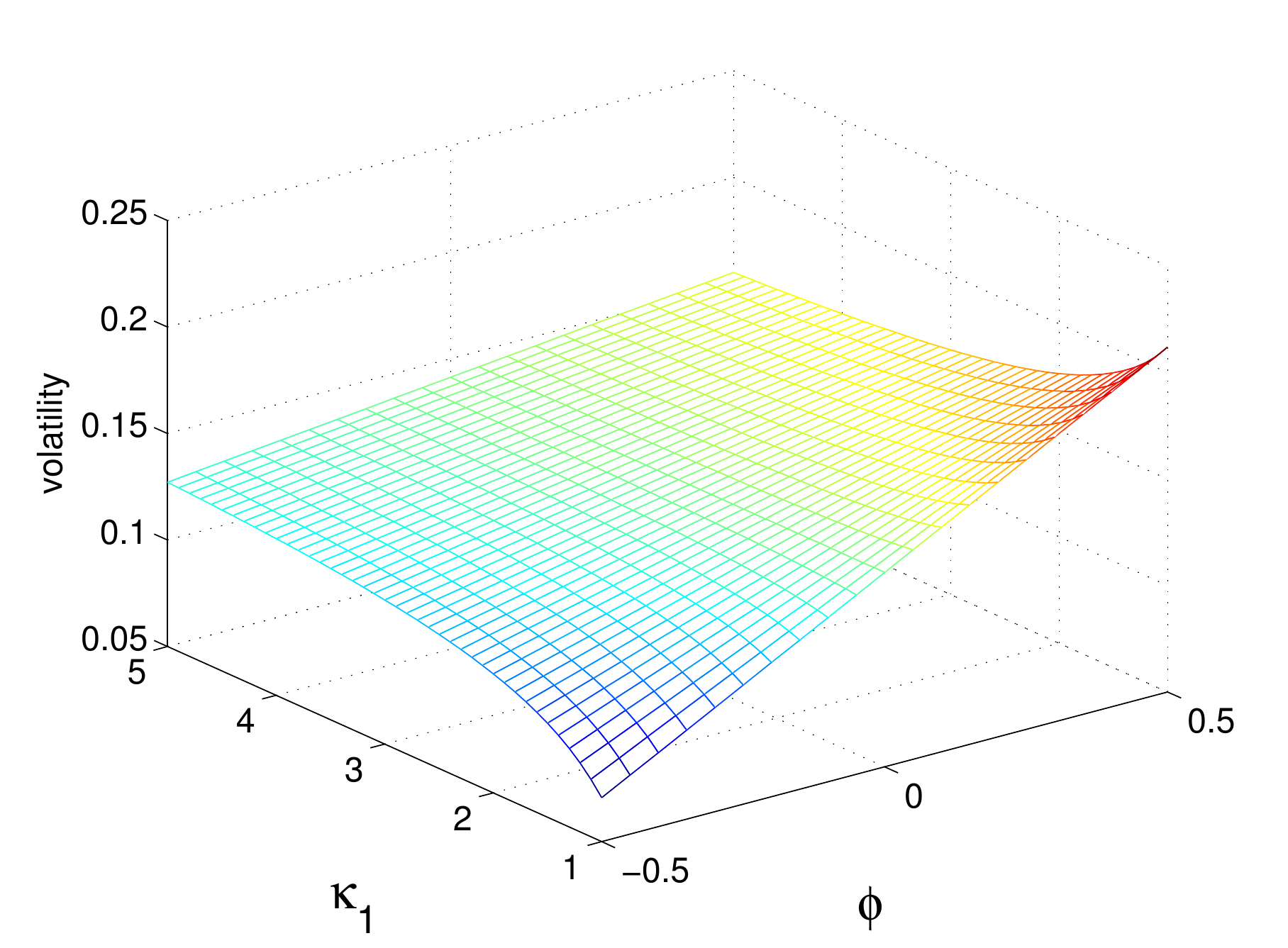}
\caption{Annualized volatility surface as a function of $\kappa_1$ and $\phi$}\label{fig3_vol_surface_diff}
\end{figure}

Because there are many studies focused on the behavior of the realized variance in the presence of the market microstructure noise,
this study also examined the realized variance under the diffusion model.
Consider a discretized time interval $[0,T]$ with time step $\tau$.
For convenience, let $T/\tau$ be an integer.
The realized variance is defined by the finite sum approximation to the quadratic variation of the return over a time interval.
The signature plot over a fixed interval is the realized variance over the interval defined as a function of $\tau$:
$$ \widehat C(\tau) = \frac{1}{T} \sum_{n=0}^{T/\tau} (R_{(n+1)\tau} - R_{n\tau})^2 $$
where $R = (S_t - S_0)/S_0$ is the return process.
(Depending on the context,  $R$ could be the log-return process.)

The above formula is indeed the definition of the realized variance, which is the consistent estimator of the true variance of the return in the absence of microstructure noise by the semimartingale theory.
On the other hand, empirical studies showed that the realized variance depends on the size of the partition $\tau$ due to the microstructure noise or clustering property \citep{Hansen, Fonseca2014Self}.
For the diffusion model, under the stationarity assumption of the time series of the squared return, $(R_{(n+1)\tau} - R_{n\tau})^2$, and 
the mean signature plot is
\begin{align*}
C(\tau) &= \E[ \widehat C(\tau)] = \frac{1}{\tau} \E [ (R_{(n+1)\tau} - R_{n\tau})^2 ] \\
&= \frac{1}{\tau S^2_0} \left\{ \frac{\phi ^2\theta\left( - \e^{-2\kappa_1 \tau} + 4 \e^{-\kappa_1 \tau} - 3 +2  \kappa_1 \tau\right)}{2 \kappa_1^3} + \frac{2\phi \theta \left(\kappa_1 \tau-1+ \e^{-\kappa_1 \tau}\right) }{\kappa_1^2} + \theta \tau\right\}.
\end{align*}

Figure~\ref{Fig:signature} shows the mean signature plots with various $\phi$ in the left and $\kappa_1$ in the right.
For parameter settings, $S_0 = 1, \theta = 2\times10^{-8}$ and $\kappa_1 = 0.5$ in the left and $\phi = -0.3$ in the right.
With negative values of $\phi$, implying $a_s < a_c$ and a more pronounced self-excited effect, the mean signature plot increases as $\tau$ approaches zero.
On the other hand, with positive $\phi$, implying $a_s < a_c$ and a more pronounced self-excited effect, the mean signature plot decreases as $\tau$ approaches zero.
In both cases, when $\tau$ is too small, there is bias between the realized variance and the true variance,
which is in contrast to the traditional understanding in statistics that a more exact result is obtained with a large sample size.
In addition, with a sufficiently large $\tau$, the expected realized variances converge.

\begin{figure}
\centering
\includegraphics[width=0.45\textwidth]{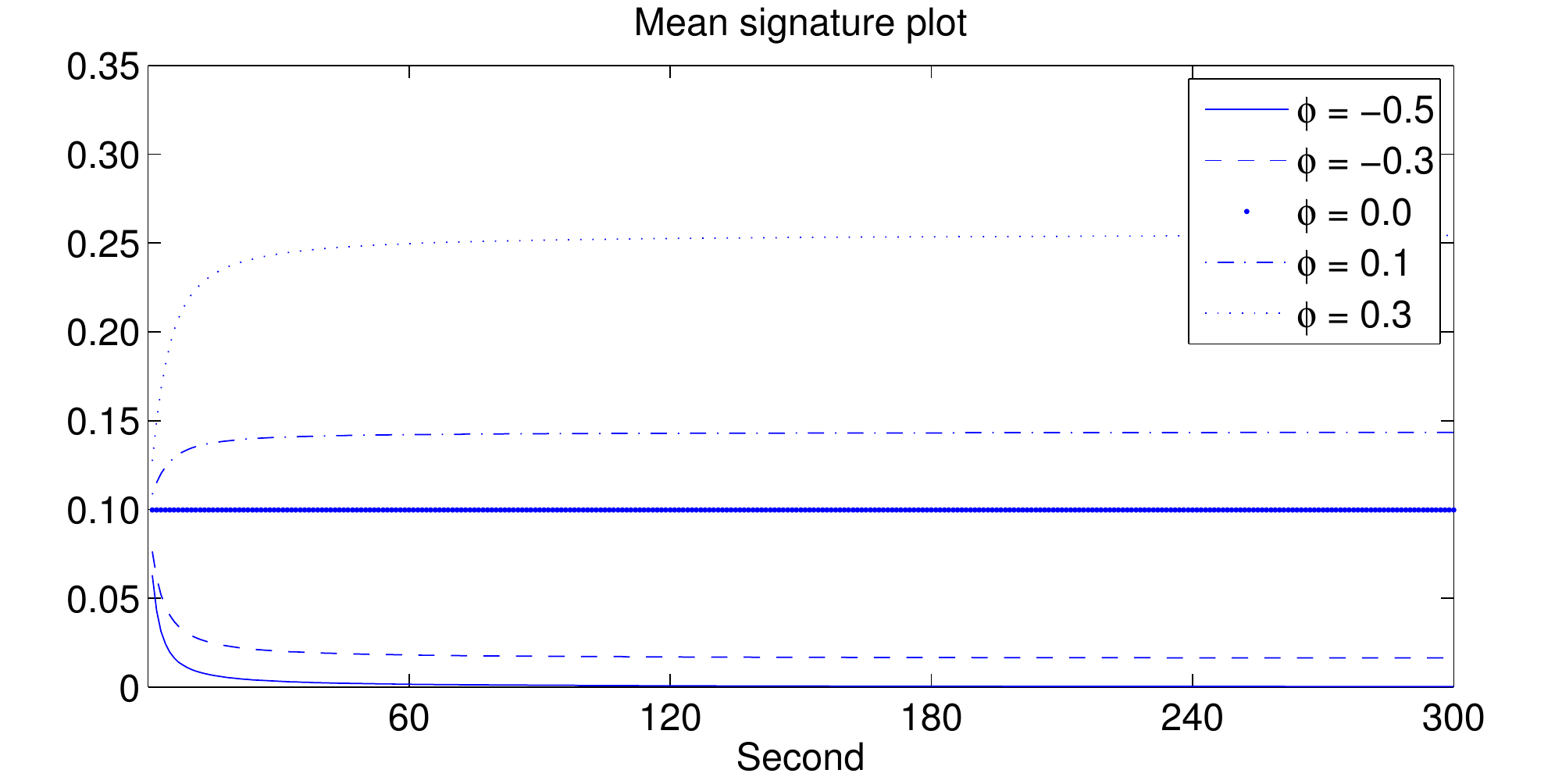}
\includegraphics[width=0.45\textwidth]{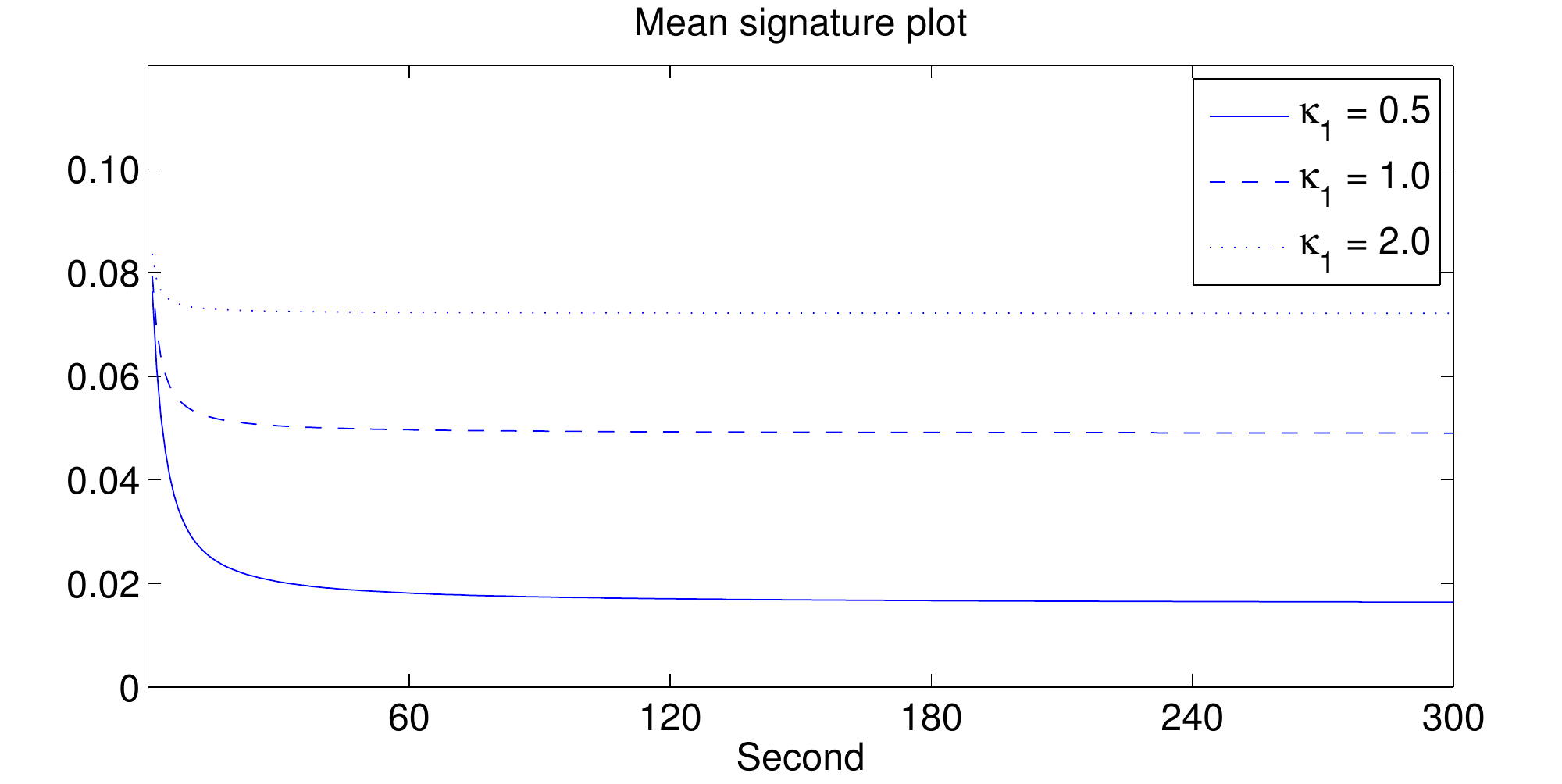}
\caption{Mean signature plot with fixed $\kappa_1=0.5$ with various $\phi$ (left) and fixed $\phi=-0.3$ and various $\kappa_1$ (right)}\label{Fig:signature}
\end{figure}

Third, as mentioned before, the asymmetry in the price distribution can be introduced easily 
with the leverage parameter $\rho$.
This method is a natural extension of the method used to introduce asymmetry in a macro level price dynamics.
The asymmetry in the Hawkes model is an ongoing research topic, for example, consult \cite{omar2016microstructural}.
In our notation and setting, the asymmetric Hawkes model in \cite{omar2016microstructural} can be regarded as
the Hawkes model of Eqs~\eqref{Eq:lambda1}~and~\eqref{Eq:lambda2} with
$$ \alpha_{12} = \eta \alpha_{21}, \quad \alpha_{22} = \alpha_{11} + (\eta-1) \alpha_{21} $$ 
where $\eta$ is a newly introduced parameter.
It is believed that there are many possible ways to incorporate asymmetry into the Hawkes model.

In general, estimating $\rho$ in the diffusion models is not trivial \citep{ait2013leverage}.
One method for estimating $\rho$ is to use the method of moment as in \cite{Lee2016}.
Let $[X,Y]$ denote the quadratic covariation process between the processes $X$ and $Y$, i.e.,
\begin{align*}
[X,Y]_t &= X_t Y_t - \int_0^t X_{s-} \D Y_s - \int_0^t Y_{s-} \D X_s\\
&= X_0 Y_0 + \lim_{||\pi_n|| \rightarrow 0} \sum_i (X_{i+1} - X_{i})(Y_{i+1} - Y_{i})
\end{align*}
for a sequence of random partitions $\pi_n$ with a limit in probability.
The third moment variation of the return $[R^2,R]_t$ introduced by \cite{ChoeLee} is a useful quantity to measure the skewness of the return distribution.
In addition, the tractability of the diffusion process enable us to easily derive the following formula. 

\begin{proposition}\label{Prop:tmv}
Under the stationarity condition of the variance process with time 0, the following moment condition can be derived
\begin{align*}
\E[ [R^2,R]_t] ={}& \rho K
\end{align*}
where
\begin{align*}
K ={}& \frac{1}{S_0^3} \left[ \frac{2\gamma \theta}{\kappa_2^2}\left(  \kappa_2 t - 1 + \e^{-\kappa_2 t} \right) + \frac{2\gamma\phi\theta}{\kappa^3}
\left( \frac{\kappa_1^2}{2} t^2 - \kappa_1 t + 1 - \e^{-\kappa_1 t} \right) \right.\\
&- \frac{ 2\gamma\theta\phi }{\kappa_1^2 \kappa_2 (\kappa_1 +\kappa_2)}\left\{ (-\kappa_1^2-\kappa_2^2-\kappa_1\kappa_2)t + \frac{1}{2} (\kappa^2\kappa_2 + \kappa_1\kappa_2^2)t^2 \right.\\
&\left. \left. - \frac{  \kappa_2^2+\kappa_1\kappa_2}{\kappa_1} (\e^{-\kappa_1 t}-1) - \frac{\kappa^2+\kappa_1\kappa_2}{\kappa_2} (\e^{-\kappa_2 t}-1) + \frac{\kappa_1\kappa_2}{\kappa_1+\kappa_2}(\e^{-(\kappa_1+\kappa_2) t}-1)\right\}\right].
\end{align*}
\end{proposition}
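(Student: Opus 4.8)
The plan is to reduce the covariation $[R^2,R]_t$ to an ordinary time integral of a single cross-moment, and then to evaluate that cross-moment through a small closed system of linear ODEs for the relevant second moments of $(S_t-S_0,\,n_t,\,V_t)$.

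First I would apply It\^o's formula to the return $R_t=(S_t-S_0)/S_0$. Since $\D R_t=(n_t/S_0)\D t+(\sqrt{V_t}/S_0)\D W^s_t$, the continuous martingale part of $R^2$ is $2R_t(\sqrt{V_t}/S_0)\D W^s_t$, and the definition of $[\cdot,\cdot]$ given in the excerpt retains, for continuous semimartingales, only the covariation of the martingale parts (the finite-variation pieces cancel in the integration-by-parts formula). This gives
\[
[R^2,R]_t=\frac{2}{S_0^2}\int_0^t R_s V_s\,\D s .
\]
Taking expectations and writing $R_s=(S_s-S_0)/S_0$ reduces the whole problem to the single function $p(s):=\E[(S_s-S_0)V_s]$, namely
\[
\E[[R^2,R]_t]=\frac{2}{S_0^3}\int_0^t p(s)\,\D s .
\]

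Next I would obtain $p$ from a closed system. Applying the It\^o product rule to $(S_s-S_0)V_s$ and to the auxiliary product $n_s V_s$, and taking expectations so that all $\D W$-terms drop, the cross-covariations $\D[S,V]_s=\gamma\rho V_s\,\D s$ and $\D[n,V]_s=\phi\gamma\rho V_s\,\D s$ survive. Using the stationarity facts $\E[V_s]=\theta$ and $\E[n_s]=0$ (hence $\E[S_s-S_0]=0$), this produces a triangular linear system of the form
\begin{align*}
q'(s) &= -(\kappa_1+\kappa_2)\,q(s)+\phi\gamma\rho\theta,\\
p'(s) &= -\kappa_2\,p(s)+q(s)+\gamma\rho\theta,
\end{align*}
with $q(s):=\E[n_s V_s]$ and $p(0)=q(0)=0$. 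This is exactly the step at which $\rho$ enters and then factors out of everything: both source terms carry a single power of $\rho$ (equivalently, decomposing $W^s=\rho W^v+\sqrt{1-\rho^2}\,W^\perp$ and noting that $V$ is driven by $W^v$ alone annihilates every $W^\perp$ contribution against $V_s$), so $p=\rho\tilde p$ and the final answer is $\rho$ times a $\rho$-free quantity, consistent with the claimed form $\rho K$.

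Finally I would solve the system explicitly — first the decoupled equation for $q$, then $p$ by variation of constants — and integrate $p$ over $[0,t]$; gathering the resulting exponential and polynomial terms yields the closed-form $K$. The main obstacle is not conceptual but the bookkeeping: correctly propagating the drift–variance coupling carried by $\phi$ through the two nested integrations, and verifying that the pieces not proportional to $\rho$ cancel so that the clean identity $\E[[R^2,R]_t]=\rho K$ emerges. A secondary technical point worth justifying is the interchange of expectation with the stochastic and time integrals (Fubini together with the martingale property of the It\^o integrals), which is licensed here by the integrability of the square-root diffusion under the stationarity assumption on $V$.
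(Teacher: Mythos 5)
Your proposal is correct, and it takes a genuinely different --- and cleaner --- route than the paper's own proof. The shared step is the reduction $\D[R^2,R]_t = 2R_t\,\D[R]_t = \tfrac{2}{S_0^3}(S_t-S_0)V_t\,\D t$, after which everything hinges on the cross-moment $p(s)=\E[(S_s-S_0)V_s]$. The paper gets at this by expanding $S_tV_t$ as a product of sums of integrals and solving ODEs for three double-integral expectations, $w(t)=\E\left[\int_0^t V_s\D s\int_0^t \sqrt{V_s}\D W^s_s\right]$, $q(t)=\E\left[\int_0^t n_s\D s\int_0^t \sqrt{V_s}\D W^v_s\right]$ and $z(t)=\E\left[\int_0^t n_s\D s\int_0^t V_s\D s\right]$, the last of which has the first two as forcing terms. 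You instead close a triangular two-dimensional linear system for the instantaneous moments $q(s)=\E[n_sV_s]$ and $p(s)$ via It\^o's product rule; your equations $q'=-(\kappa_1+\kappa_2)q+\phi\gamma\rho\theta$ and $p'=-\kappa_2 p+q+\gamma\rho\theta$ with $p(0)=q(0)=0$ are exactly right, and they make the factorization by $\rho$ immediate, since every source term carries $\rho$ and the system is linear with zero initial data. What your route buys: fewer and simpler ODEs, no iterated-integral bookkeeping, and a more compact answer --- your $\int_0^t p(s)\D s$ contains only $t$, $\e^{-\kappa_2 t}$ and $\e^{-(\kappa_1+\kappa_2)t}$ terms, whereas the paper's stated $K$ (in which $\kappa$ is a typo for $\kappa_1$) also carries $t^2$ and $\e^{-\kappa_1 t}$ terms. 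Those extra terms cancel identically: the two $t^2$ coefficients are $+\gamma\phi\theta/\kappa_1$ and $-\gamma\phi\theta/\kappa_1$, and the two $\e^{-\kappa_1 t}$ coefficients are $-2\gamma\phi\theta/\kappa_1^3$ and $+2\gamma\phi\theta/\kappa_1^3$, so your closed form equals the published $K$; carrying out this cancellation check is the one piece of bookkeeping you should add to reconcile your expression with the statement as printed. Two minor remarks: in your scheme there are no ``pieces not proportional to $\rho$'' left to cancel --- proportionality to $\rho$ is automatic --- and the only genuine technical caveat is the one you already flag, namely that stochastic integrals such as $\int_0^t V_s^{3/2}\D W^s_s$ have zero mean, a point the paper's proof also assumes silently.
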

\begin{proof}
See \ref{sect:proof2}.
\end{proof}
If the drift part in the price process is zero, i.e., $\phi=0$, then the expectation of the third moment variation is simply
\begin{align*}
\E[ [R^2,R]_t] = \frac{2\gamma \rho \theta}{S_0^3 \kappa_2^2  }\left(  \kappa_2 t - 1 + \e^{-\kappa_2 t} \right) \approx \frac{2\gamma \rho \theta}{S_0^3 \kappa_2  } t
\end{align*}
where the approximation is for a sufficiently large $t$.

\begin{example}
By Proposition~\ref{Prop:tmv}, $\frac{1}{N}\frac{\sum \widehat{[R^2,R]}_i}{K} \rightarrow \rho$ as the sample size increases where $\widehat{[R^2,R]}_i$ denotes the realized finite sum approximation of the third moment variation.
The convergence of the estimates of $\rho$ in Figure~\ref{Fig:rho_simul} were plotted in a simulation study with parameter settings 
$\kappa_1 = 1.15, \phi= 0.45, \theta=2.8\times10^{-4}, \kappa_2=0.85, \gamma=0.0375,\rho=-0.5$.
The sample mean of $\widehat{[R^2,R]}_i /K$ converges to $\rho$.
In the simulation result, the sample mean is $-0.4935$ with the standard error of $0.0757$.

\begin{figure}
\centering
\includegraphics[width=0.5\textwidth]{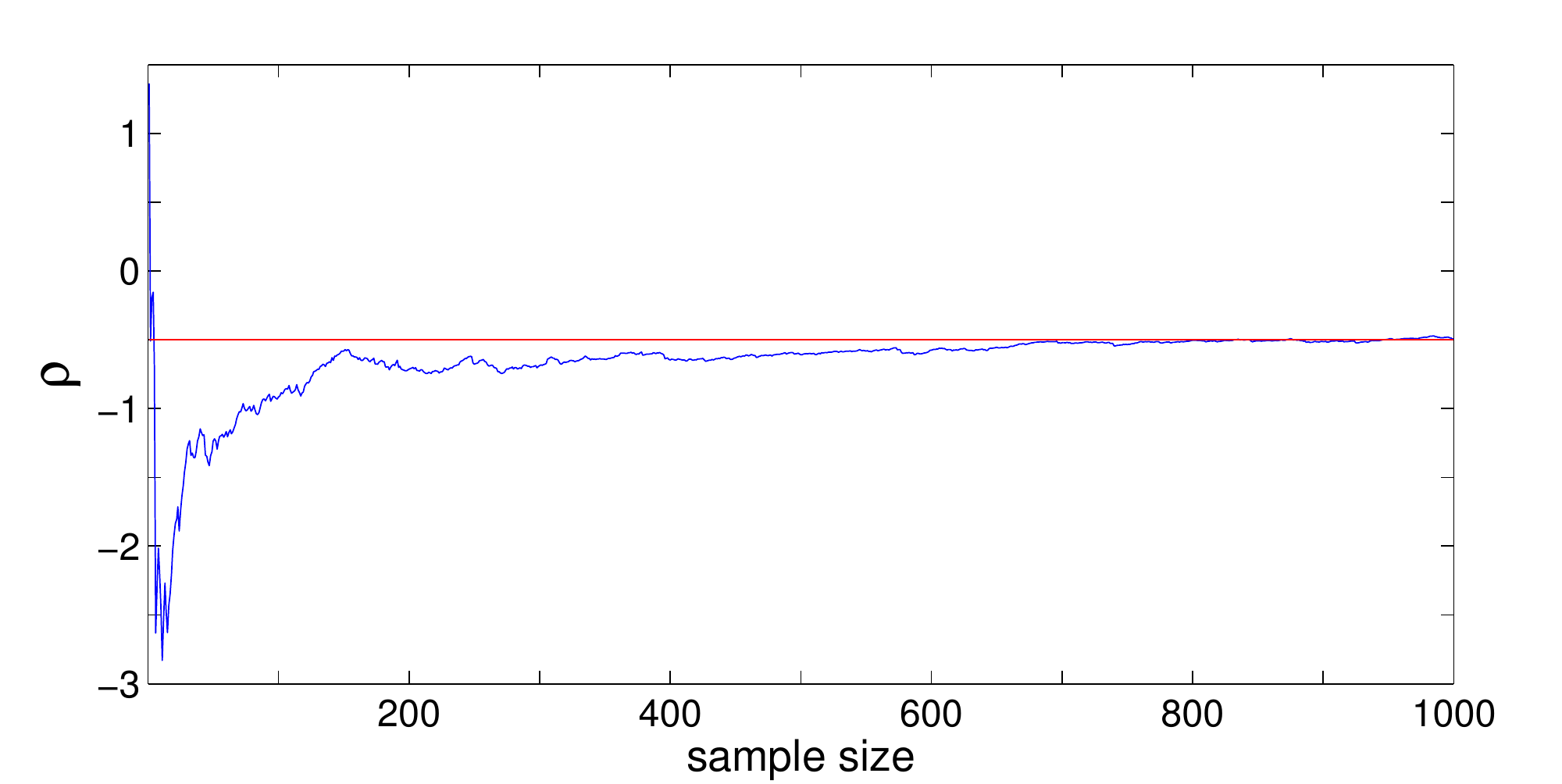}
\caption{Convergence of the estimates of $\rho$}\label{Fig:rho_simul}
\end{figure}
\end{example}

However, it should be noted that the number of samples should be sufficient for the convergence.
If the number of samples is not sufficient, it is better to use the approximate likelihood method or the simulated likelihood estimate discussed in \ref{Subsect:SLE}.

\begin{figure}
\centering
\includegraphics[width=0.6\textwidth]{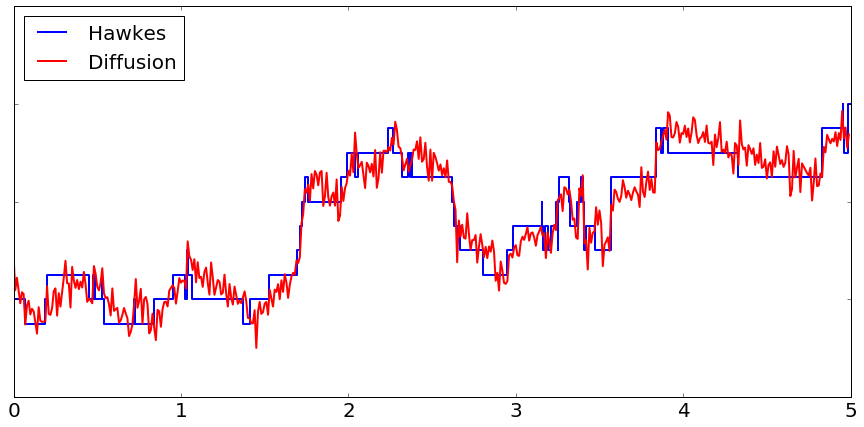}
\caption{Hawkes model and diffusion analogy}\label{Fig:HD}
\end{figure}

\subsection{Comparison}

Both the Hawkes and the diffusion models well describe the microstructure of price dynamics such as trade clustering or microstructure noise.
The Hawkes model directly describes the tick-by-tick structure of the asset price and 
data is applied to the model  without further assumptions or data corrections.
The model's closed-form formula of the log-likelihood function and 
quite reliable numerical algorithms to find the maximum the applicable.

On the other hand, the diffusion approach naturally extends the methodology traditionally used to describe asset price movements.
Note that the diffusion model in our paper is not a rigorous mathematical transform of the Hawkes model.
We use the derivation to provide an intuition not a mathematical proof.
Thus, one can argue about the legitimacy of the model, for example, the introduction of $\rho$ which we regarded as a constant.

Nevertheless, the model inherits the advantages of typical diffusion models.
Based on the It\'{o} calculus and PDE approach, the derivations of useful formula such as moment conditions and distributional property are simpler than the Poisson based Hawkes models.
Since the diffusion model has been extensively studied for a long time, it is expected that there will be a more convenient aspect to apply the existing theory or extend the model.

Meanwhile, the maximum likelihood estimation for the diffusion model is generally more complicated because the closed-form formula for the density function is not available in many cases.
In the absence of the closed-form likelihood function, 
the expansion based likelihood function approach \citep{ait2008closed}, simulation based method \citep{Brandt2002} or the generalized method of moment \citep{garcia2011estimation, bollerslev2011dynamic} are used to estimate the parameters.

\subsection{Simulated likelihood estimation}\label{Subsect:SLE}

Because the exact likelihood formula of the diffusion process in this paper is barely available, 
the estimation is based on the simulation method proposed by \cite{Brandt2002}.
Briefly explaining the method, the interval between two observed points, $t_i$ and $t_{i+1}$, are divided into subintervals with a length $N$.
The $M$ number of paths are simulated from $t_i$ up to $N-1$ subintervals using the discretized version of the diffusion model.
The mean of the transition probability functions from the last values of the simulated paths to the observed value at $t_{i+1}$, which is approximated by the normal distribution based on the discretization, 
becomes the maximum simulated likelihood.

In Empirical studies, the data is reformulated to apply the diffusion model
because the original data is based on a tick structure.
The large interval, i.e., $t_{i+1} - t_{i}$, is set to one minute where a sufficiently large number of events are observed for the approximation.
Figure~\ref{Fig:HD} presents the procedure, with every one minute, the observed price is the base point to construct a diffusion process, which lies behind the tick structure. 
Within the interval, the paths of the discretized version of the diffusion model are simulated with 60 subintervals.

\section{Empirical study}\label{Sect:empirical}
\subsection{Data}
For empirical studies, ultra high-frequency data of 10 stocks in the S\&P 500 are used.
As raw data in the first place, we reorganize the data in the following way:
\begin{itemize}
\item
The historical data consists of the best bid, ask quotes of the stocks, and their dynamics over trading time with various exchanges.
\item
The mid-price dynamics of the best bid and ask quotes of each stock reported in the New York Stock Exchange (NYSE) from 10:00 to 15:30 are selected to avoid the seasonal effects observed in early or late in the market.
\item
In the original raw data, the time stamps have 1 second resolutions. If the prices changes are reported several times for one second, the price changes with equidistant intervals are redistributed over one second.
\item
The mid-price increments and decrements have a unit size of change that is the half of the minimal bid ask spread.
If a price increment or decrement is larger than the minimal unit size, 
the change is considered to be the sum of the successive movements with the minimal size.
In recent data, the percentage of the minimal change is very high in many symbols, as listed in Table~\ref{Table:min}.
In addition, Table~\ref{Table:min_tran} lists the percentage of minimal change of transacted prices where similar patterns to the percentage of the mid-prices are observed.
\item
The symbols in the table represents:\\
BAC - Bank of America Corp,
CVX - Chevron,
GE - General Electric Co.,
IBM - International Business Machines,
JPM - JP Morgan Chase \& Co,
KO - The Coca-Cola Company,
MCD - McDonald's Corp,
T - AT\&T Inc,
VZ - Verizon Communications Inc,
XOM - Exxon Mobil Corp
\end{itemize}

\begin{table}
\caption{minimal tick percentage (\%) - mid price}\label{Table:min}
\centering
\begin{tabular}{ccccccccccc}

\hline
& BAC & CVX & GE & IBM & JPM & KO & MCD & T & VZ & XOM\\
\hline
2007 & 79.91 & 61.88 & 83.29 & 55.70 & 78.16 & 75.41 & 73.71 & 79.15 & 84.57 & 70.14\\
2008 & 87.83 & 59.01 & 79.93 & 43.65 & 59.68 & 67.48 & 58.07 & 68.58 & 69.48 & 68.57\\
2009 & 88.19 & 70.36 & 93.84 & 56.98 & 72.42 & 80.10 & 84.51 & 82.14 & 82.06 & 86.79\\
2010 & 79.78 & 87.83 & 98.74 & 77.48 & 95.77 & 94.61 & 83.88 & 82.40 & 82.71 & 86.98\\
2011 & 99.53 & 72.16 & 99.21 & 52.73 & 96.96 & 89.44 & 86.92 & 98.23 & 89.07 & 90.35\\
\hline
\end{tabular}
\end{table}

\begin{table}
\caption{Minimal tick percentage (\%) - transacted price}\label{Table:min_tran}
\centering
\begin{tabular}{ccccccccccc}

\hline
& BAC & CVX & GE & IBM & JPM & KO & MCD & T & VZ & XOM\\
\hline
2007 & 92.70 & 69.86 & 97.26 & 64.40 & 90.34 & 89.81 & 87.23 & 94.74 & 93.11 & 79.78\\
2008 & 84.60 & 50.88 & 89.81 & 51.80 & 71.21 & 76.61 & 63.89 & 86.87 & 82.79 & 60.93\\
2009 & 98.89 & 72.34 & 98.39 & 59.98 & 88.03 & 89.15 & 80.42 & 97.44 & 93.67 & 82.47\\
2010 & 99.63 & 81.07 & 99.61 & 80.88 & 95.58 & 92.59 & 85.65 & 99.15 & 98.19 & 92.36\\
2011 & 99.81 & 62.08 & 99.68 & 57.04 & 96.76 & 91.76 & 80.58 & 99.01 & 97.12 & 84.48\\
\hline
\end{tabular}
\end{table}

\subsection{Dynamics of parameters and performance of volatility measure}~\label{subsect:paramters}
The parameters of the symmetric Hawkes process were estimated, as explained in \ref{Sect:likelihood} using the mid-price dynamics of the stocks quoted in NYSE.
The estimations are employed on a daily basis because there are enough samples even in a day and the aim is to demonstrate the daily change in the parameters.
Table~\ref{Table:GE2011} lists one of the results with GE for each day from January 3 to 27, 2011.
The estimates of $\mu, \alpha_s, \alpha_c, \beta$ and their numerically computed standard errors in the parentheses are reported.
In the estimation, the unit time, $t=1$, is one second.
The averaged daily estimates of $\mu, \alpha_s, \alpha_c, \beta$ for the different stocks over a month, January 2011, are also reported.

\begin{table}
\caption{Symmetric Hawkes estimation result, GE, January 2011}\label{Table:GE2011}
\centering
\begin{tabular}{cccccccccccc}

\hline
Date & $\mu$ & $\alpha_s$ & $\alpha_c$ & $\beta$ & H.vol  &TSRV & RRV\\
\hline
0103 & 0.0067 & 0.4661 & 1.3576 & 2.2596 & 0.0957  & 0.1289 & 0.1103\\
 & 	(0.0004) &	(0.0609) & (0.0958) & (0.1160)\\
0104 & 0.0082 &	0.4853 & 1.3941 & 2.5297 & 0.1139  & 0.1468 & 0.1344\\
 & (0.0005)	& (0.0494) & (0.0889) & (0.1104)\\
0105 & 0.0112 &	0.4741 & 1.1698 & 2.2281 & 0.1339  & 0.1825 & 0.1619\\
 &(0.0006)& (0.0402) & (0.0673) & (0.0939)\\
0106 & 0.0091 &	0.5599 & 1.0112 & 2.1822 & 0.1265  & 0.1656 & 0.1391\\
& (0.0005) & (0.0471) &	(0.0675) & (0.0958)\\
0107 & 0.0163 &	0.6973 & 0.5968 & 1.9959 & 0.1933  & 0.1998 & 0.1932\\
 & (0.0007) & (0.0407) & (0.0391) & (0.0747)\\
0110 & 0.0132 & 0.4978 & 0.7434 & 1.8366 & 0.1520  & 0.1730 & 0.1553\\
 & (0.0006) & (0.0360) & (0.0465) & (0.0780)\\
0111 & 0.0081 & 0.6959 & 0.7522 & 2.1448 & 0.1310  & 0.1414 & 0.1254\\
 & (0.0005) & (0.0593) & (0.0645) & (0.1125)\\
0112 & 0.0098 &	0.4210 & 0.7322 & 1.8399 & 0.1181  & 0.1440 & 0.1328\\
 & (0.0005) & (0.0406) & (0.0559) & (0.0965)\\
0113 & 0.0097 &	0.6512 & 0.3476 & 1.7471 & 0.1533  & 0.1275 & 0.1434\\
 & (0.0005) & (0.0526) & (0.0379) & (0.1035)\\
0114 & 0.0092 &	0.6173 & 0.5303 & 1.8327 & 0.1390  & 0.1328 & 0.1344\\
 & (0.0005) & (0.0533) & (0.0498) & (0.1097)\\
0118 & 0.0097 &	0.5122 & 0.4779 & 1.6361 & 0.1351  & 0.1235 & 0.1161\\
 & (0.0005) & (0.0449 & (0.0434) & (0.0976) \\
0119 & 0.0168 &	0.4737 & 0.5202 & 1.5618 & 0.1774 & 0.1842 & 0.1772\\
 & (0.0007) & (0.0331) & (0.0351) & (0.0772)\\
0120 & 0.0181 & 0.6944 & 0.5065 & 1.9431 & 0.2071  & 0.1913 & 0.1928\\
 & (0.0007) & (0.0428) & (0.0359) & (0.0835)\\
0121 & 0.0316 &	0.5334 & 0.5986 & 1.7352 & 0.2358  & 0.2168 & 0.2299\\
 & (0.0010) & (0.0262) & (0.0282) & (0.0556) \\
0124 & 0.0120 &	0.4355 & 0.4512 & 1.4181 & 0.1408  & 0.1249 & 0.1388\\
 & (0.0006) & (0.0352) & (0.0358) & (0.0836) \\
0125 & 0.0209 &	0.6164 & 0.4259 & 1.6312 & 0.2144  & 0.2018 & 0.2096\\
 & (0.0008) & (0.0321) & (0.0275) & (0.0640)\\
0126 & 0.0143 &	0.5053 & 0.5590 & 1.6587 & 0.1532  & 0.1146 & 0.1326 \\
 & (0.0007) & (0.0365) & (0.0390) & (0.0815)\\
0127 & 0.0147 &	0.5029 & 0.3925 & 1.4351 & 0.1687  & 0.1807 & 0.1578\\
 & (0.0007) & (0.0371) & (0.0317) & (0.0868) \\
\hline
\end{tabular}
\end{table}

\begin{table}
\centering
\caption{Averaged estimation result of symmetric Hawkes model, January 2011}\label{Table:2011}
\begin{tabular}{ccccc}
\hline
Symbol & $\mu$ & $\alpha_s$ & $\alpha_c$ & $\beta$ \\
\hline
GE & 0.0141 & 0.5480 & 0.6766 & 1.8476 \\
IBM & 0.1489 & 1.0057 & 0.5037 & 2.0986 \\
JPM & 0.0672 & 0.6330 & 0.4767 & 1.5806 \\
KO & 0.0357 & 0.6669 & 0.3153 & 1.4814 \\
MCD & 0.0478 & 0.7201 & 0.4177 & 1.6641 \\
T & 0.0153 & 0.4593 & 0.5157 & 1.4506 \\
VZ & 0.0216 & 0.7887 & 0.4669 & 1.8206 \\
XOM & 0.0691 & 0.5280 & 0.3482 & 1.2808 \\
\hline
\end{tabular}
\end{table}

In the column, `H.vol', the annualized daily volatility estimates computed by the estimates of the Hawkes parameters and using the formula in Proposition~\ref{Prop:vol} are presented.
In the column, `TSRV', the two scaled realized volatilities introduced by \cite{Zhang2005} are compared
and in the column, `RRV', the volatility estimates proposed by \cite{robert2011new} based on the uncertainty zones model are presented.
The table shows that the Hawkes volatility, TSRV and RRV have similar values all over the reported time.

Figure~\ref{Fig:EstimationGE2011} plots the dynamics of the parameters of GE, 2011.
The estimation results show evidence that the parameters of the Hawkes process, particularly $\mu$, changes with time.
The dynamics of $\mu$ with time shows the typical movements of positively autocorrelated time series, which is strongly associated with the macro feature of the volatility movement, such as the GARCH effect and stochastic volatility.
In addition, a comparison of Figures \ref{Fig:ge2011_vol}, \ref{Fig:ge2011_TSRV} and \ref{Fig:ge2011_mu} verifies that the dynamics of $\mu$ is related significantly to the dynamics of the volatility.
When the parameter $\mu$ of a day is large, the volatility of the day is large and 
when the parameter $\mu$ of a day is small, the volatility of the day is small.

In the figure, the movements of the other parameters $\alpha_s, \alpha_c$ and $\beta$, do not appear to be meaningful compared to the movement of $\mu$.
The plots also show that the volatilities computed by the symmetric Hawkes modeling and TSRV show similar patterns over the observed time period.

\begin{figure}
        \centering
        \begin{subfigure}[b]{0.45\textwidth}
                \includegraphics[width=\textwidth]{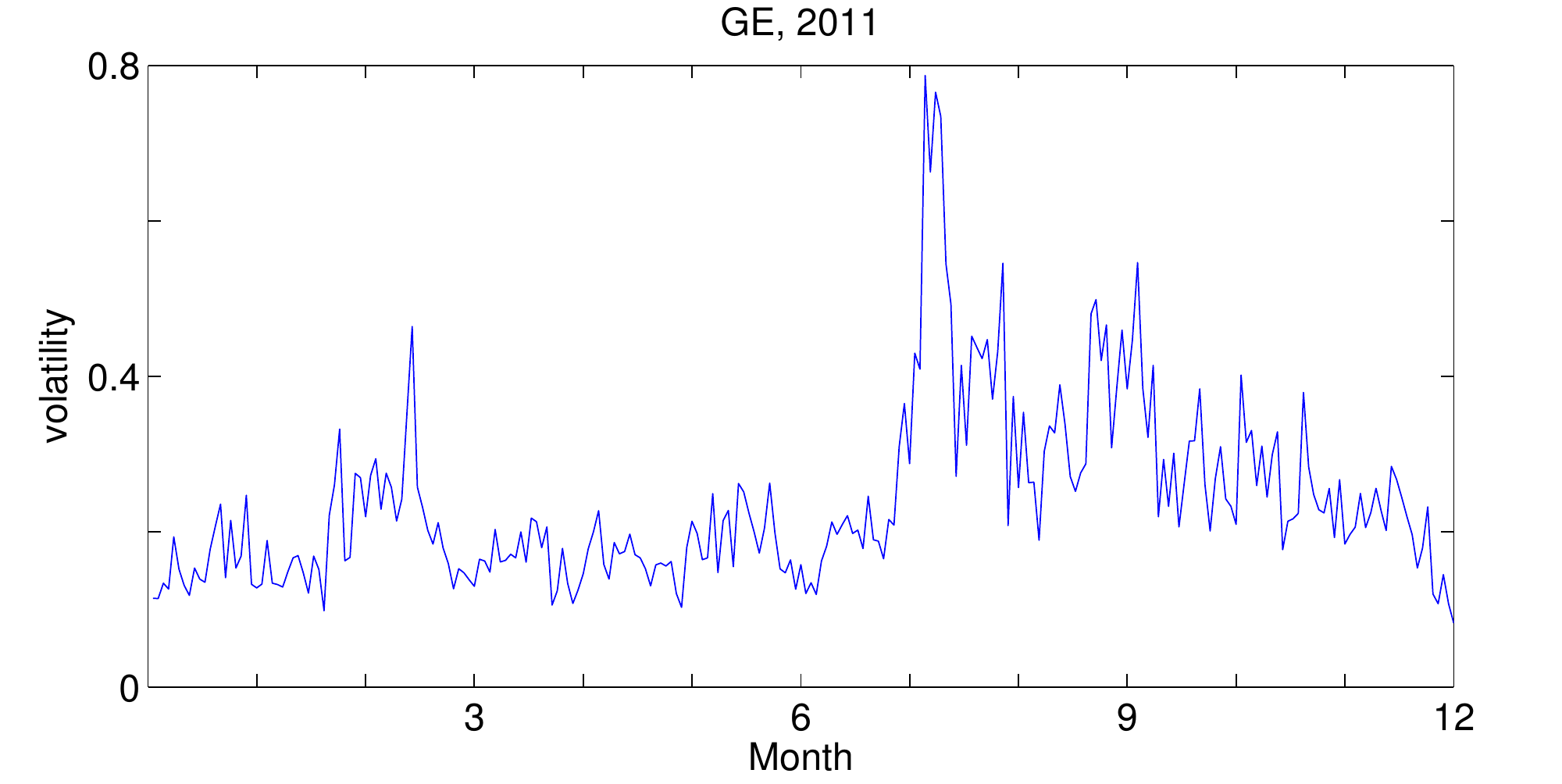}
                \caption{volatility}
                \label{Fig:ge2011_vol}
        \end{subfigure}
        \centering
        \begin{subfigure}[b]{0.45\textwidth}
                \includegraphics[width=\textwidth]{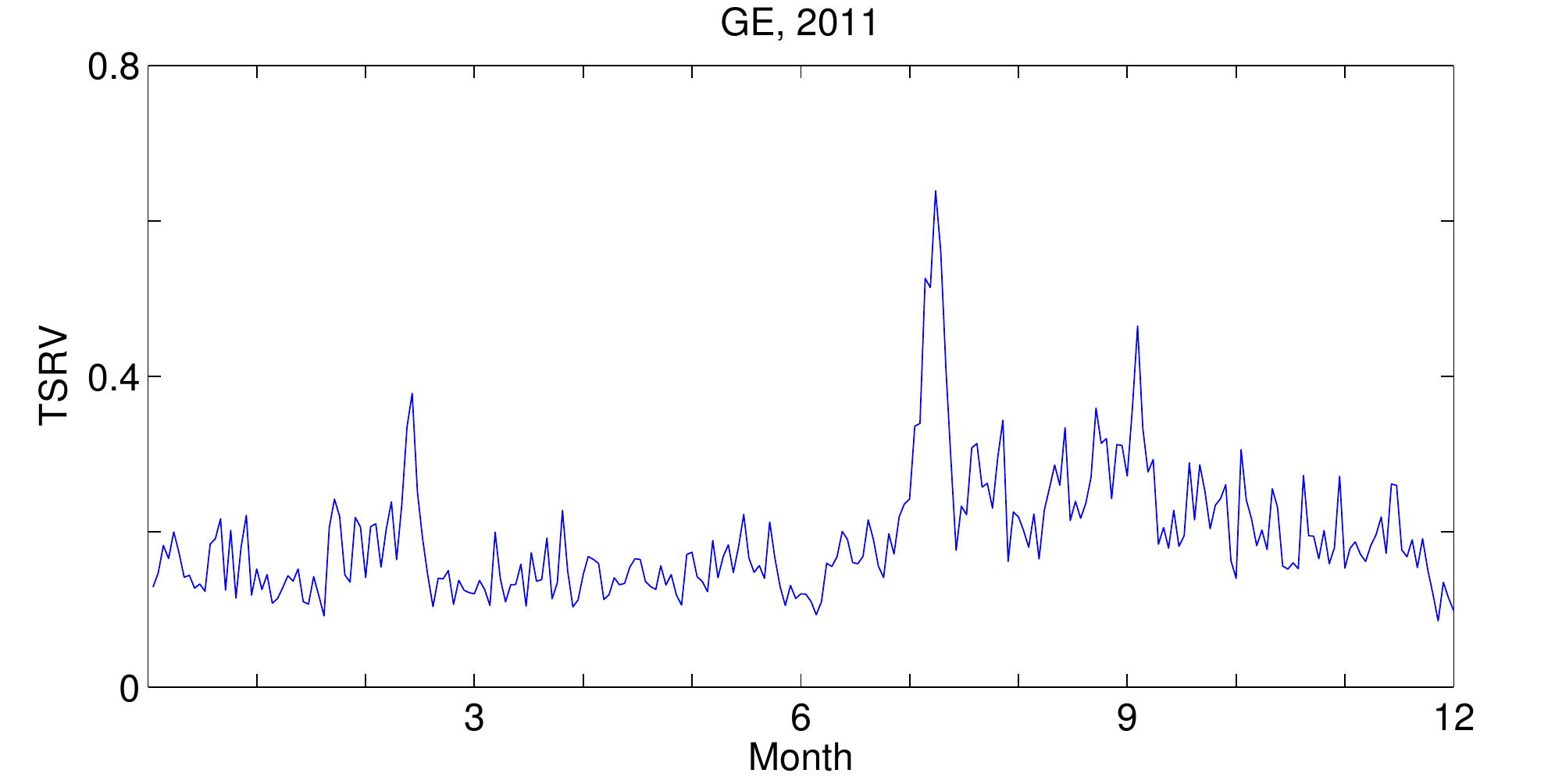}
                \caption{TSRV}
                \label{Fig:ge2011_TSRV}
        \end{subfigure}
	    \centering
        \begin{subfigure}[b]{0.45\textwidth}
                \includegraphics[width=\textwidth]{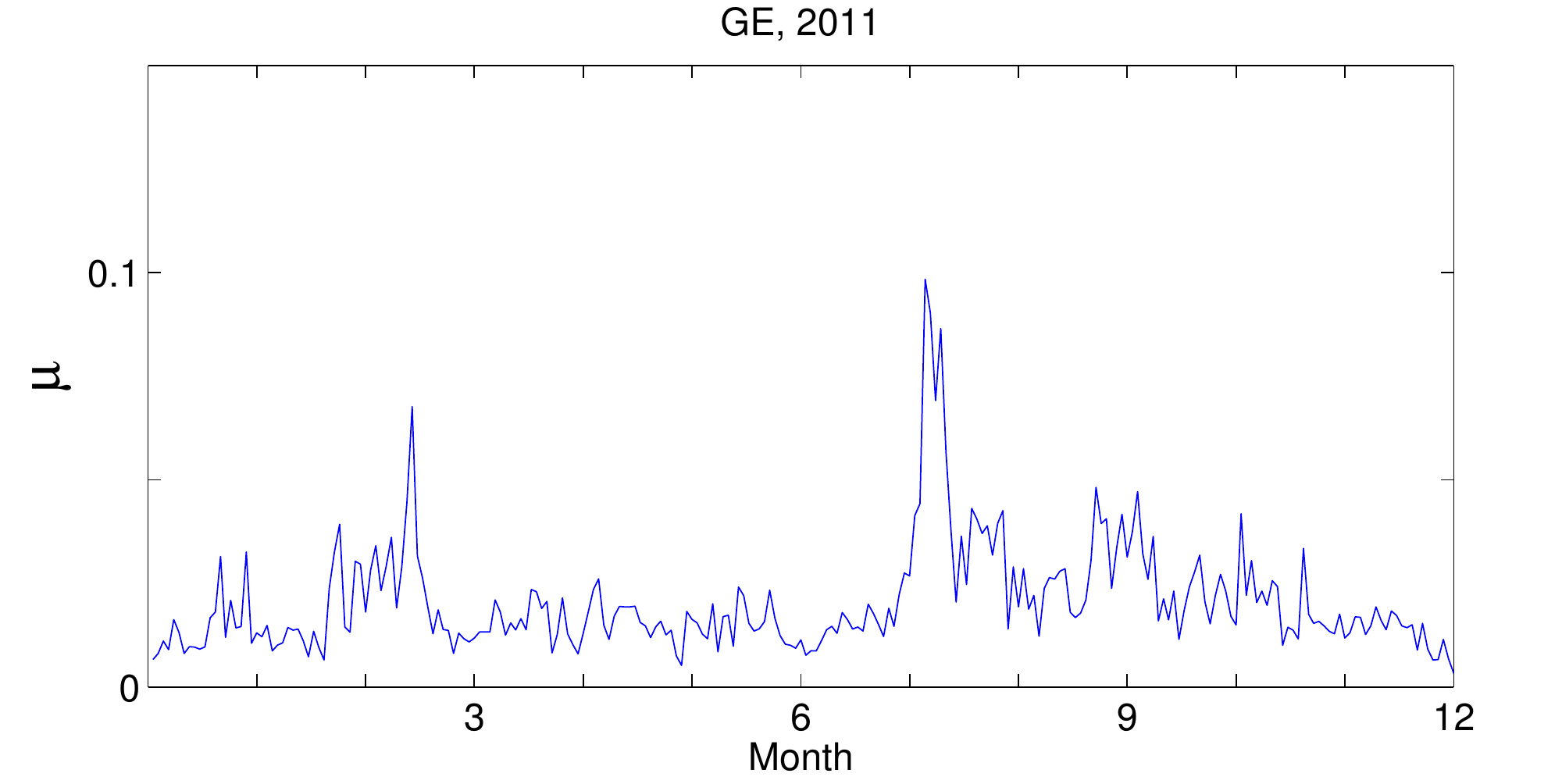}
                \caption{$\mu$}
                \label{Fig:ge2011_mu}
        \end{subfigure}
	    \centering
        \begin{subfigure}[b]{0.45\textwidth}
                \includegraphics[width=\textwidth]{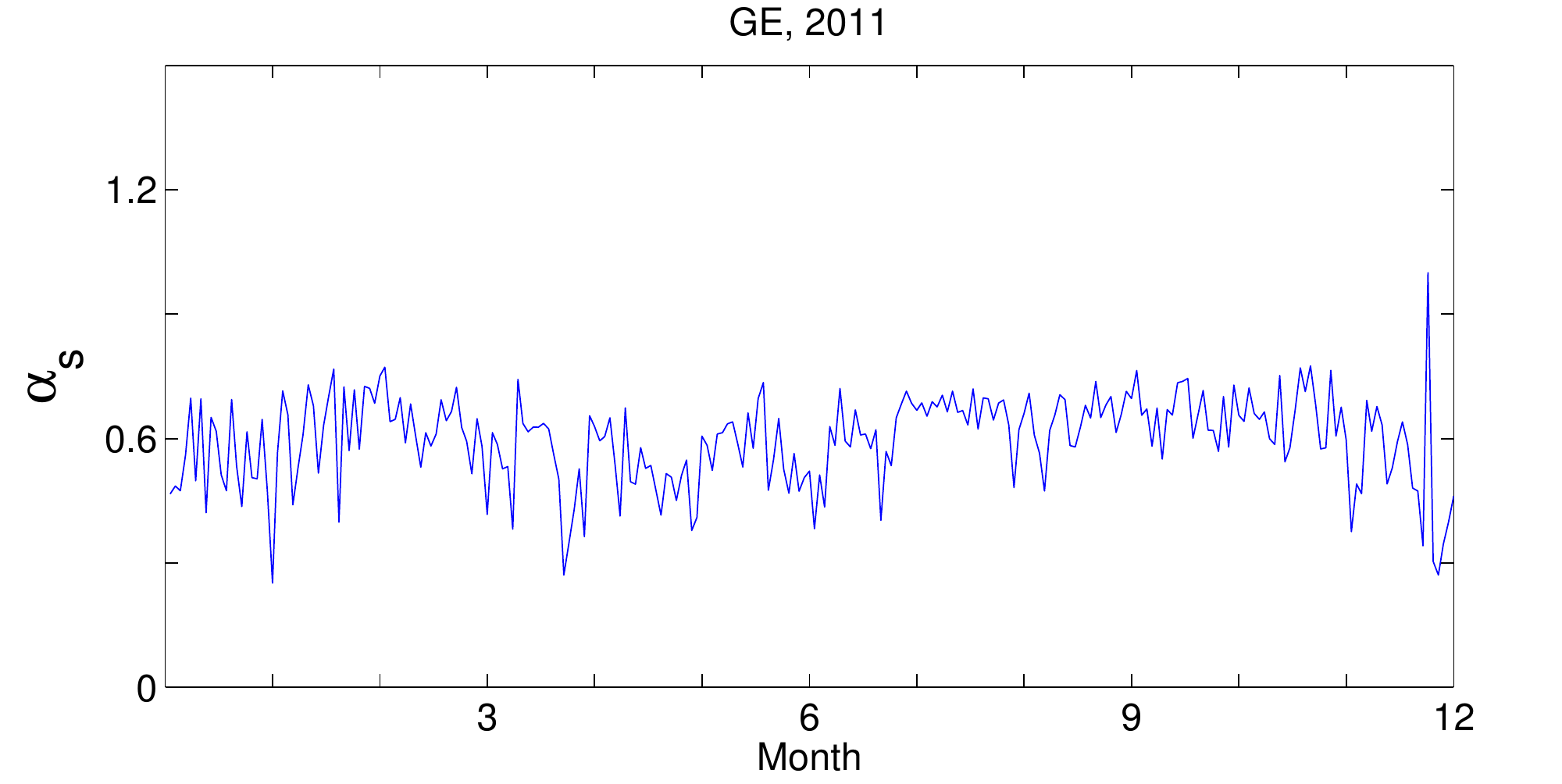}
                \caption{$\alpha_s$}
                \label{Fig:ge2011_alphas}
        \end{subfigure}
	    \centering
        \begin{subfigure}[b]{0.45\textwidth}
                \includegraphics[width=\textwidth]{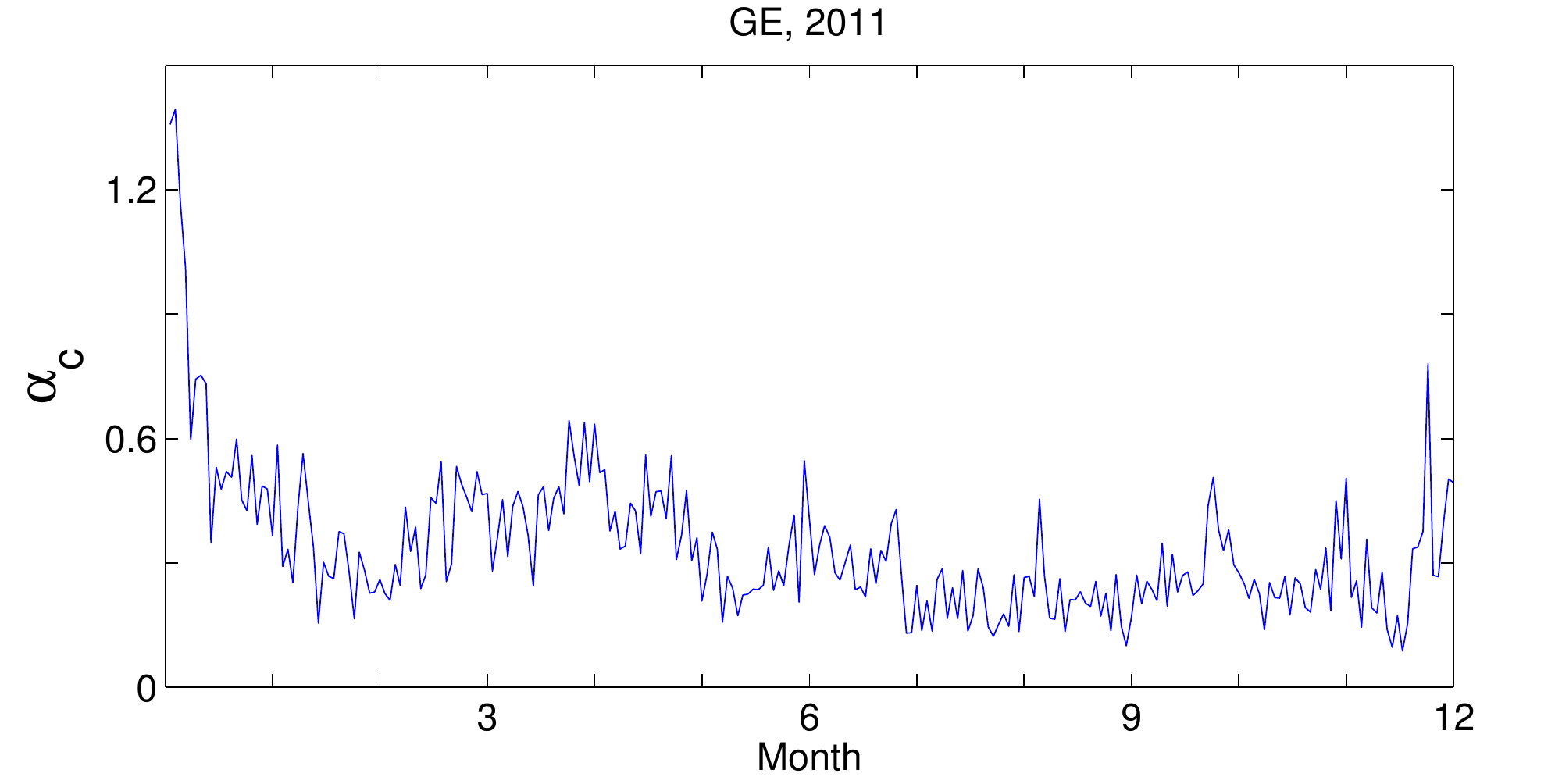}
                \caption{$\alpha_c$}
                \label{Fig:ge2011_alphac}
        \end{subfigure}
	    \centering
        \begin{subfigure}[b]{0.45\textwidth}
                \includegraphics[width=\textwidth]{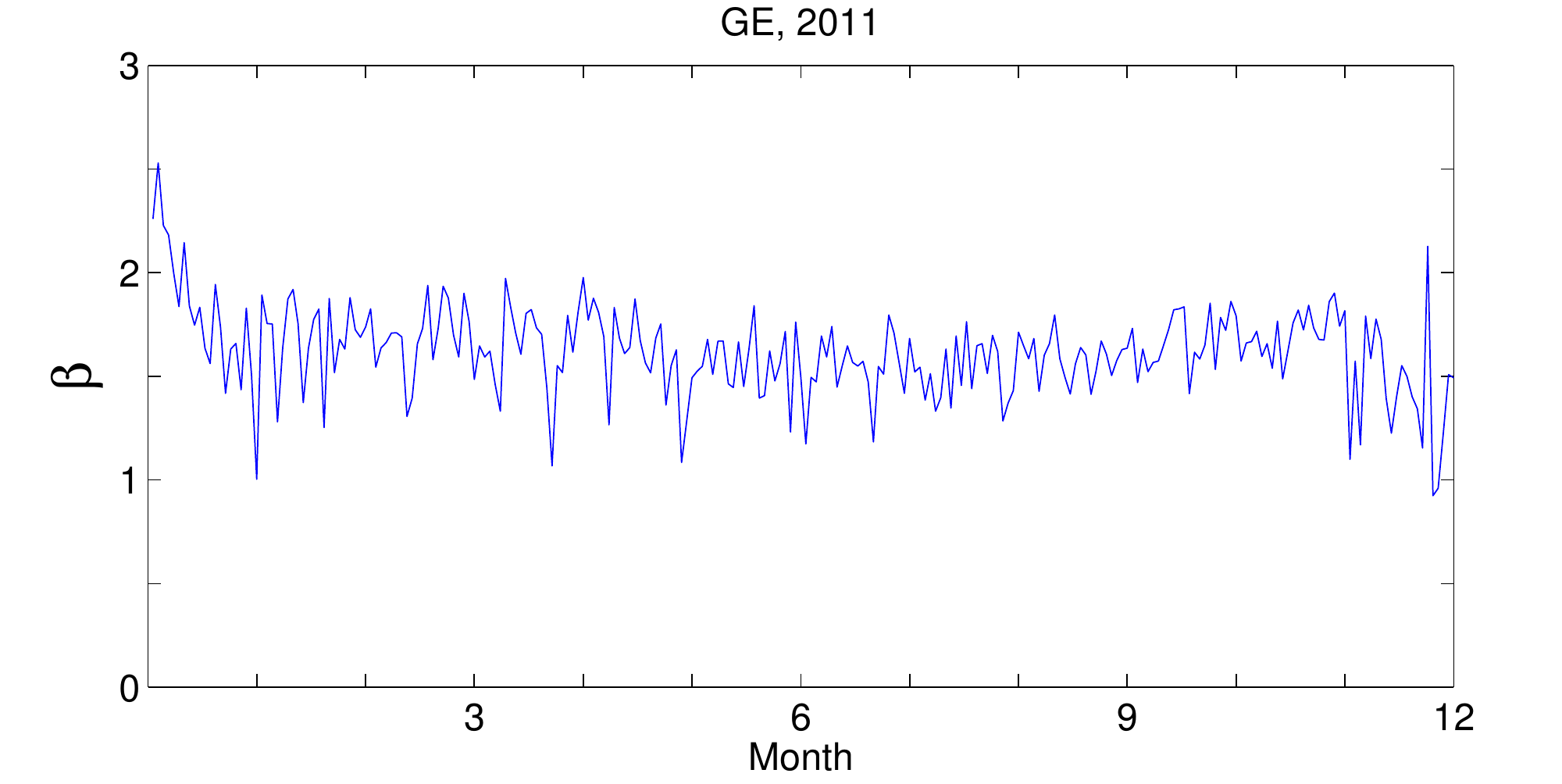}
                \caption{$\beta$}
                \label{Fig:ge2011_beta}
        \end{subfigure}
		\caption{Symmetric Hawkes estimation result, GE, 2011}\label{Fig:EstimationGE2011}
\end{figure}

Figure~\ref{Fig:EstimationGE2010} presents the parameter and volatility dynamics of GE, 2010.
Similar to the previous case, the behaviors $\mu$ and TSRV are similar.
The day of peaked volatilities in the figure is the May 6, 2010 Flash Crash where the equity prices fell rapidly.
At the day of the Flash Crash, the two estimated volatilities had different values and TSRV was much larger than the Hawkes volatility.

\begin{figure}
        \centering
        \begin{subfigure}[b]{0.45\textwidth}
                \includegraphics[width=\textwidth]{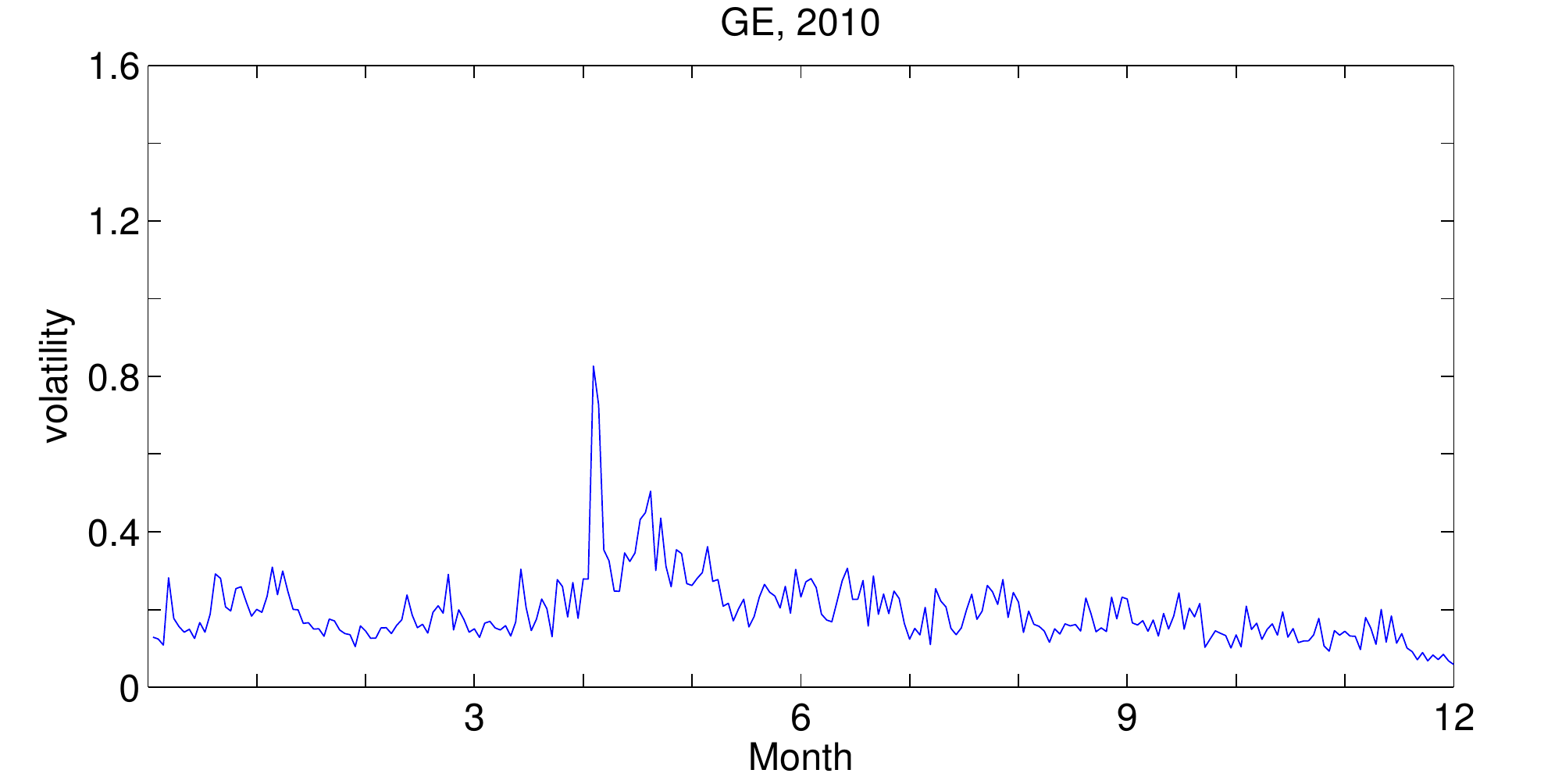}
                \caption{volatility}
                \label{Fig:ge2010_vol}
        \end{subfigure}
        \centering
        \begin{subfigure}[b]{0.45\textwidth}
                \includegraphics[width=\textwidth]{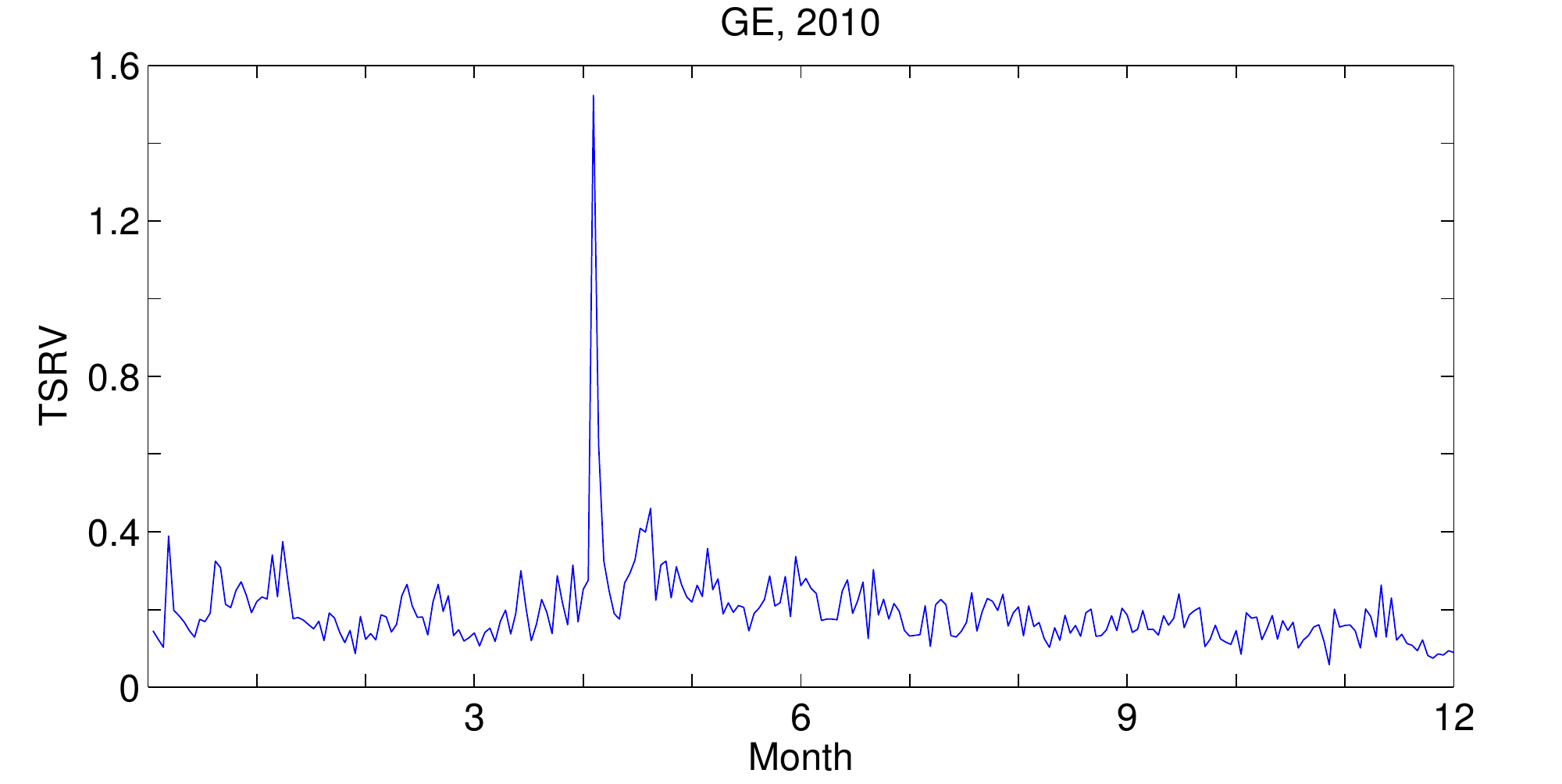}
                \caption{TSRV}
                \label{Fig:ge2010_TSRV}
        \end{subfigure}
	    \centering
        \begin{subfigure}[b]{0.45\textwidth}
                \includegraphics[width=\textwidth]{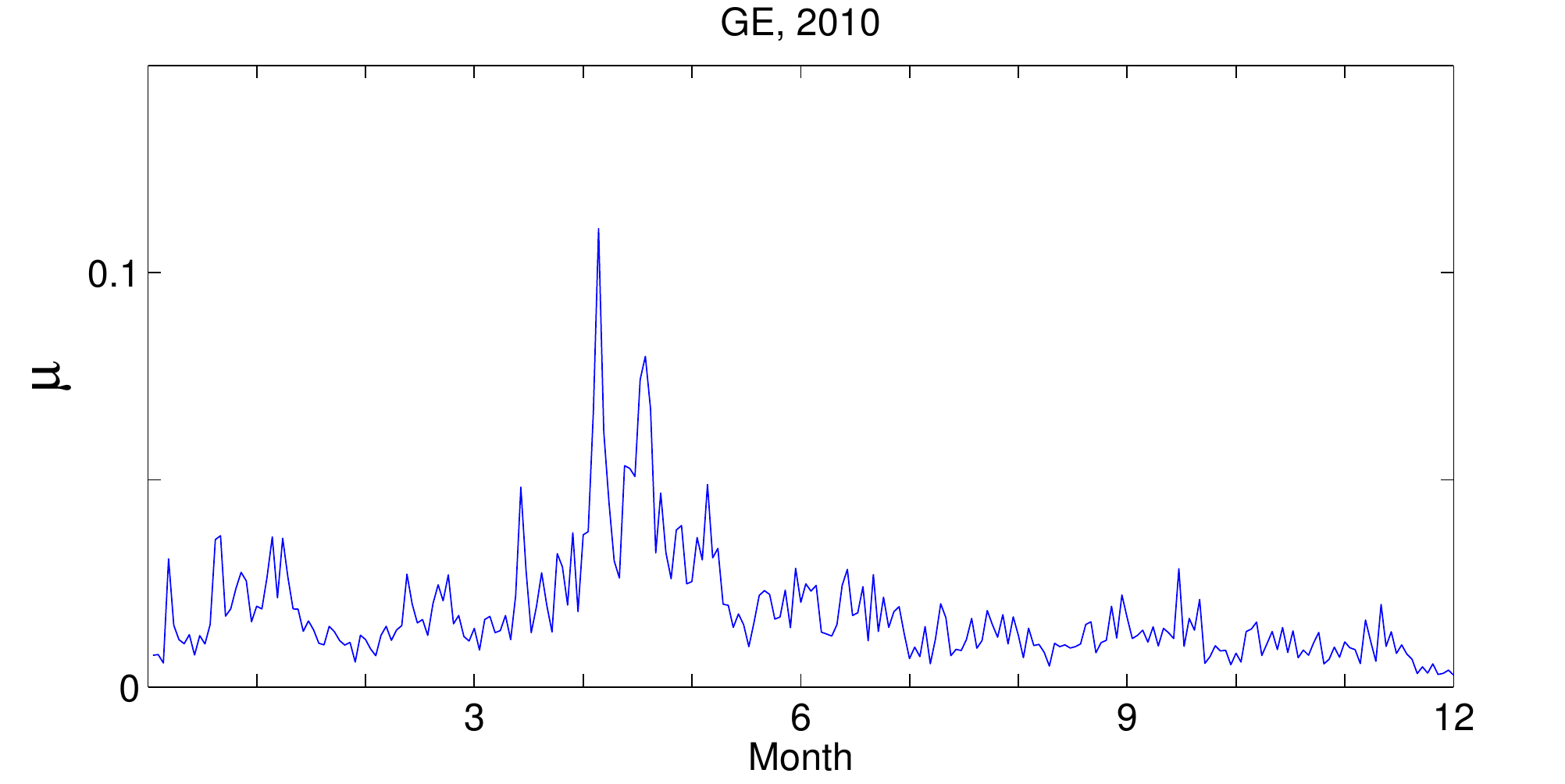}
                \caption{$\mu$}
                \label{Fig:ge2010_mu}
        \end{subfigure}
	    \centering
        \begin{subfigure}[b]{0.45\textwidth}
                \includegraphics[width=\textwidth]{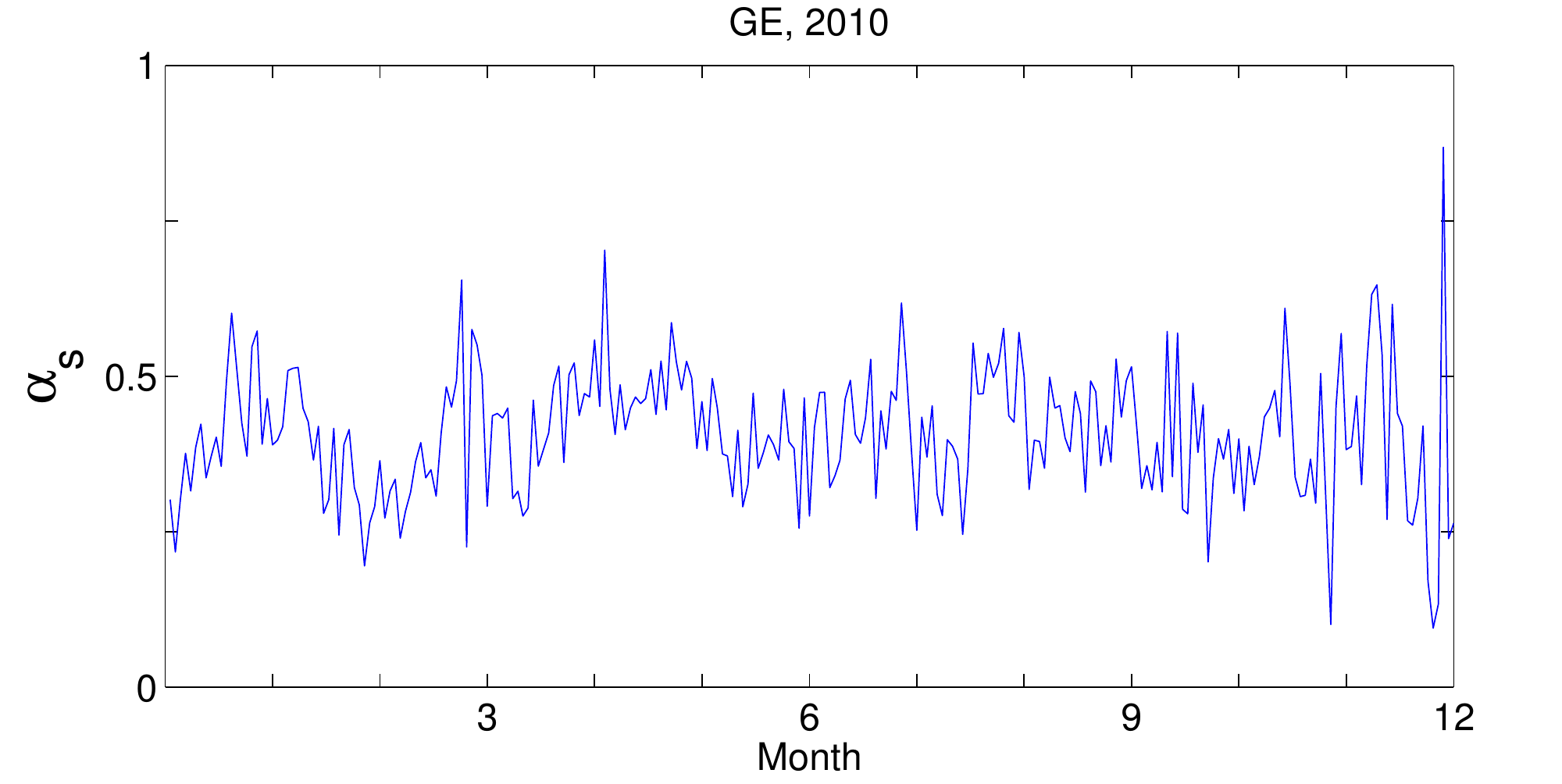}
                \caption{$\alpha_s$}
                \label{Fig:ge2010_alphas}
        \end{subfigure}
	    \centering
        \begin{subfigure}[b]{0.45\textwidth}
                \includegraphics[width=\textwidth]{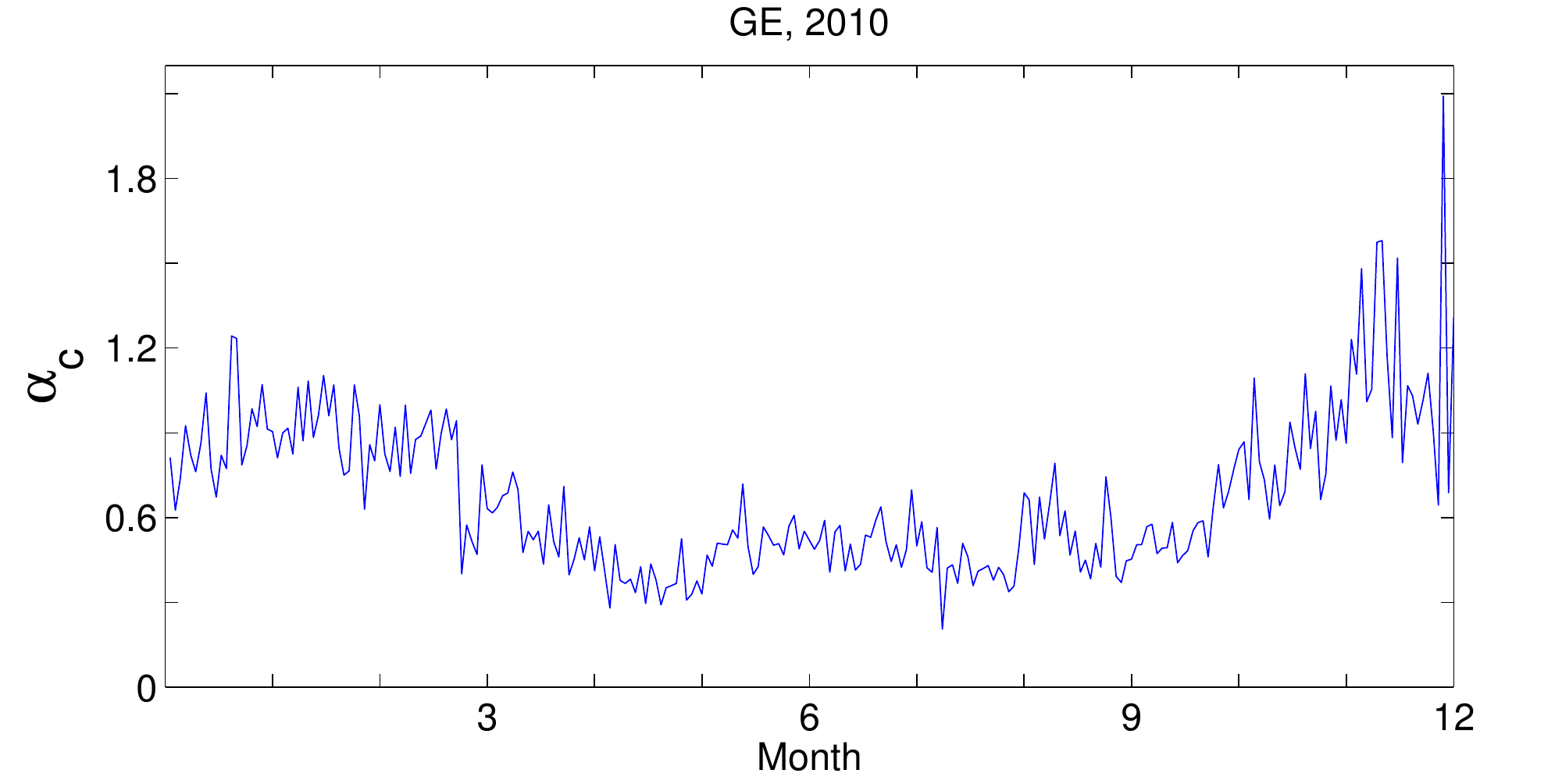}
                \caption{$\alpha_c$}
                \label{Fig:ge2010_alphac}
        \end{subfigure}
	    \centering
        \begin{subfigure}[b]{0.45\textwidth}
                \includegraphics[width=\textwidth]{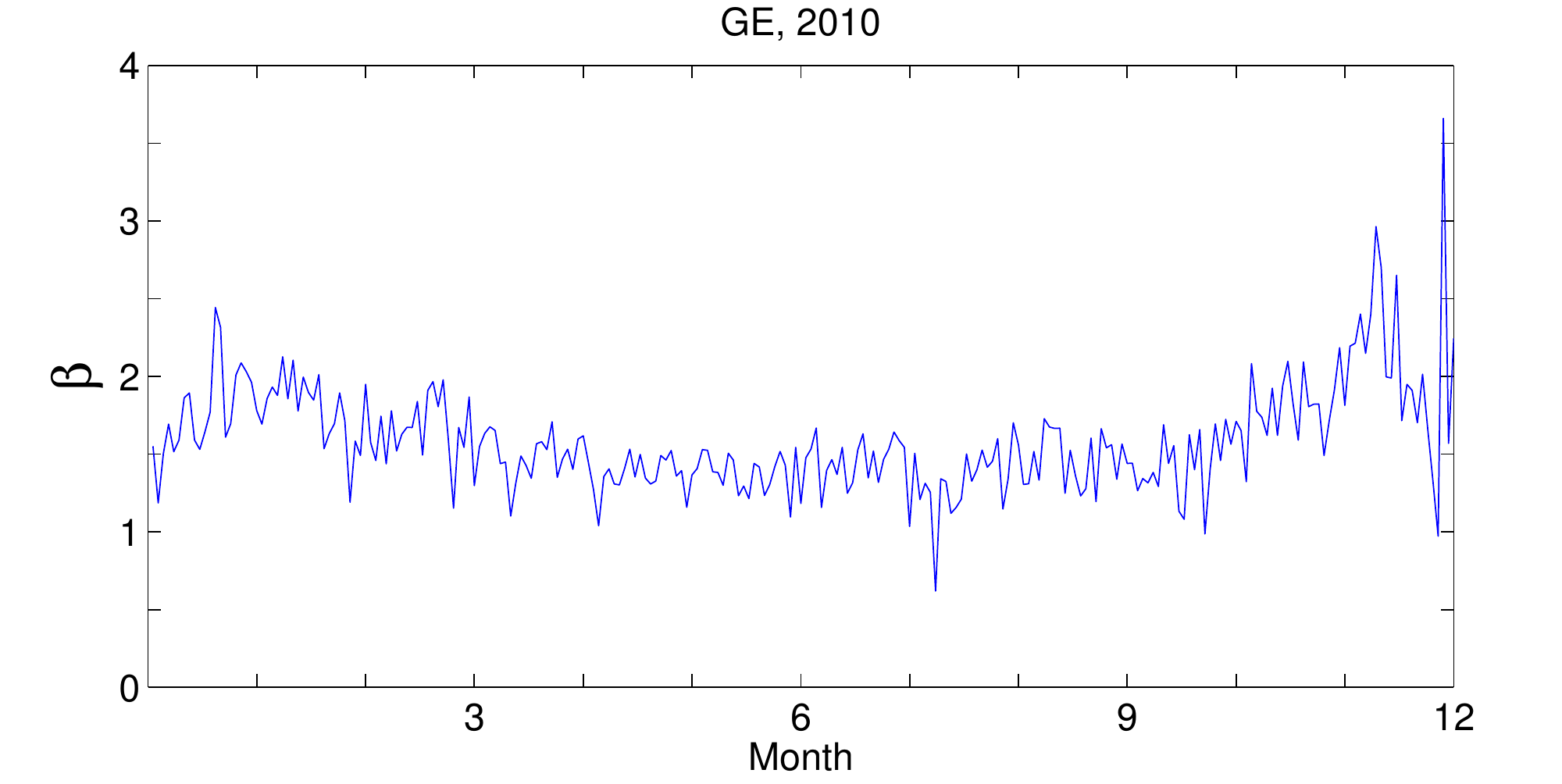}
                \caption{$\beta$}
                \label{Fig:ge2010_beta}
        \end{subfigure}
		\caption{Symmetric Hawkes estimation result, GE, 2010}\label{Fig:EstimationGE2010}
\end{figure}

In addition, in Figure~\ref{Fig:EstimationTMCD2010}, the estimated Hawkes volatility, TSRV and RRV of T (left) and MCD (right) are compared. 
All three volatilities have similar forms of movements during the observed period.
For T, the Hawkes volatility was close to TSRV (right) at the day of Flash Crash.
The estimated Hawkes volatility of MCD at the Flash Crash was larger than the TSRV or RRV.

\begin{figure}
		\centering
        \begin{subfigure}[b]{0.45\textwidth}
                \includegraphics[width=\textwidth]{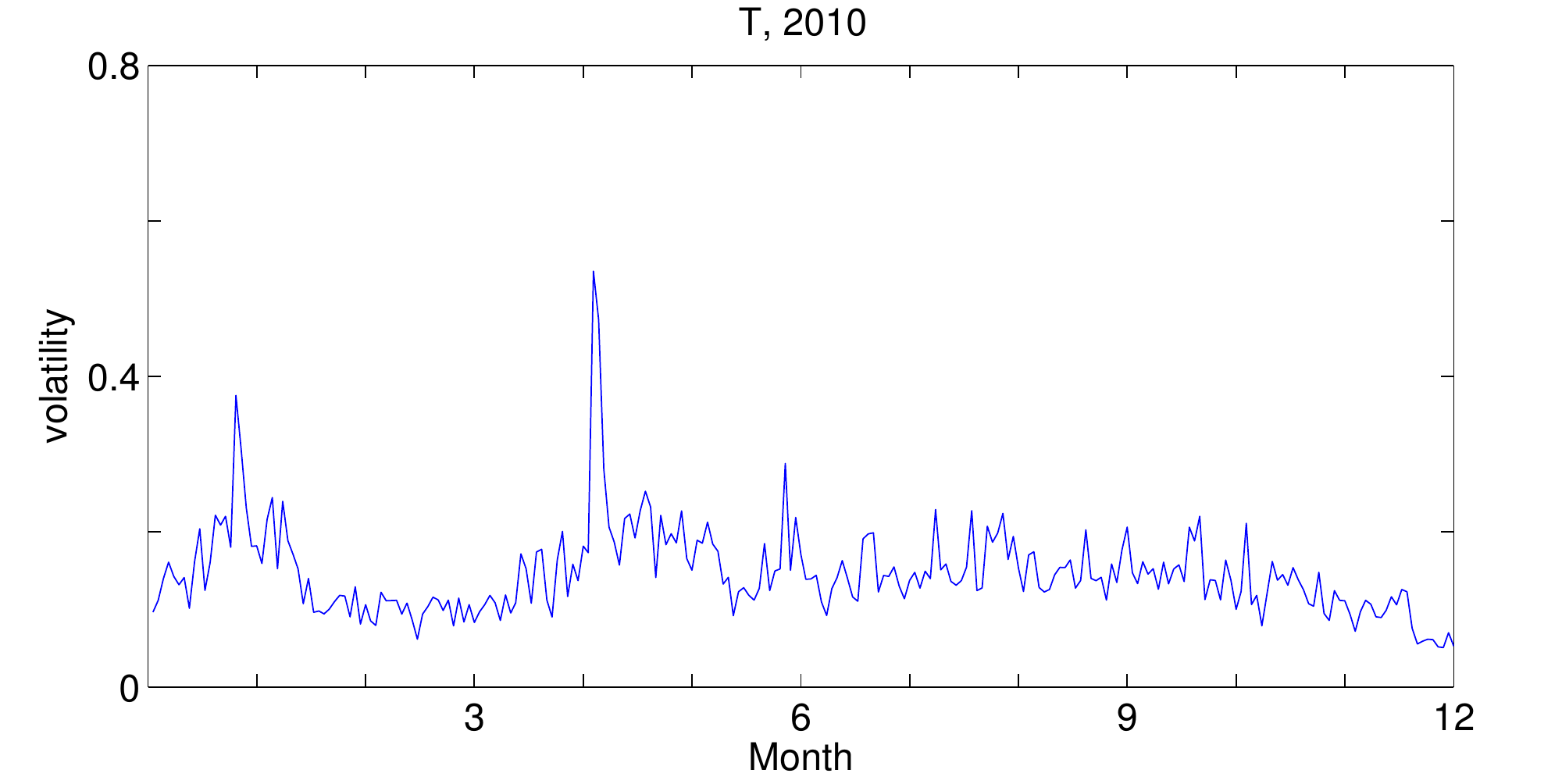}
                \caption{T, volatility}
        \end{subfigure}
        \centering
        \begin{subfigure}[b]{0.45\textwidth}
                \includegraphics[width=\textwidth]{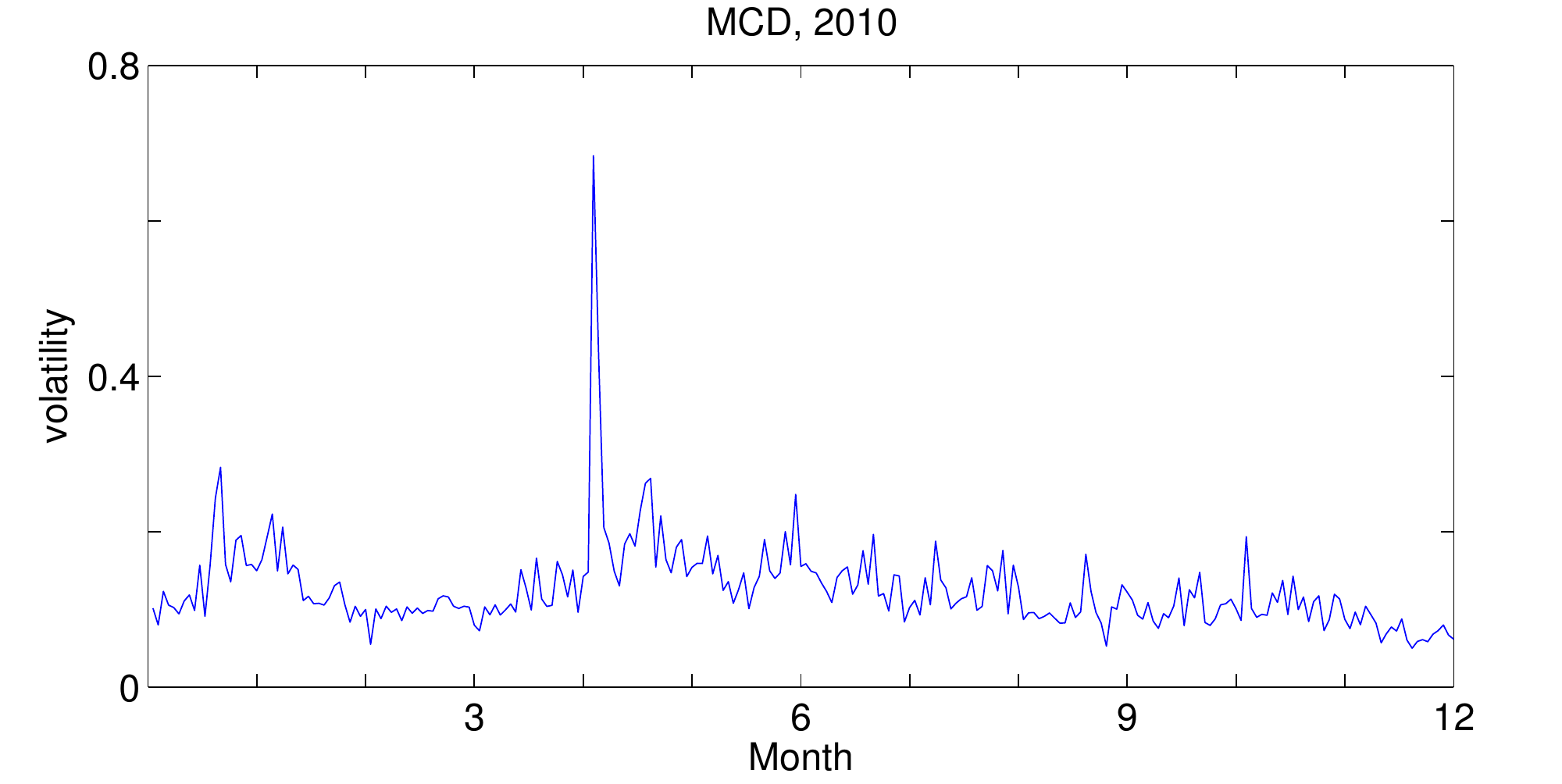}
                \caption{MCD, volatility}
        \end{subfigure}
        \centering
        \begin{subfigure}[b]{0.45\textwidth}
                \includegraphics[width=\textwidth]{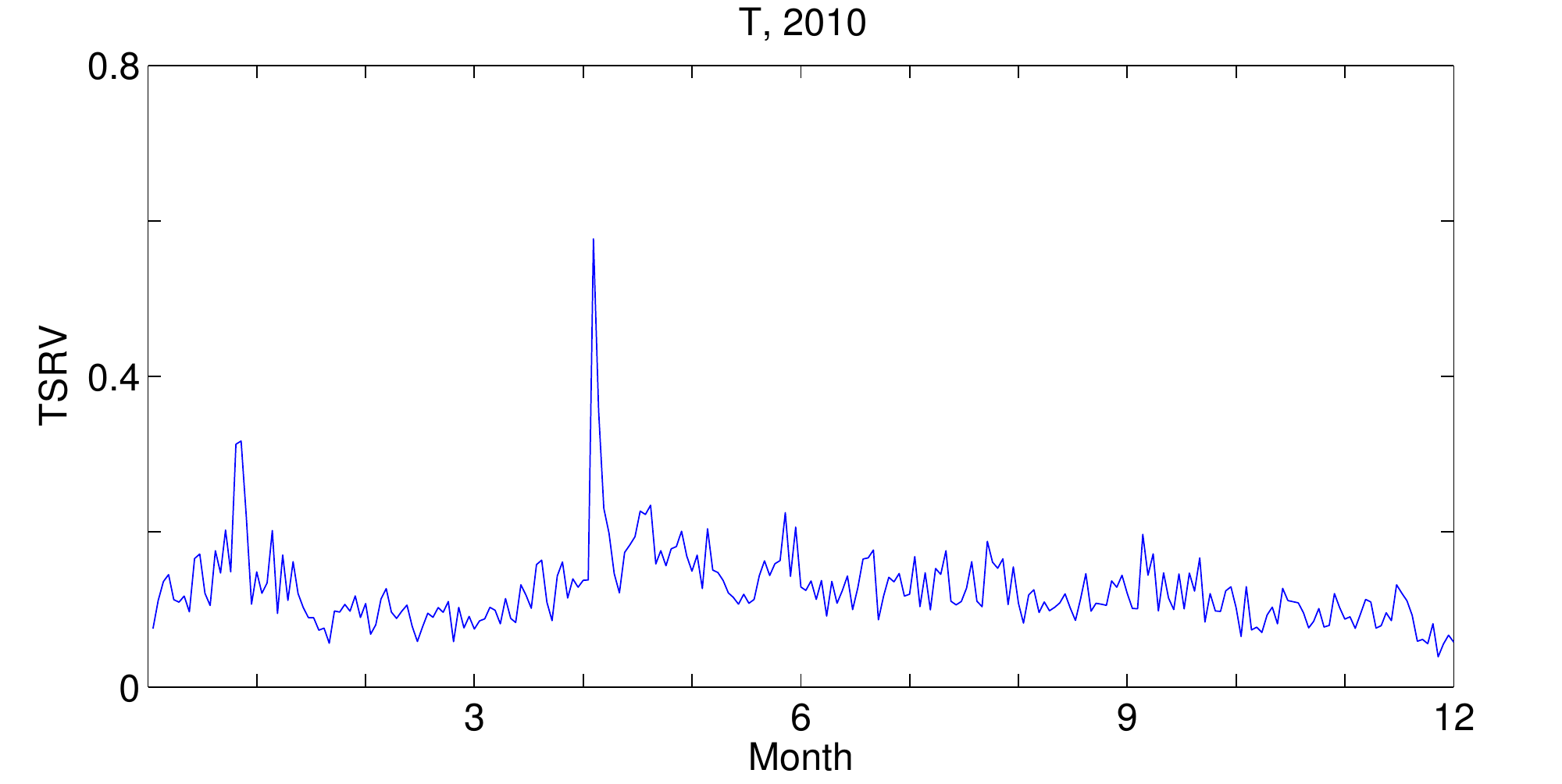}
                \caption{T, TSRV}
        \end{subfigure}
        \centering
        \begin{subfigure}[b]{0.45\textwidth}
                \includegraphics[width=\textwidth]{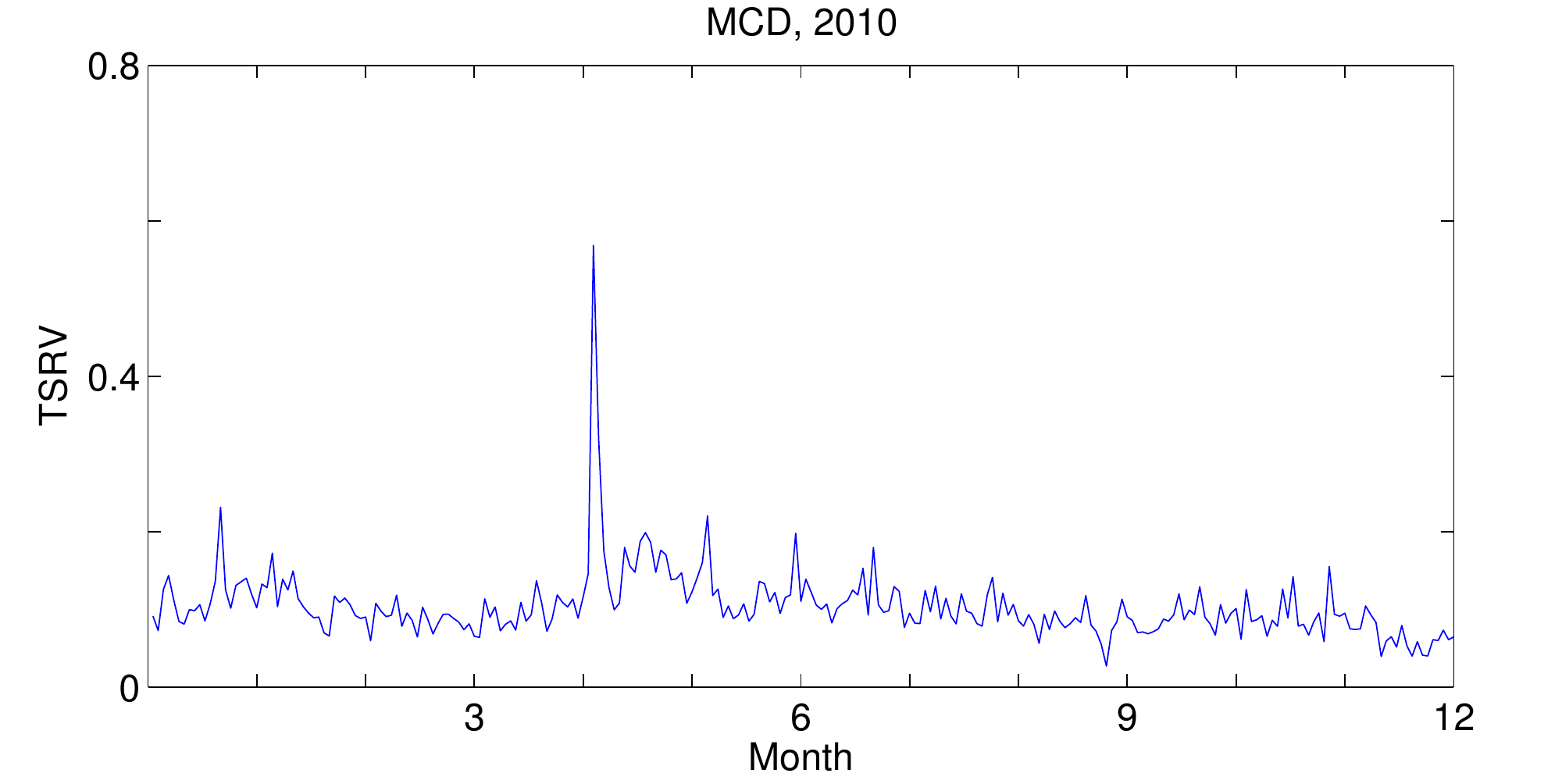}
                \caption{MCD, TSRV}
        \end{subfigure}
        \centering
        \begin{subfigure}[b]{0.45\textwidth}
                \includegraphics[width=\textwidth]{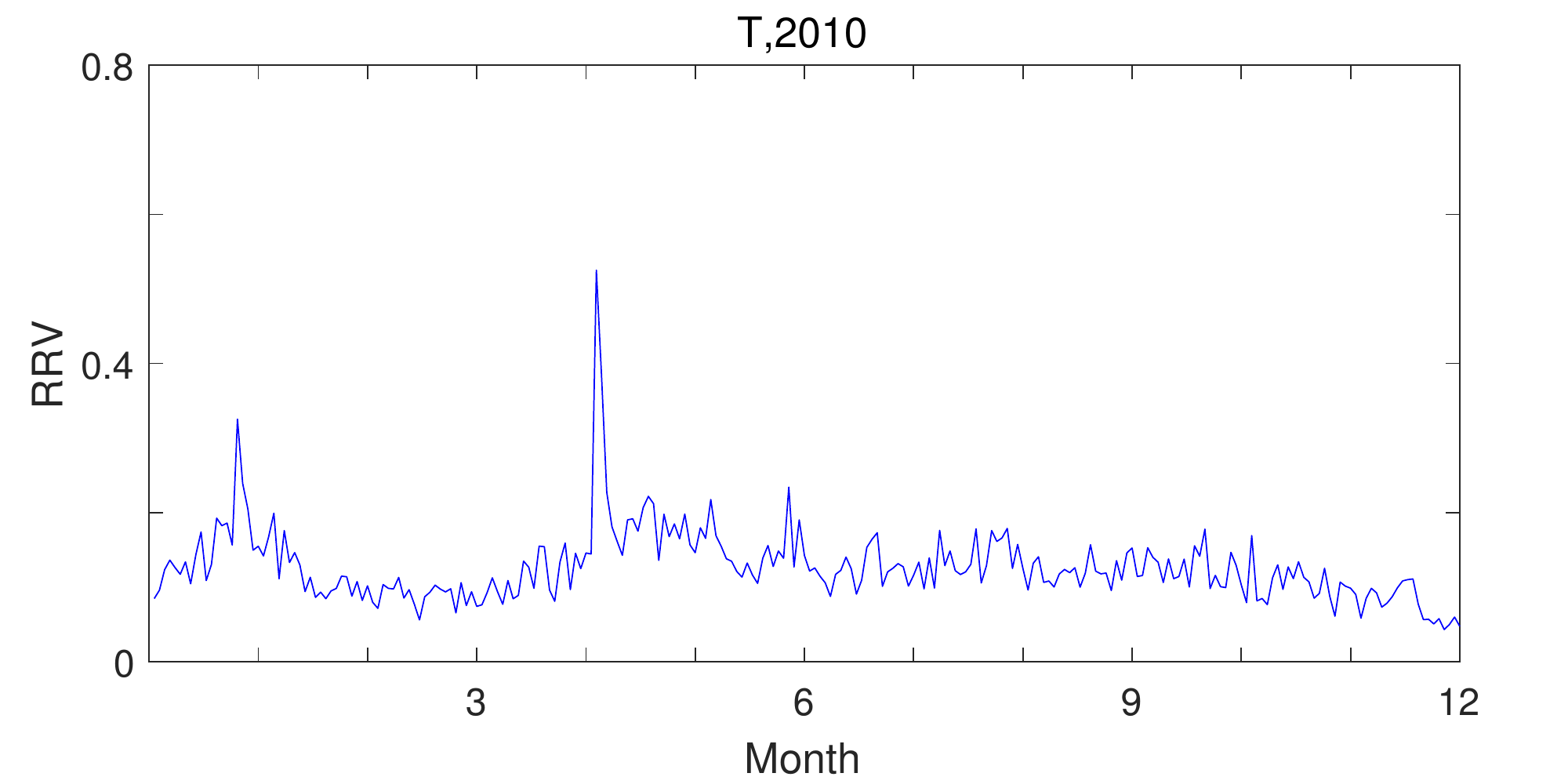}
                \caption{T, RRV}
        \end{subfigure}
        \centering
        \begin{subfigure}[b]{0.45\textwidth}
                \includegraphics[width=\textwidth]{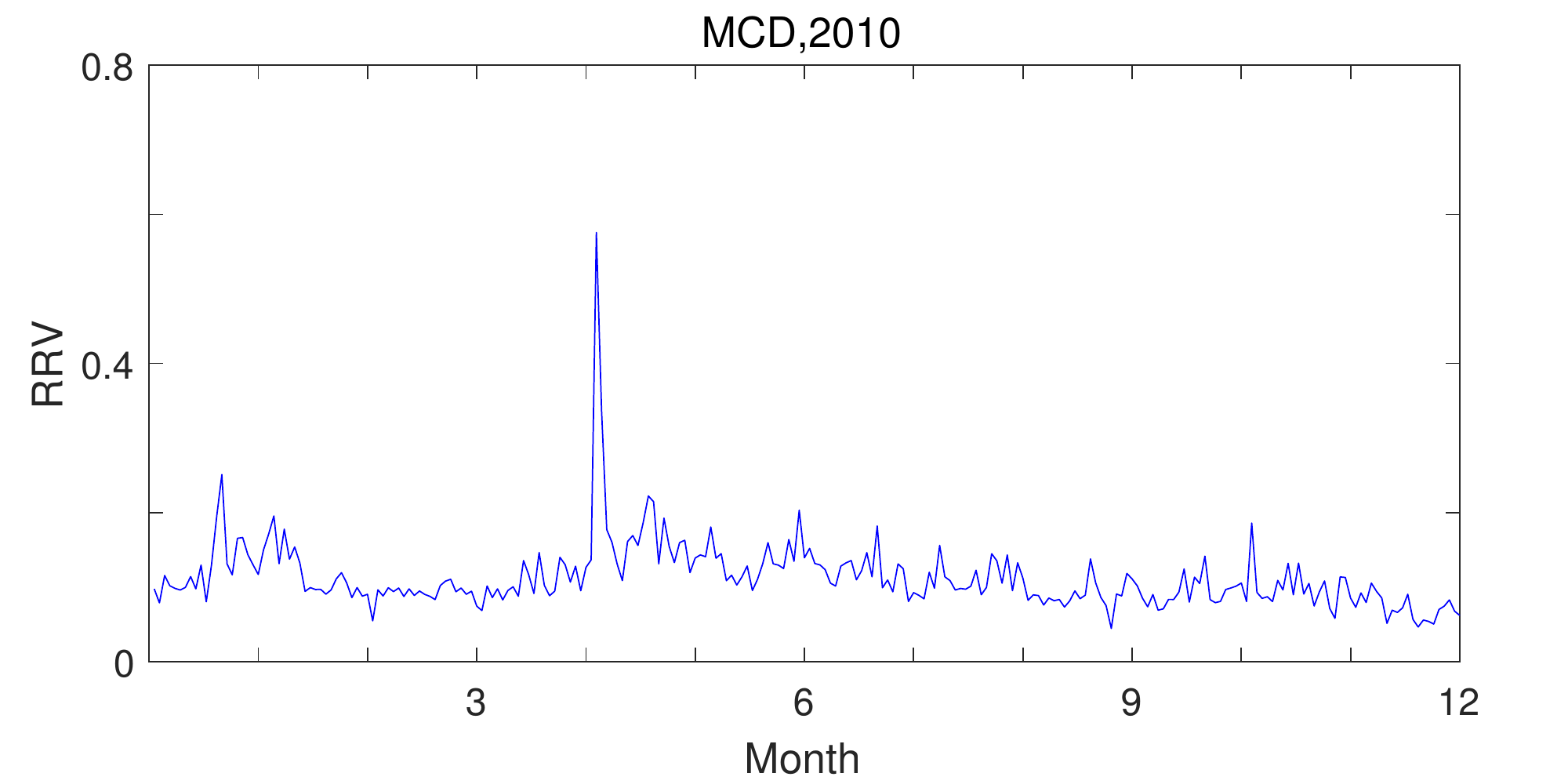}
                \caption{MCD, RRV}
        \end{subfigure}
		\caption{Volatility comparisons with symmetric Hawkes estimation results, T (left) and MCD (right), 2010}\label{Fig:EstimationTMCD2010}
\end{figure}

Figure~\ref{Fig:EstimationGE2008} plots the dynamics of the estimated parameters of GE in 2008, the starting year of the global financial crisis.
The dramatic changes in the Hawkes volatility, TSRV, $\mu, \alpha_s$ and $\beta$ were observed in the beginning of the crisis around August 2008.
The mutual excited parameter, $\alpha_c$, was rather stable.

\begin{figure}
        \centering
        \begin{subfigure}[b]{0.45\textwidth}
                \includegraphics[width=\textwidth]{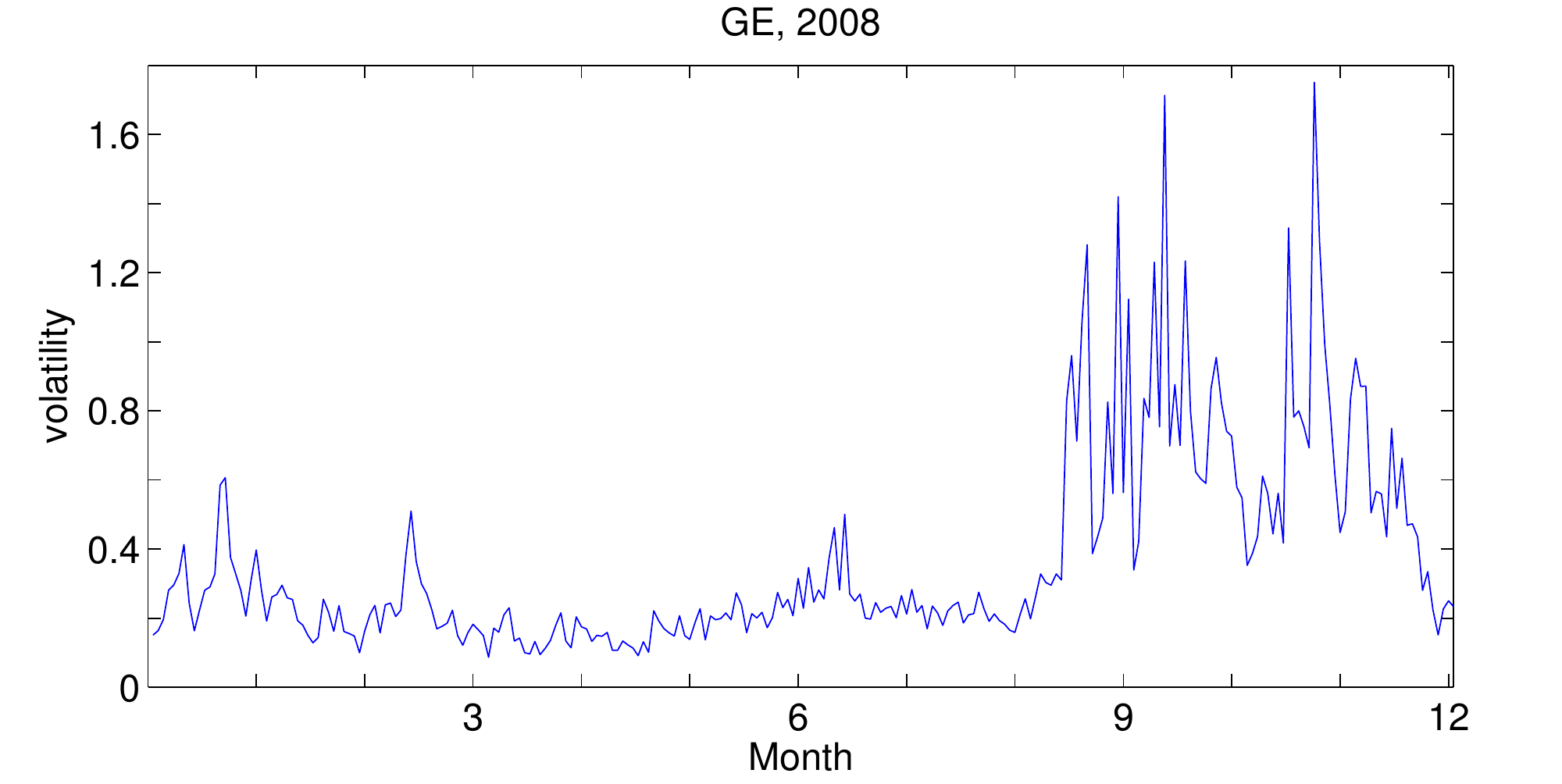}
                \caption{volatility}
                \label{Fig:ge2008_vol}
        \end{subfigure}
        \centering
        \begin{subfigure}[b]{0.45\textwidth}
                \includegraphics[width=\textwidth]{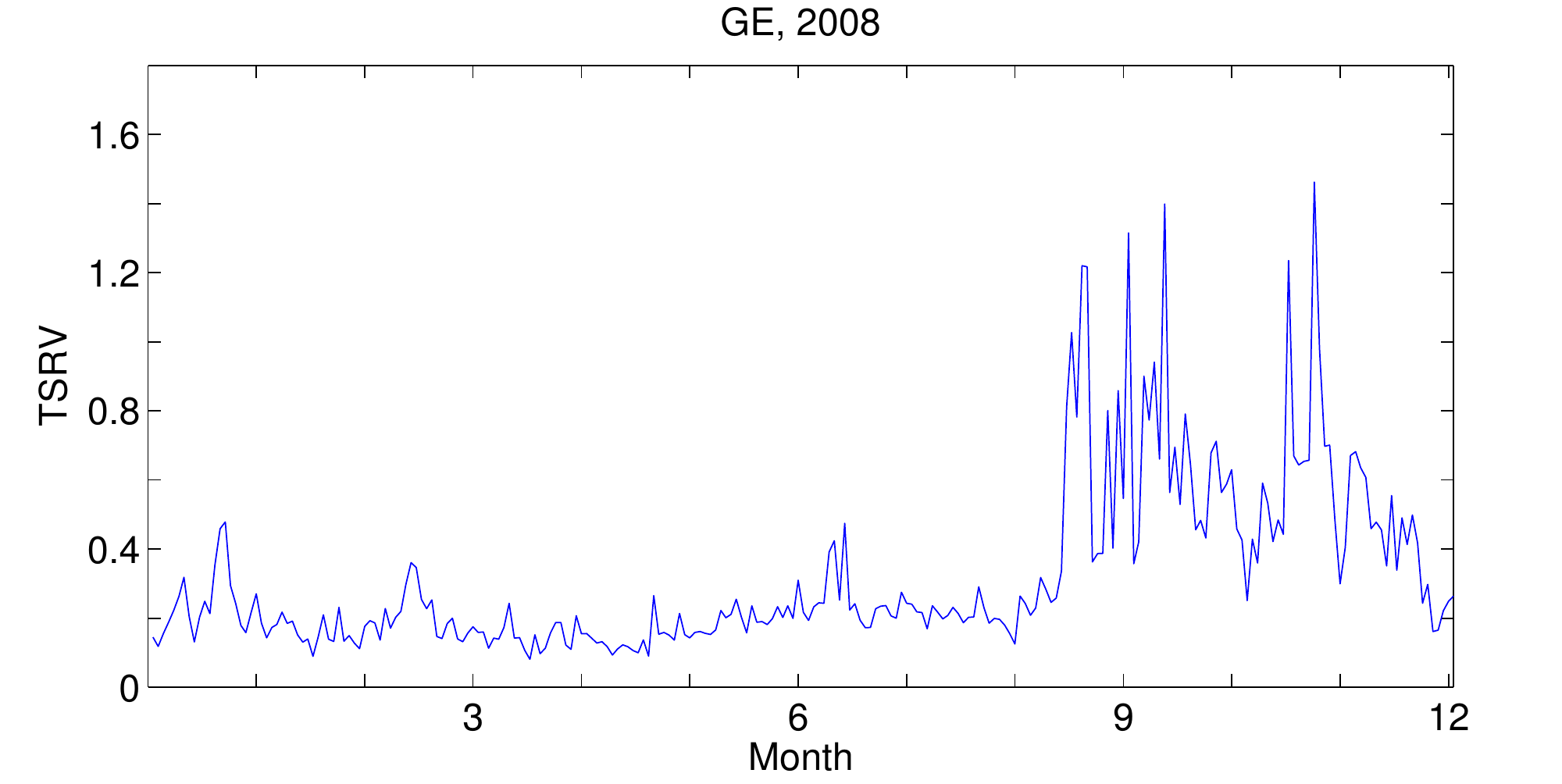}
                \caption{TSRV}
                \label{Fig:ge2008_TSRV}
        \end{subfigure}
	    \centering
        \begin{subfigure}[b]{0.45\textwidth}
                \includegraphics[width=\textwidth]{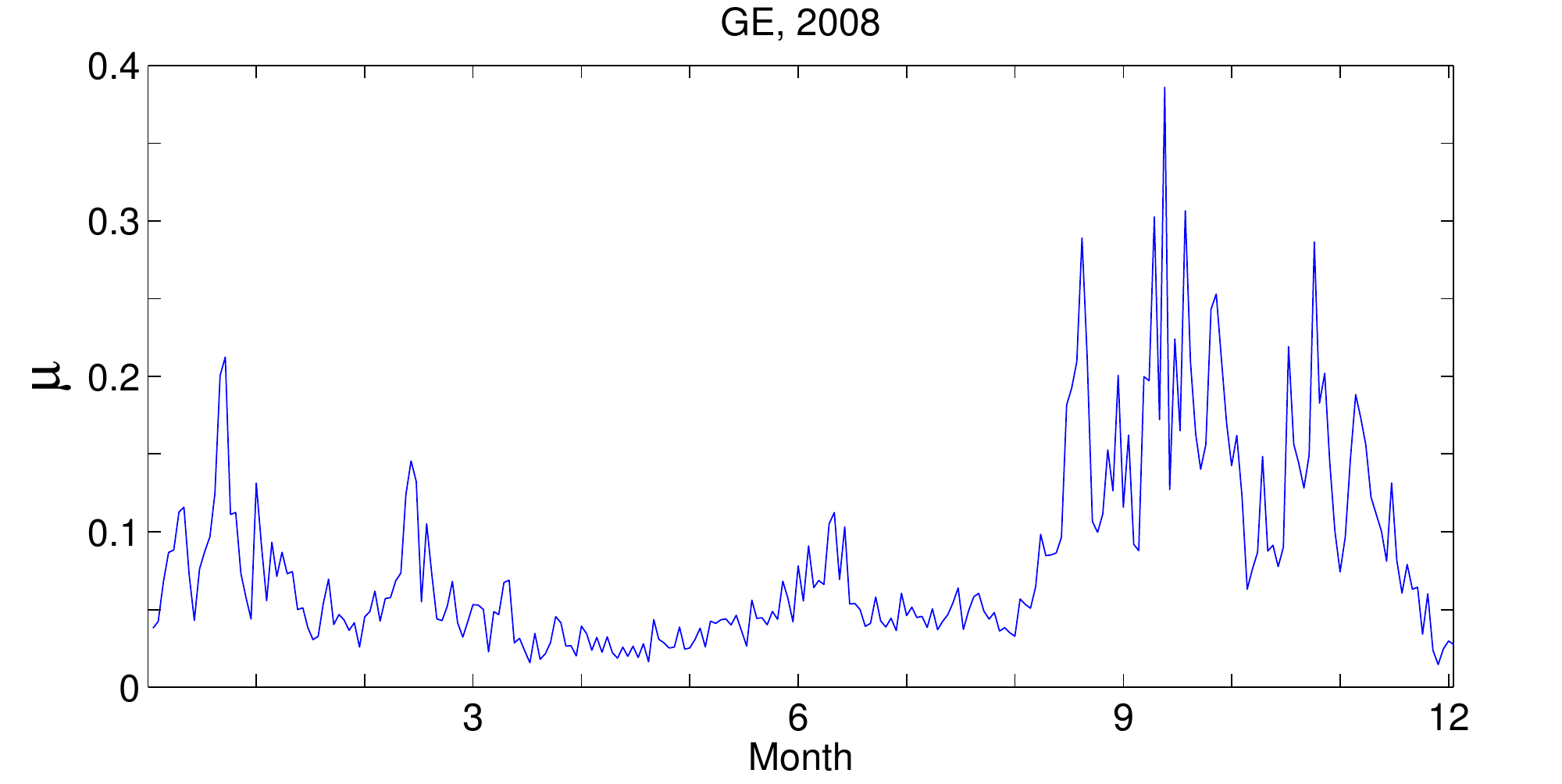}
                \caption{$\mu$}
                \label{Fig:ge2008_mu}
        \end{subfigure}
	    \centering
        \begin{subfigure}[b]{0.45\textwidth}
                \includegraphics[width=\textwidth]{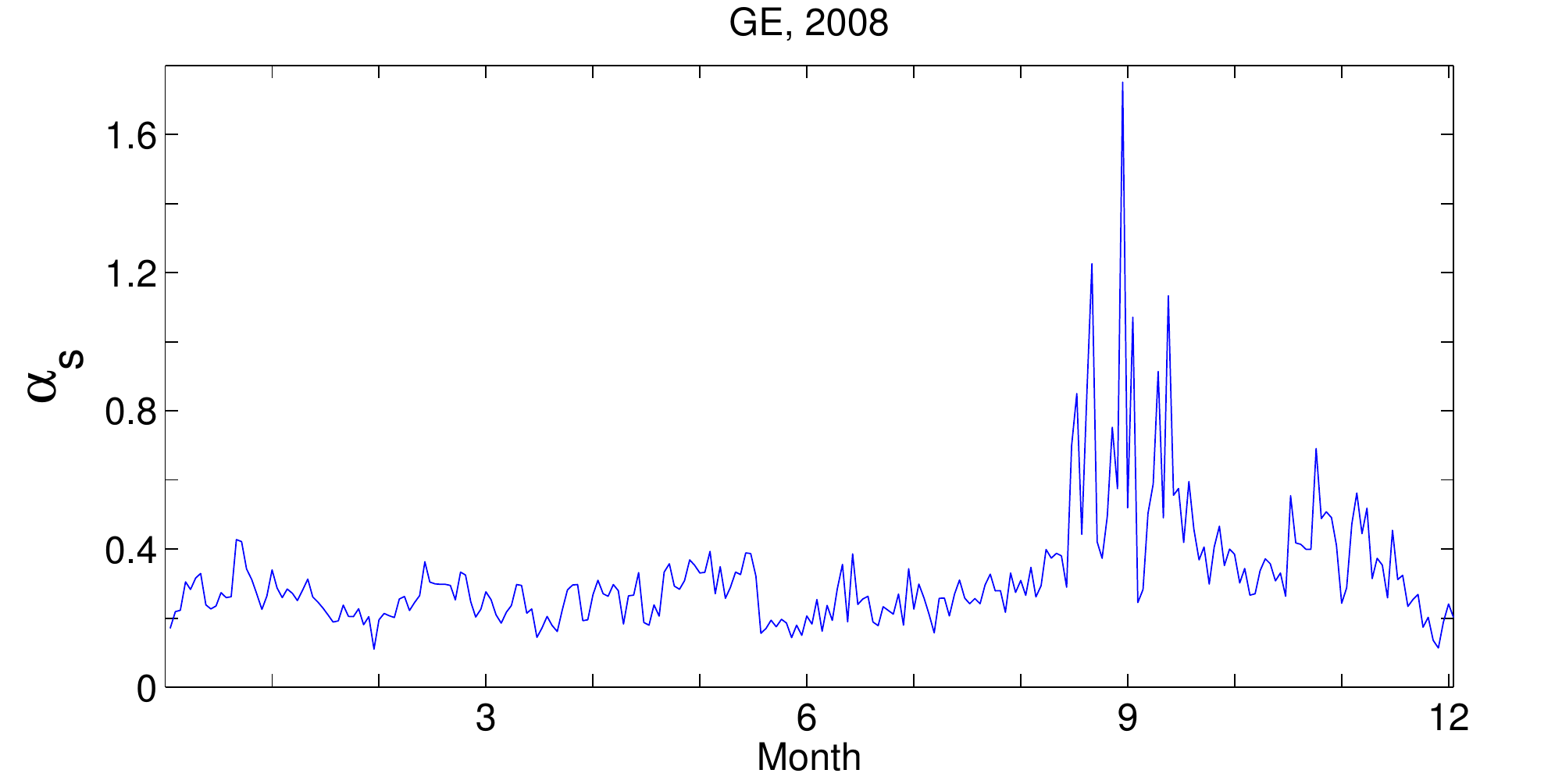}
                \caption{$\alpha_s$}
                \label{Fig:ge2008_alphas}
        \end{subfigure}
	    \centering
        \begin{subfigure}[b]{0.45\textwidth}
                \includegraphics[width=\textwidth]{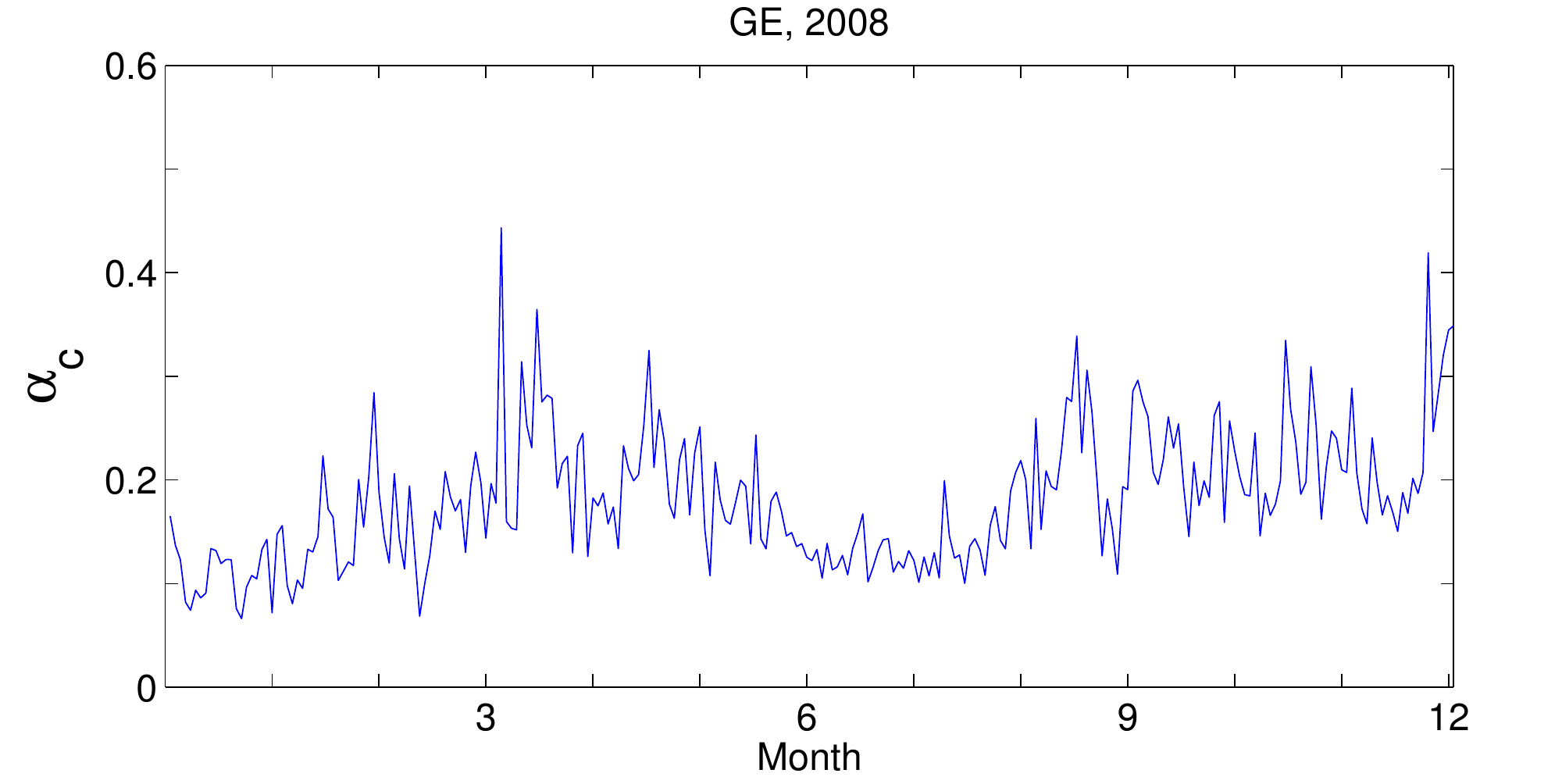}
                \caption{$\alpha_c$}
                \label{Fig:ge2008_alphac}
        \end{subfigure}
	    \centering
        \begin{subfigure}[b]{0.45\textwidth}
                \includegraphics[width=\textwidth]{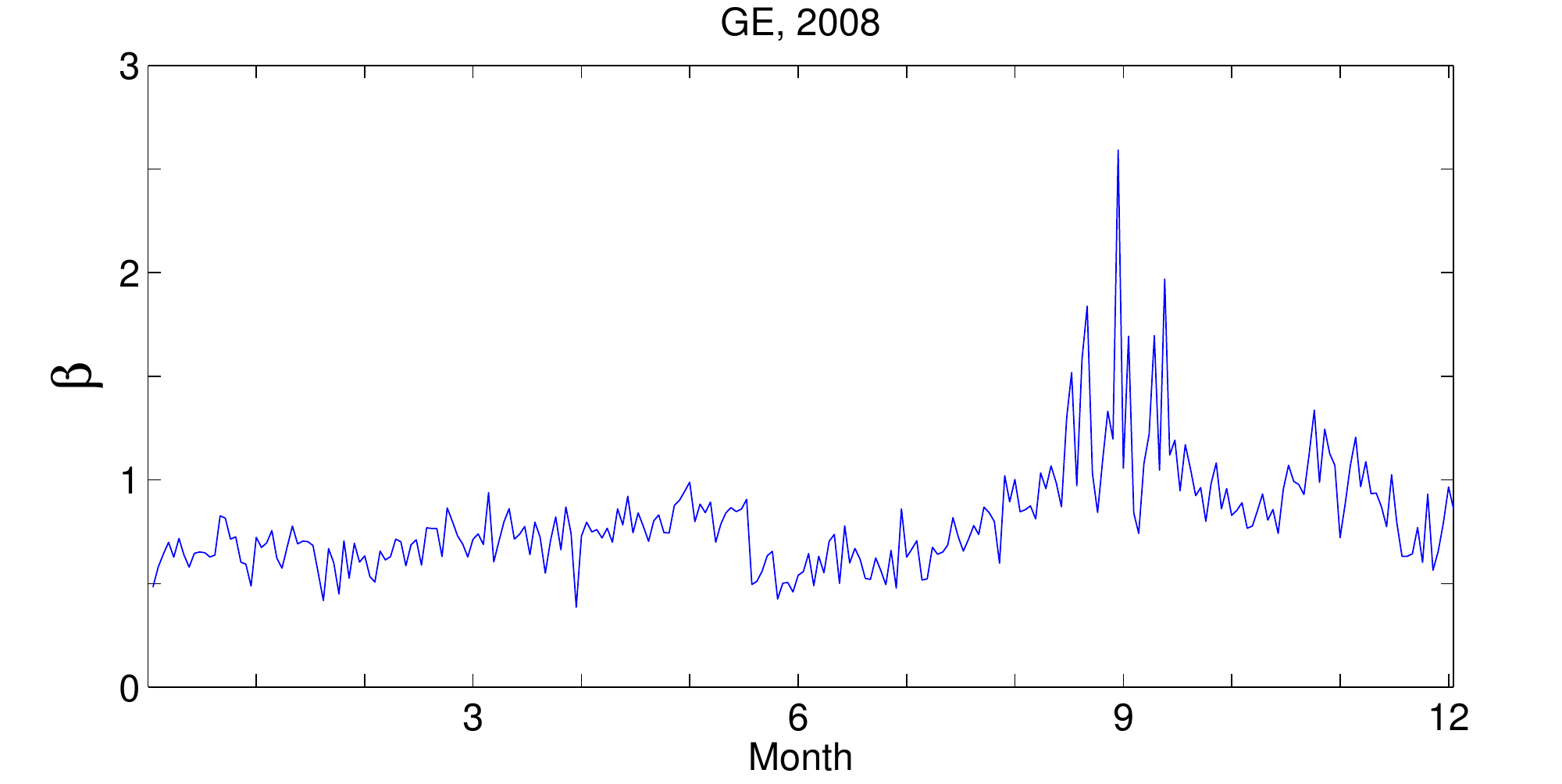}
                \caption{$\beta$}
                \label{Fig:ge2008_beta}
        \end{subfigure}
		\caption{Symmetric Hawkes estimation result, GE, 2008}\label{Fig:EstimationGE2008}
\end{figure}

The volatility estimation results were compared using the Hawkes model and TSRV method in Table~\ref{Table:volatility} with 10 symbols from 2007 to 2011.
In each panel of the table, the mean of the Hawkes volatility and TSRV for given year and mean percentage error are presented.
The volatility estimated by the symmetric Hawkes model is generally larger than the TSRV and the differences between the two volatilities are around 15-25\%.
The reason for the discrepancy between the TSRV and the Hawkes volatility is unknown.
Possible reasons include the intraday variation of the parameters as in the Flash Crash and the restrictions in the parameter condition for the symmetry.
These two issues are examined in the following subsections.

\begin{table}
\caption{Comparison of the volatility estimation by the Hawkes model and realized volatility}\label{Table:volatility}
\centering
\begin{tabular}{ccccccccccc}
\hline
 & BAC & CVX & GE & IBM & JPM & KO & MCD & T & VZ & XOM\\
\hline
2007 \\
H.vol & 0.1725 & 0.2387 & 0.1406 & 0.1674 & 0.2170 & 0.1290 & 0.1494 & 0.1731 & 0.1615 & 0.2327\\
TSRV  & 0.1555 & 0.1834 & 0.1323 & 0.1380 & 0.1899 & 0.1118 & 0.1296 & 0.1615 & 0.1433 & 0.1730\\
MPE(\%) & 14.93 & 20.28 & 12.34 & 17.26 & 15.91 & 15.58 & 15.87 & 14.04 & 14.26 & 23.14\\
\hline
2008 \\
H.vol & 0.7008 & 0.5129 & 0.3646 & 0.3939 & 0.6929 & 0.2583 & 0.3286 & 0.3751 & 0.3806 & 0.4592\\
TSRV  & 0.4880 & 0.3077 & 0.3172 & 0.2658 & 0.4868 & 0.2031 & 0.2440 & 0.3069 & 0.2930 & 0.2827\\
MPE(\%) & 28.14 & 33.17 & 13.57 & 28.41 & 26.61 & 19.60 & 21.34 & 16.09 & 19.97 & 32.53\\
\hline
2009 \\
H.vol & 0.7397 & 0.2608 & 0.3439 & 0.2166 & 0.4701 & 0.1762 & 0.1883 & 0.2407 & 0.2029 & 0.2198\\
TSRV  & 0.5420 & 0.2029 & 0.3367 & 0.1664 & 0.3722 & 0.1509 & 0.1607 & 0.1934 & 0.1810 & 0.1773\\
MPE(\%) & 25.42 & 21.48 & 9.12 & 20.81 & 17.97 & 15.23 & 15.79 & 20.27 & 13.43 & 18.26\\
\hline
2010 \\
H.vol & 0.2869 & 0.1758 & 0.1963 & 0.1395 & 0.2223 & 0.1138 & 0.1258 & 0.1461 & 0.1578 & 0.1603\\
TSRV  & 0.2234 & 0.1376 & 0.1952 & 0.1184 & 0.1985 & 0.1031  & 0.1040 & 0.1255 & 0.1241 & 0.1291\\
MPE(\%) & 21.20 & 21.11 & 11.47 & 17.07 & 13.46 & 13.51 & 18.12 & 15.96 & 17.73 & 18.86\\
\hline
2011 \\
H.vol & 0.3426 & 0.2197 & 0.2389 & 0.1681 & 0.2554 & 0.1334 & 0.1297 & 0.1460 & 0.1543 & 0.1872\\
TSRV  & 0.2648 & 0.1719 & 0.1921 & 0.1334 & 0.2146 & 0.1091 & 0.1122 & 0.1227 & 0.1228 & 0.1553\\
MPE(\%) & 13.79 & 20.46 & 18.78 & 18.49 & 17.06 & 19.24 & 15.73 & 17.47 & 20.38 & 17.38\\
\hline
\end{tabular}
\end{table}

\subsection{Intraday volatility}

One of the interesting applications to modeling the daily price dynamics using the symmetric Hawkes process is that the intraday volatility can be estimated in almost every moment of the day.
This is possible because every arrival time of price change, which are plentiful even during ten minutes, is used and the maximum likelihood estimation is so powerful that the parameters can be estimated with similar or less than ten minutes data.
Figure~\ref{Fig:realvol} shows the dynamics of the intraday volatility of GE with randomly chosen days.
The first estimation of each day was performed using the first ten minutes data of each day.
In this example, it ranged from 10:00 a.m. to 10:10 a.m.

The price movement histories were then updated in every ten minutes and 
the intraday volatilities were re-estimated using the updated data and already existing one.
For the estimation, the reparametrization in Remark~\ref{Rmk:reparm} were used and hence 
the annualized volatility was estimated directly by the maximum likelihood estimation with its numerically computed standard error.
The solid lines in the figure represent the annualized volatilities estimated by the intraday data up to the time and the dotted lines represent the standard errors.
In Figures~\ref{Fig:realvol_ge20110805}~and~\ref{Fig:realvol_ge20111013}, the volatilities are generally large at the beginning of the day and tend to decreases, which is consistent with the seasonality effect in that in the early markets, more trading activities are observed than the middle of the day.

Figure~\ref{Fig:realvol_ge20100506} shows the data for the 2010/05/06 Flash Crash and a dramatic increase was observed in the late part of the day.
Similar behavior is presented in Figure~\ref{Fig:realvol_vz20100506}, which is for the intraday volatility of VZ in the 2010/05/06 Flash Crash.
The real time volatility measurement technique will be very useful in intraday risk managements, because investors can respond to sudden market changes more effectively, if they can compute the exact volatility variation. 

\begin{figure}
        \centering
        \begin{subfigure}[b]{0.45\textwidth}
                \includegraphics[width=\textwidth]{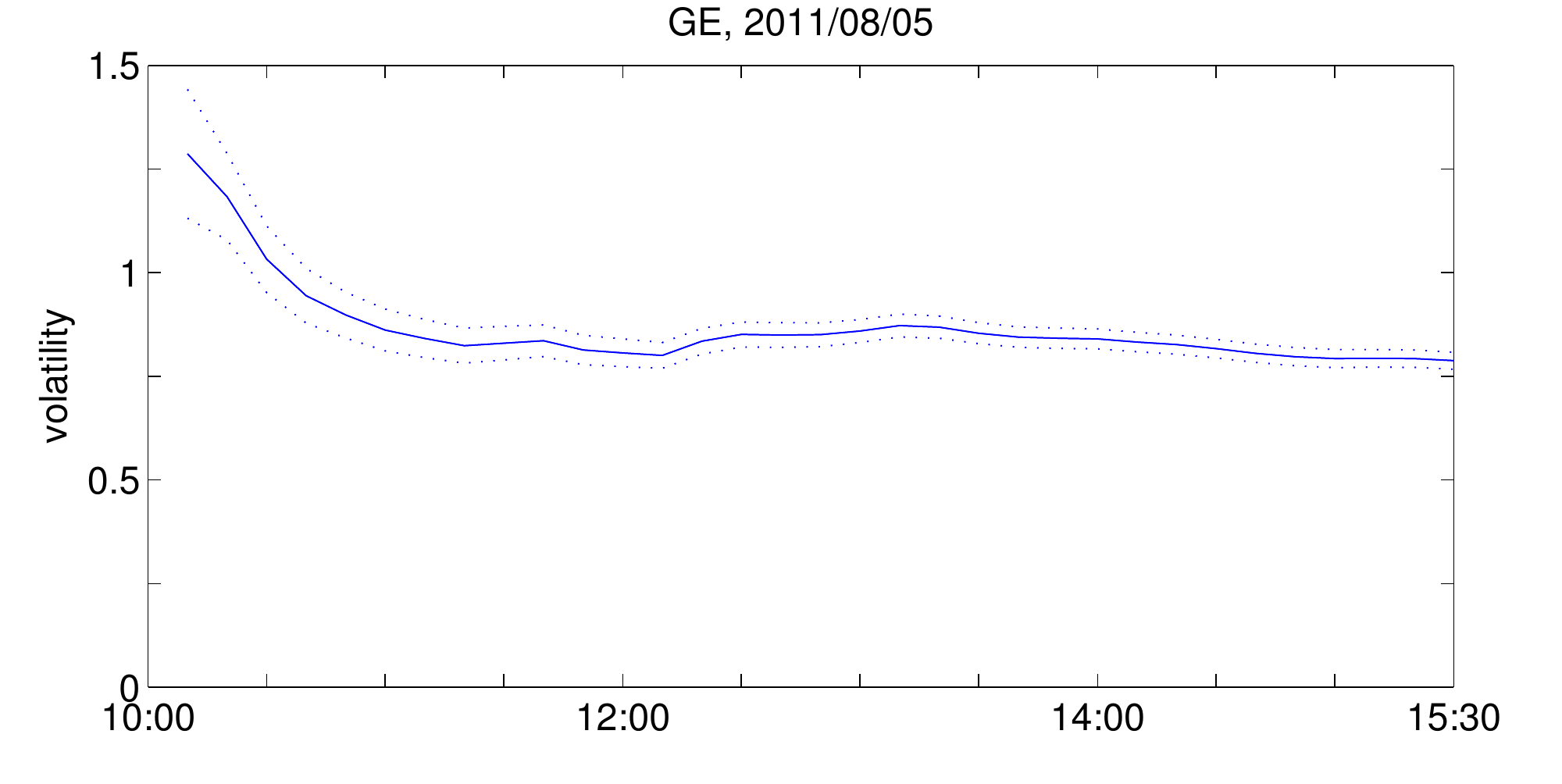}
                \caption{}
                \label{Fig:realvol_ge20110805}
        \end{subfigure}
        \centering
        \begin{subfigure}[b]{0.45\textwidth}
                \includegraphics[width=\textwidth]{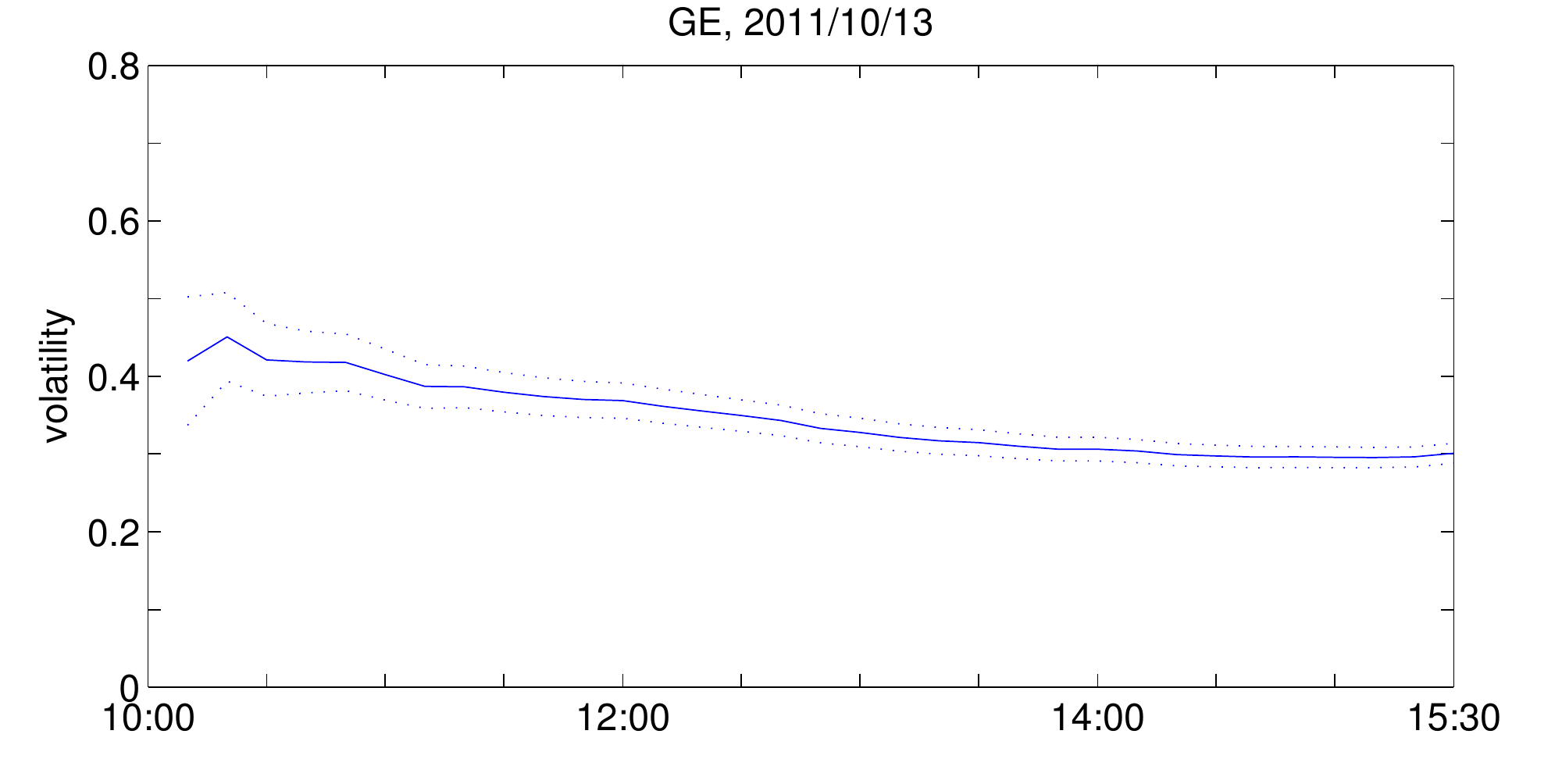}
                \caption{}
                \label{Fig:realvol_ge20111013}
        \end{subfigure}
        \centering
        \begin{subfigure}[b]{0.45\textwidth}
                \includegraphics[width=\textwidth]{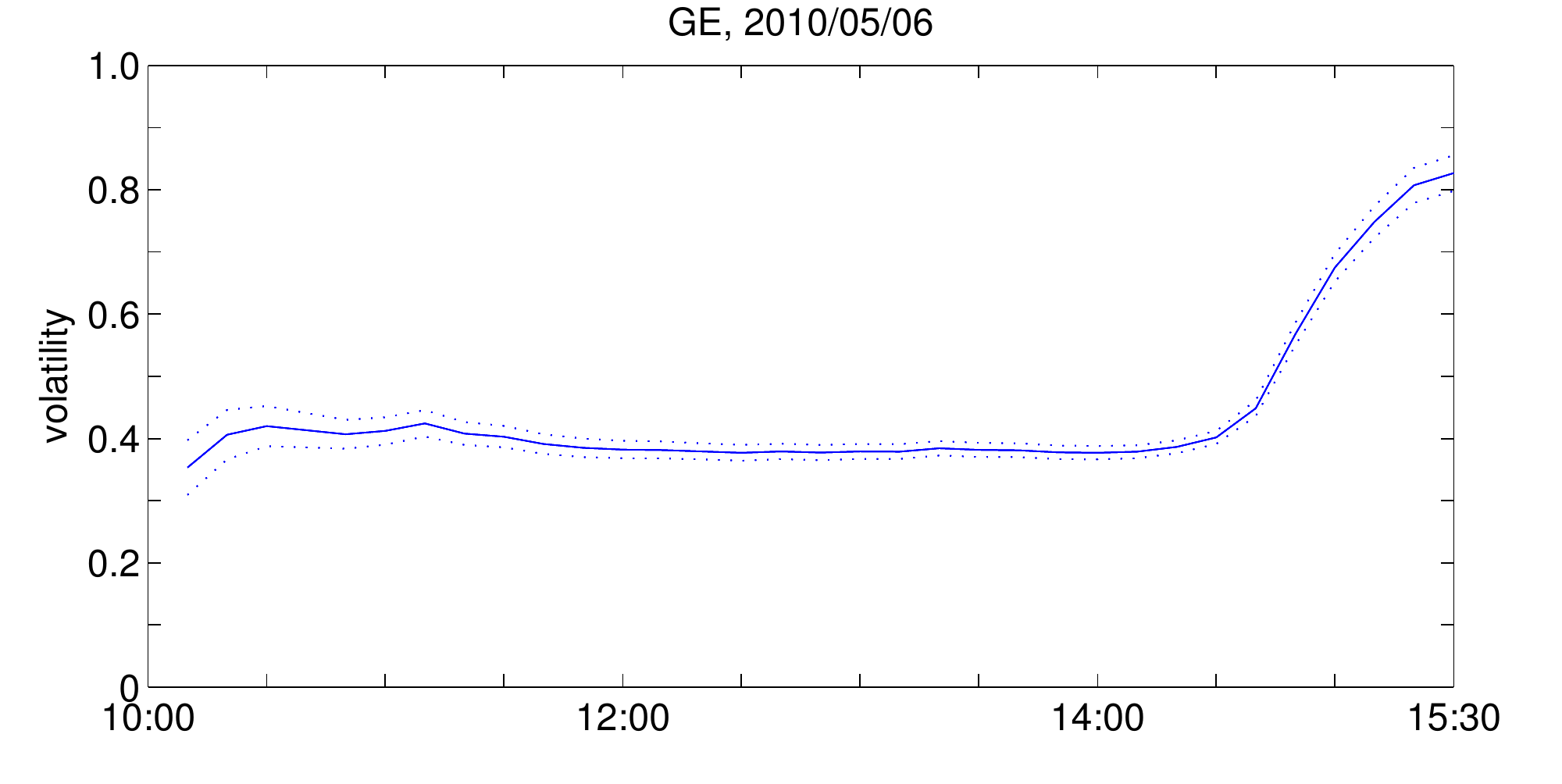}
                \caption{}
                \label{Fig:realvol_ge20100506}
        \end{subfigure}
		\begin{subfigure}[b]{0.45\textwidth}
                \includegraphics[width=\textwidth]{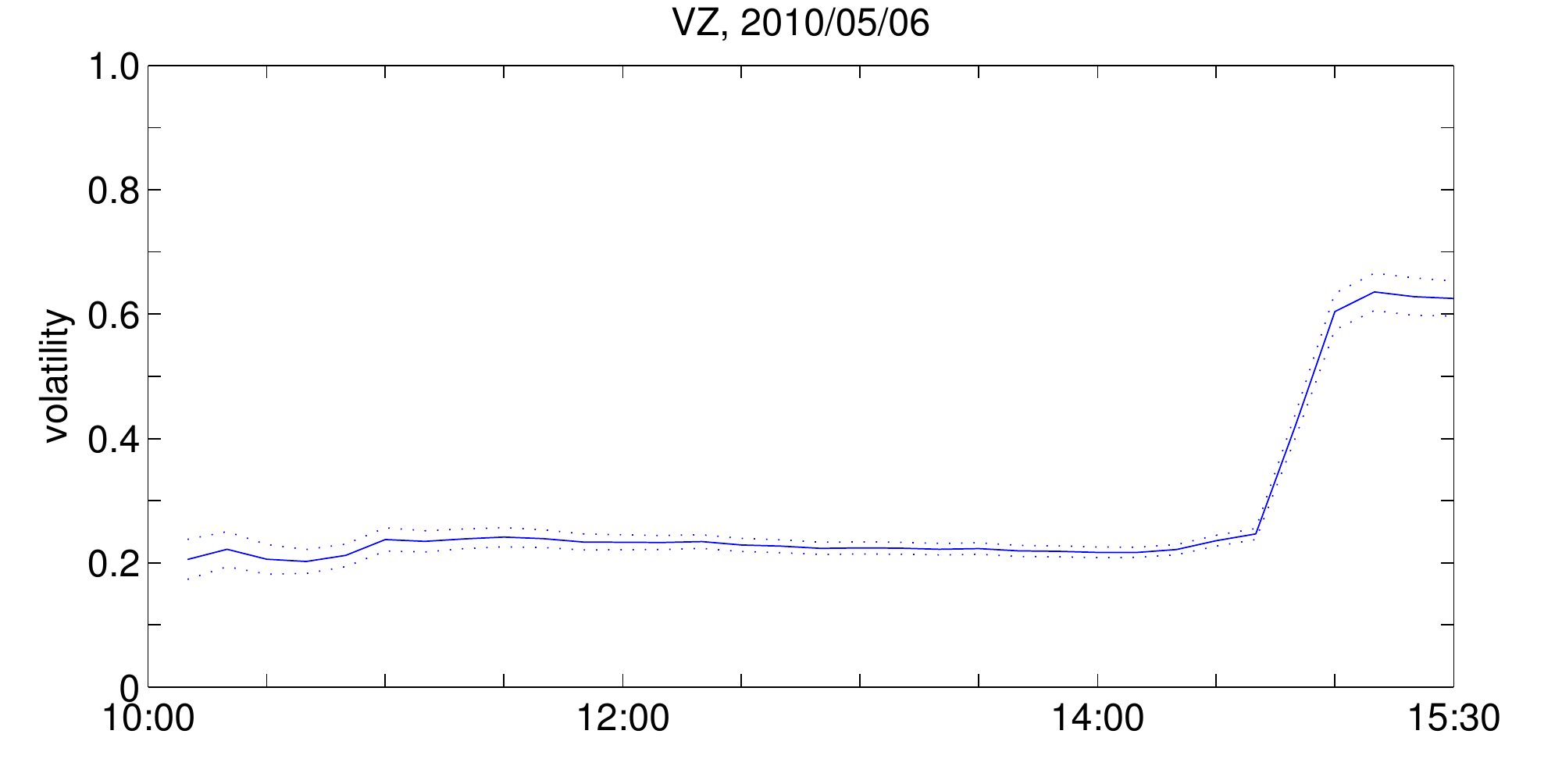}
                \caption{}
                \label{Fig:realvol_vz20100506}
        \end{subfigure}
        \caption{Estimated cumulative intraday volatility (annualized) with every ten minutes update}\label{Fig:realvol}
\end{figure}

\subsection{Fully characterized Hawkes process}
The maximum likelihood estimation of the Hawkes process with full characterization of the parameters as explained in Subsection~\ref{Subsec:full} was performed.
As in the previous subsection, the estimations were employed on a daily basis.
In Figure~\ref{Fig:EstimationGE2011full}, the dynamics of all parameters of the fully characterized Hawkes model of GE in 2011 are plotted.

The dynamics of the parameters $\mu_1, \mu_2$ are close to each other in the mean, as illustrated in Figure~\ref{Fig:ge2011_mu12} which suggest that $\mu_1 = \mu_2$ in the long run sense.
Similarly, each pair of parameters of $(\alpha_{11}, \alpha_{22})$, $(\alpha_{12}, \alpha_{21})$, $(\beta_{11}, \beta_{22})$, and $(\beta_{12}, \beta_{21})$ are close to each other in the mean.
The dynamics of the parameters $\beta_{12}$ and $\beta_{21}$ fluctuate more than $\beta_{11}$ and $\beta_{22}$ over time.
The sample means of $\beta_{12}$ and $\beta_{21}$ of GE in 2011 are quite close to $\beta_{11}$ and $\beta_{22}$ as reported in panel A of Table~\ref{Table:GE2011_full}.
The row `std.' in the table is the sample standard deviation of the time series of each parameter over the time period.

On the other hand, the parameters $\beta_{ij}$ are not always close to among others in the mean.
In the panel B of the table which presents the estimates of the parameters of XOM in 2008, 
the estimates of $\beta_{11}$ and $\beta_{22}$ are close to each other and similarly, the estimates of $\beta_{12}$ and $\beta_{21}$ are close to each other in the mean, 
but the estimates of $\beta_{11}$ and $\beta_{12}$ are significantly different in the mean.
Similarly, the difference in the estimates of $\beta_{21}$ and $\beta_{22}$ are significant.
In this case, the self-excited effects are less persistent than the mutually excited effects.

\begin{figure}
        \centering
        \begin{subfigure}[b]{0.5\textwidth}
                \includegraphics[width=\textwidth]{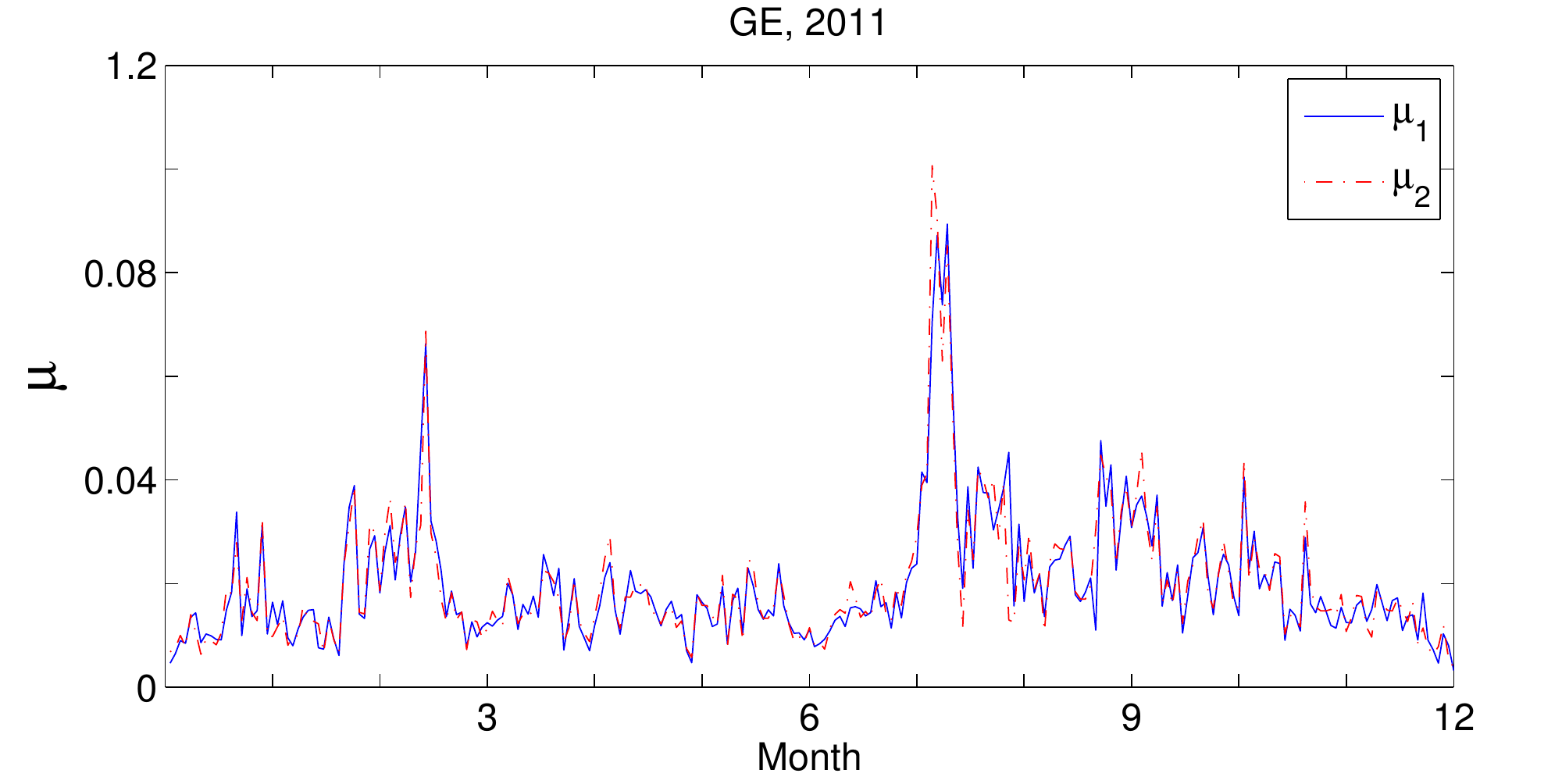}
                \caption{$\mu_1$ and $\mu_2$}
                \label{Fig:ge2011_mu12}
        \end{subfigure}
        \centering
        \begin{subfigure}[b]{0.5\textwidth}
                \includegraphics[width=\textwidth]{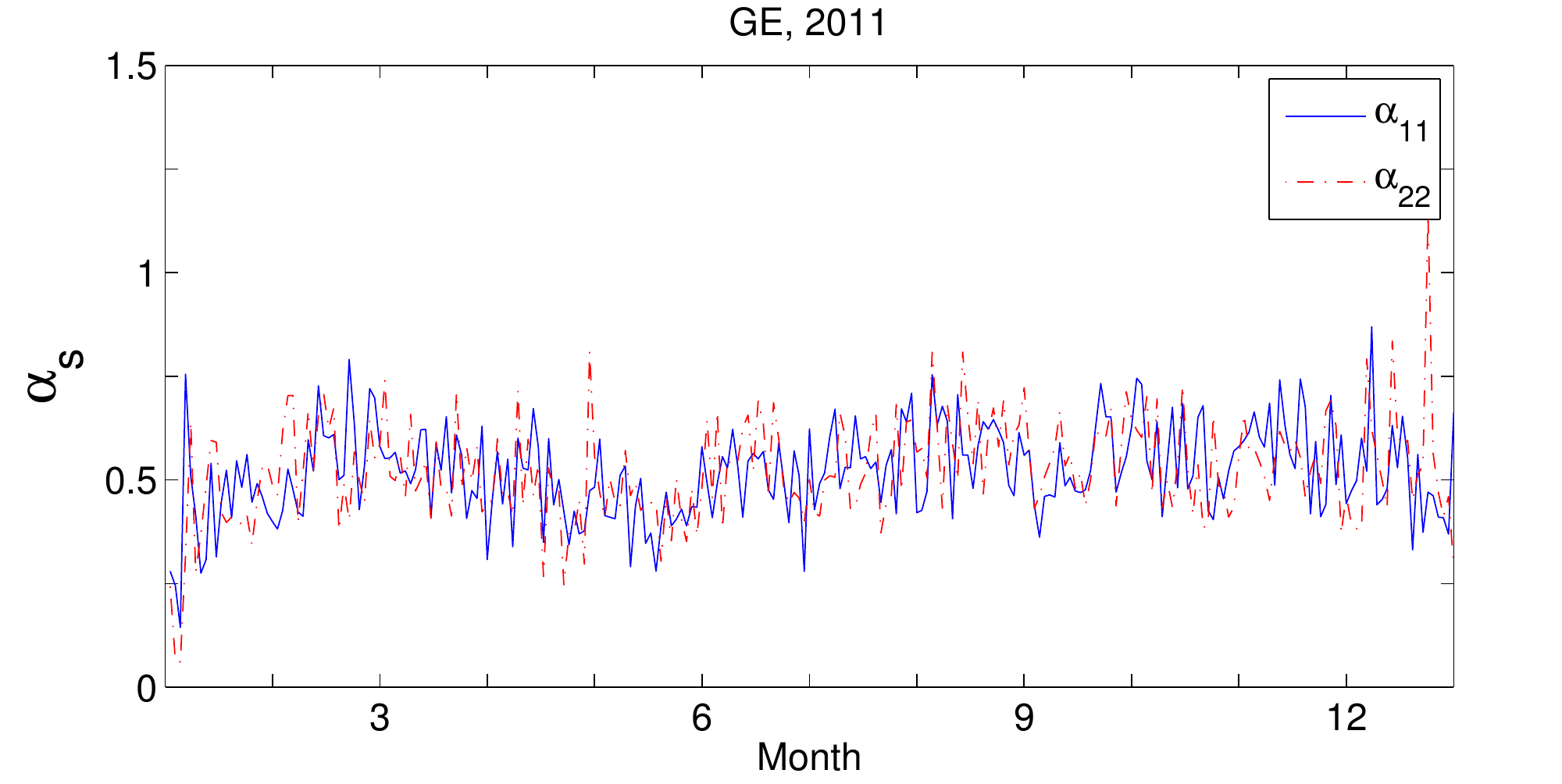}
                \caption{$\alpha_{11}$ and $\alpha_{22}$}
                \label{Fig:ge2011_alpha1122}
        \end{subfigure}
	    \centering
        \begin{subfigure}[b]{0.5\textwidth}
                \includegraphics[width=\textwidth]{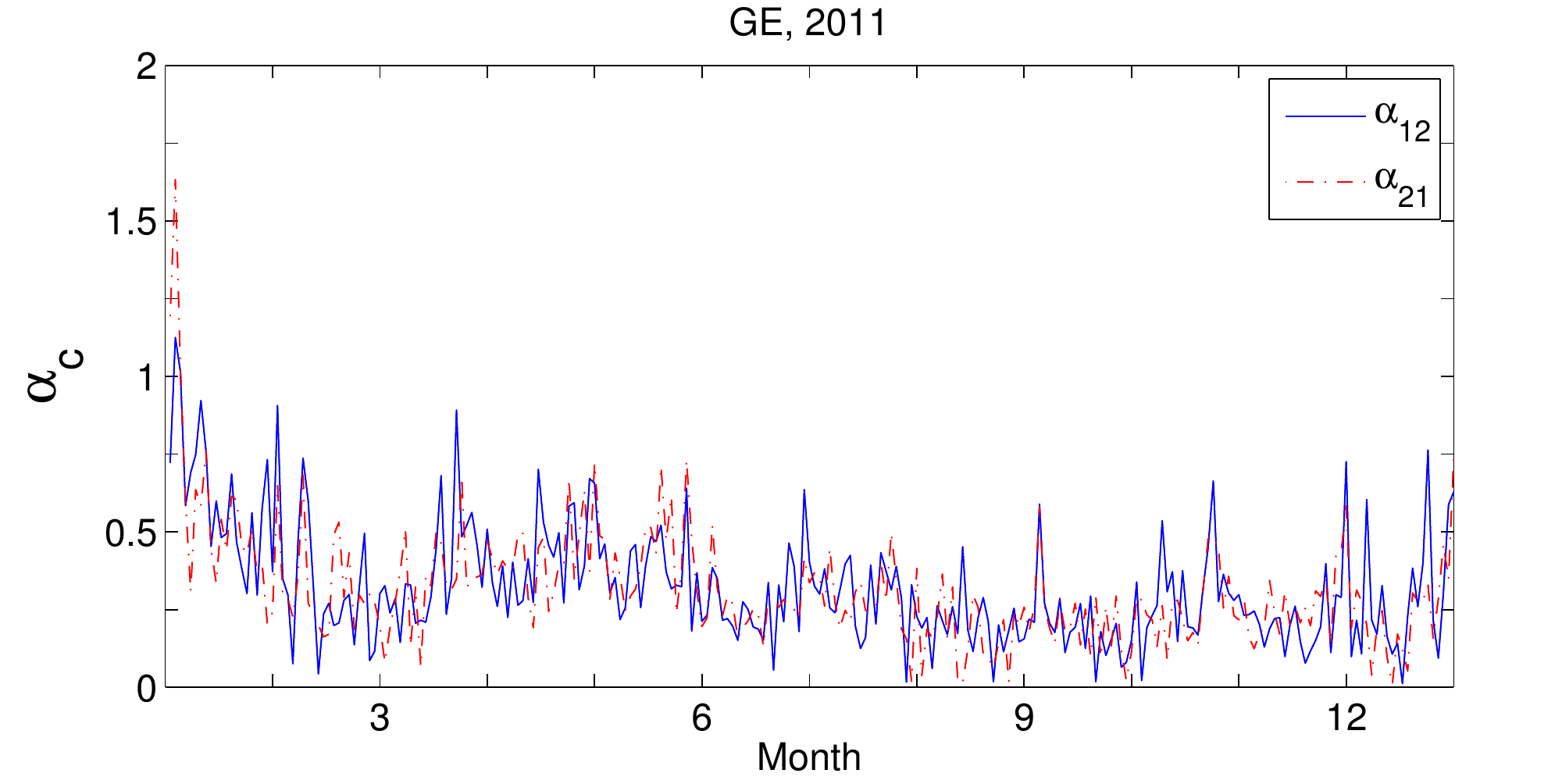}
                \caption{$\alpha_{12}$ and $\alpha_{21}$}
                \label{Fig:ge2011_alpha1221}
        \end{subfigure}
	    \centering
        \begin{subfigure}[b]{0.5\textwidth}
                \includegraphics[width=\textwidth]{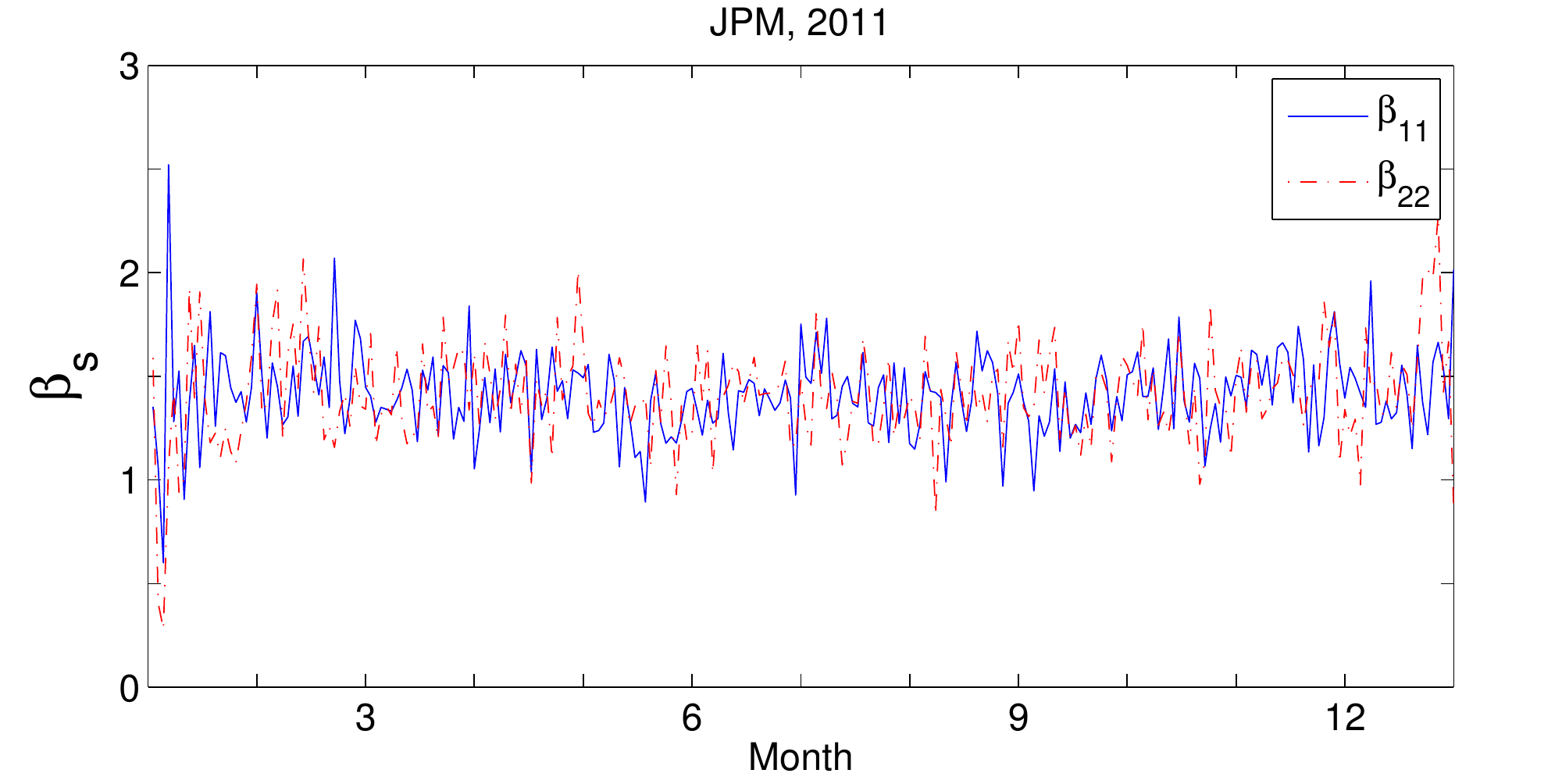}
                \caption{$\beta_{11}$ and $\beta_{22}$}
                \label{Fig:ge2011_beta1122}
        \end{subfigure}
	    \centering
        \begin{subfigure}[b]{0.5\textwidth}
                \includegraphics[width=\textwidth]{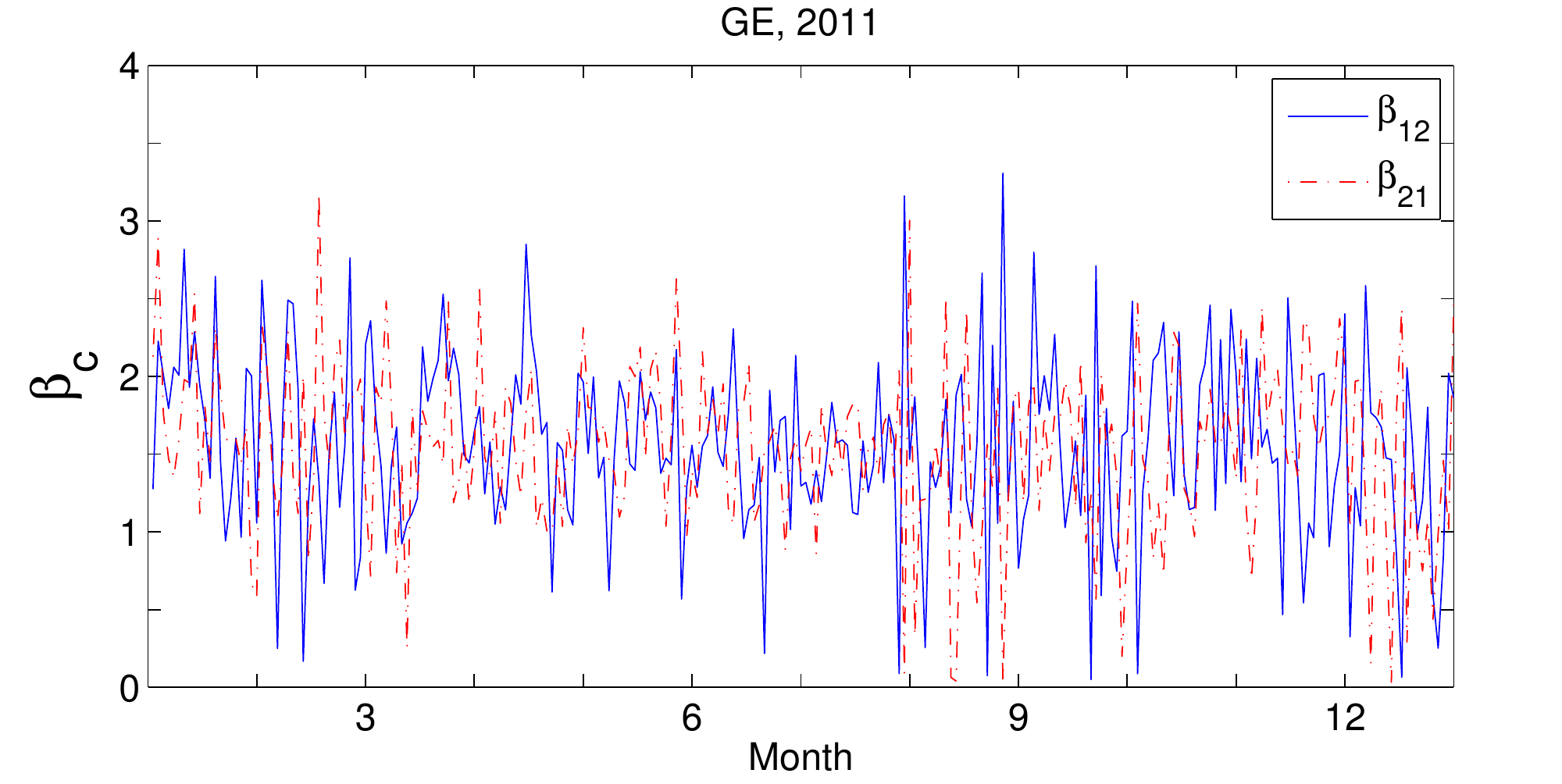}
                \caption{$\beta_{12}$ and $\beta_{21}$}
                \label{Fig:ge2011_beta1221}
        \end{subfigure}
		\caption{Estimation result with the fully characterized Hawkes, GE, 2011}\label{Fig:EstimationGE2011full}
\end{figure}

\begin{table}
\caption{Estimation result of fully characterized self and mutually excited Hawkes process, GE, 2011 in panel A and XOM, 2008 in panel B}\label{Table:GE2011_full}
\centering
\begin{tabular}{ccccccccccccc}

\hline
 & $\mu_1$ & $\mu_2$ & $\alpha_{11}$ & $\alpha_{22}$& $\alpha_{12}$& $\alpha_{21}$ & $\beta_{11}$ &  $\beta_{22}$ & $\beta_{12}$ & $\beta_{21}$\\
\hline
A \\
mean & 0.0198 & 0.0199 & 0.5196 & 0.5228 & 0.3235 & 0.3165 & 1.4145 & 1.4128 & 1.5574 & 1.5378 \\
std. & 0.0124 & 0.0128 & 0.1108 & 0.1241 & 0.1902 & 0.1875 & 0.2097 & 0.2463 & 0.5856 & 0.5402 \\
\hline
B \\
mean & 0.1886 & 0.1727 & 1.0594 & 0.9904 & 0.1369 & 0.1288 & 1.7648 & 1.6916 & 0.8972 & 0.7329\\
std. & 0.1187 & 0.1196 & 0.3274 & 0.3348 & 0.0850 & 0.0786 & 0.4478 & 0.4480 & 0.6104 & 0.5197\\
\hline
\end{tabular}
\end{table}

\subsection{Diffusion parameter}

The parameters of the diffusion model introduced in subsection~\ref{Subsect:diffusion} were estimated using the simulated likelihood estimation explained in subsection~\ref{Subsect:SLE}.
The results of GE, January 2011 are presented in Table~\ref{Table:diff_GE2011} and 
are similar to the results with the Hawkes model in Table~\ref{Table:GE2011}.
In addition, the estimates of the diffusion model with $\rho$ are presented in Table~\ref{Table:diff_GE2011_2}.
The estimates in each model show the similar patterns over the period.

The diffusion estimation has its own pros and cons.
In our setting, because the observed values of the price over one minute intervals are only used, 
which is in contrast to the Hawkes modeling where all times of price changes are used,
the diffusion estimator is less efficient than the Hawkes estimator.
In addition, by the nature of the simulated likelihood estimation, 
it takes longer time to compute the likelihoods and the computed results are not deterministic but depend on the random numbers generated by computers.
On the other hand, 
when the observing times of a price process are limited, i.e., the prices are only available at each one minute interval,
the diffusion model and its estimation are a feasible alternate choice to examine the nature of the price movements in high-frequency.

\begin{table}
\caption{Diffusion model estimation result, GE, January 2011}\label{Table:diff_GE2011}
\centering
\begin{tabular}{cccccccccc}
\hline
Date & $m$ & $a_s$ & $a_c$ & $b$ & volatility\\
\hline
0103 & 0.0137 & 0.1225 & 1.9872 & 2.5512 & 0.1392 \\
0104 & 0.0091 & 0.0805 & 2.9568 & 3.3122 & 0.1546 \\
0105 & 0.0089 & 0.7037 & 2.7975 & 3.5694 & 0.3661 \\
0106 & 0.0082 & 1.0110 & 2.1844 & 3.3689 & 0.2511 \\
0107 & 0.0196 & 0.8574 & 1.3001 & 2.5649 & 0.2557 \\
0110 & 0.0140 & 0.7615 & 0.7941 & 2.3103 & 0.1739 \\
0111 & 0.0069 & 0.5888 & 2.1067 & 2.9101 & 0.1710 \\
0112 & 0.0046 & 0.3888 & 1.5751 & 2.4639 & 0.1247 \\
0113 & 0.0105 & 0.6536 & 0.7633 & 2.3407 & 0.1314 \\
0114 & 0.0043 & 0.9421 & 1.2195 & 2.2284 & 0.2833 \\
0118 & 0.0093 & 0.7737 & 0.7164 & 1.7152 & 0.2325 \\
0119 & 0.0174 & 0.5661 & 1.1345 & 2.3551 & 0.1714 \\
0120 & 0.0173 & 0.7892 & 1.0437 & 2.6670 & 0.1844 \\
0121 & 0.0342 & 0.5358 & 1.1074 & 2.3377 & 0.2215 \\
0124 & 0.0142 & 0.4669 & 0.8490 & 2.0244 & 0.1349 \\
0125 & 0.0208 & 0.7222 & 0.9767	& 2.1042 & 0.2308 \\
0126 & 0.0135 & 0.6058 & 1.0571 & 2.4183 & 0.1387 \\
0127 & 0.0157 & 0.6458 & 0.8145	& 1.8174 & 0.2043 \\
\hline
\end{tabular}
\end{table}

\begin{table}
\caption{Diffusion model estimation result with $\rho$, GE, January 2011}\label{Table:diff_GE2011_2}
\centering
\begin{tabular}{cccccccccc}
\hline
Date & $m$ & $a_s$ & $a_c$ & $b$ & $\rho$\\
\hline
0103 & 0.0140 & 0.0191 & 1.9107 & 2.4863 & 0.2010 \\
0104 & 0.0094 & 0.4947 & 2.3869 & 3.2896 & 0.1106 \\
0105 & 0.0083 & 0.4363 & 3.0567 & 3.5328 & -0.0338 \\
0106 & 0.0097 & 0.7811 & 2.3254 & 3.4391 & -0.0952 \\
0107 & 0.0196 & 0.8110 & 1.5196 & 2.6318 & -0.3633 \\
0110 & 0.0067 & 0.5657 & 1.7872 & 2.3712 & 0.0397 \\
0111 & 0.0072 & 0.7457 & 1.7410 & 2.5917 & 0.0229 \\
0112 & 0.0116 & 0.4343 & 1.4936 & 2.7054 & -0.1048 \\
0113 & 0.0114 & 0.7440 & 0.8413 & 2.2652 & -0.0709 \\
0114 & 0.0051 & 0.8205 & 1.2866 & 2.1705 & -0.0087 \\
0118 & 0.0045 & 0.1585 & 1.6107 & 1.7917 & 0.1269 \\
0119 & 0.0185 & 0.4916 & 1.1835 & 2.3827 & -0.2130 \\
0120 & 0.0153 & 0.5343 & 0.9029 & 1.9994 & 0.1086 \\
0121 & 0.0164 & 0.6817 & 1.1065 & 1.9896 & -0.1220 \\
0124 & 0.0148 & 0.5293 & 1.0546 & 1.9986 & -0.0055 \\
0125 & 0.0252 & 0.6536 & 0.9445 & 2.0343 & 0.1237\\
0126 & 0.0156 & 0.5248 & 1.0506 & 2.3678 & 0.0473 \\
0127 & 0.0167 & 0.5304 & 0.8537 & 1.6645 & 0.0800 \\
\hline
\end{tabular}
\end{table}

\section{Conclusion}\label{Sect:concl}

This paper examined the empirical performance of the symmetric Hawkes process which is a simple model to consider for both clustering property and market microstructure noise in volatility estimation using the stock prices in the S\&P 500.
The daily dynamics of the Hawkes parameters, the comparison between the Hawkes volatility and the realized volatility and the intraday volatility estimation procedure are discussed.
The diffusion analogy of the symmetric Hawkes model was also proposed to provide the analytical simplicity for computing the distributional properties.
The diffusion model also incorporates the clustering effect, market microstrucutre noise, in addition to asymmetric property.

The volatility could be estimated over a relatively short time interval with the Hawkes model and the intraday variations of volatility was demonstrated.
A comparison between the Hawkes volatility and TSRV showed the difference around 15-25\%.
The parameter restriction, asymmetry and parameter variations might be the cause of the discrepancy but more work will be needed to understand the exact reason. 
The estimation results of the diffusion model were provided where similar patterns to the Hawkes model parameters were observed.

\bibliography{Hawkes}
\bibliographystyle{apalike}

\appendix

\section{Expected intensity of fully characterized Hawkes model}\label{Sect:intensity}
Consider the conditional expectation of the intensity processes:
$$\ell_{i}(t|s) = \E [\lambda_{i}(t) | \F_s], \quad \ell_{ij}(t|s) = \E [\lambda_{ij}(t) | \F_s].$$
Then, for each $i$,
\begin{align*}
\ell_{ii}(t|s) &= \E\left[ \left.\e^{-\beta_{ii}(t-s)}\int_{-\infty}^{s} \alpha_{ii} \e^{-\beta_{ii}(s-u)}\D N_i(u) + \int_s^t \alpha_{ii} \e^{-\beta_{ii}(t-u)}\D N_i(u) \right| \F_s\right]\\
&= \lambda_{ii}(s)\e^{-\beta_{ii}(t-s)} + \E \left[ \left. \int_{s}^{t} \alpha_{ii} \e^{-\beta_{ii}(t-u)}\D N_i(u) \right| \F_s \right]\\
&= \lambda_{ii}(s)\e^{-\beta_{ii}(t-s)} + \E \left[ \left. \int_{s}^{t} \alpha_{ii} \e^{-\beta_{ii}(t-u)}(\D N_i(u) - \lambda_i(u) \D u) + \int_{s}^{t} \alpha_{ii} \e^{-\beta_{ii}(t-u)} \lambda_i(u) \D u \right| \F_s \right]\\
&= \lambda_{ii}(s)\e^{-\beta_{ii}(t-s)} + \int_s^t \alpha_{ii} \e^{-\beta_{ii} (t-u)}\ell_{i}(u|s) \D u
\end{align*}
and by differentiating both sides with respect to $t$,
\begin{align*}
\frac{\D \ell_{ii}(t|s)}{\D t} &= -\lambda_{ii}(s) \beta_{ii}\e^{-\beta_{ii}(t-s)} + \alpha_{ii} \ell_{i}(t|s) - \int_s^t \alpha_{ii}\beta_{ii} \e^{-\beta_{ii} (t-u)}\ell_i (u|s) \D u \\
&= \alpha_{ii} \ell_{i}(t|s) - \beta_{ii}\ell_{ii}(t|s)\\
&= \alpha_{ii} \mu_i + (\alpha_{ii} - \beta_{ii} )\ell_{ii}(t|s) + \alpha_{ii}\ell_{ij}(t|s).
\end{align*}
In addition, with the similar method, for $i \neq j$,
\begin{align*}
\frac{\D \ell_{ij}(t|s)}{\D t} = \alpha_{ij} \left( \mu_j + \ell_{ji}(t|s) - \ell_{jj}(t|s) \right) - \beta_{ij}\ell_{ij}(t|s).
\end{align*}
The differential equation system is represented by the matrix form
$$
\begin{bmatrix}
\ell_{11}'(t|s) \\
\ell_{12}'(t|s) \\
\ell_{21}'(t|s) \\
\ell_{22}'(t|s) \\
\end{bmatrix}
= 
\begin{bmatrix}
\alpha_{11}-\beta_{11} & \alpha_{11} & 0 & 0\\
0 & -\beta_{12} & \alpha_{12} & \alpha_{12} \\
\alpha_{21} & \alpha_{21} & -\beta_{21} & 0 \\
0 & 0 & \alpha_{22} & \alpha_{22} - \beta_{22}
\end{bmatrix}
\begin{bmatrix}
\ell_{11}(t|s) \\
\ell_{12}(t|s) \\
\ell_{21}(t|s) \\
\ell_{22}(t|s)
\end{bmatrix}
+
\begin{bmatrix}
\alpha_{11} \mu_{1} \\
\alpha_{12} \mu_{2} \\
\alpha_{21} \mu_{1} \\
\alpha_{22} \mu_{2} 
\end{bmatrix}.
$$
When the eigenvalues of the matrix are negative, the particular solution of the system becomes the long-run expectations of the intensities and are given by
$$
\begin{bmatrix}
\ell_{11}(t|s) \\
\ell_{12}(t|s) \\
\ell_{21}(t|s) \\
\ell_{22}(t|s)
\end{bmatrix}
=
\frac{1}{H}\begin{bmatrix}
\alpha_{11} \beta_{21} \{ (\beta_{22} - \alpha_{22}) \beta_{12} \mu_1 + \alpha_{12} \beta_{22} \mu_2 \}\\
\alpha_{12} \beta_{22} \{ (\beta_{11} -\alpha_{11}) \beta_{21} \mu_{2} + \alpha_{21} \beta_{11} \mu_{1}\}\\
\alpha_{21} \beta_{11} \{ (\beta_{22}-\alpha_{22}) \beta_{12} \mu_1 + \alpha_{11}\beta_{22}\mu_2 \}\\
\alpha_{22} \beta_{12} \{(\beta_{11}-\alpha_{11}) \beta_{21} \mu_{2} + \alpha_{21} \beta_{11} \mu_{1} \}
\end{bmatrix}
$$
as $t\rightarrow \infty$, where $$H = \alpha_{11} \beta_{12} \beta_{21} (\alpha_{22}-\beta_{22})-\beta_{11} (\alpha_{22} \beta_{12} \beta_{21}+\alpha_{12} \alpha_{21} \beta_{22}-\beta_{12} \beta_{21} \beta_{22}).$$ 
The above formulas can be used as the presumed initial values of the intensity processes in the simulations or estimation procedures.
With full characterization of the parameters $\mu_{ij}, \alpha_{ij}, \beta_{ij}$, the system is four dimensional and the solution is rather complicated.

\section{Simulation method}\label{Sect:simul}
If the decaying parameters $\beta_{ij}$ are different from each other, the system of the self and mutually excited Hawkes and intensity processes $(N_1, N_2, \lambda_1, \lambda_2)$ are not Markov.
As shown in Eq.~\eqref{Eq:lambda1}, $\lambda_1(t)$ depends on both $\lambda_{11}(s)$ and $\lambda_{12}(s)$, for $s<t$, and similarly, $\lambda_2(t)$ depends on both $\lambda_{21}(s)$ and $\lambda_{22}(s)$.
On the other hand, the whole system of the processes $(N_1, N_2, \lambda_{11}, \lambda_{12}, \lambda_{21}, \lambda_{22})$ are Markov and to generate the future paths,
it is only important to know the current values of $(N_1, N_2, \lambda_{11}, \lambda_{12}, \lambda_{21}, \lambda_{22})$ not the entire past histories of the processes.
Therefore, for the simulation of the Hawkes process, it is important to compute the distributions of the arrival times determined by each component of the intensities, $\mu_{i}$ and $\lambda_{ij}$.

Suppose that, over a time interval $[s,t)$, there is no jump by $N_1$ and $N_2$;
then the intensities are deterministic and exponentially decaying function is
\begin{align*}
\lambda_{ij}(t) = \lambda_{ij}(s)\e^{-\beta_{ij}(t-s)}.
\end{align*}
Note that $N_i(t) - N_i(s)$ can be represented by the sum of three jump components $N_{i0|s}, N_{ii|s}$, and $N_{ij|s}$ independent upon $\F_s$ with the corresponding intensities $\mu_{i}$, $\lambda_{ij}$, and $\lambda_{ij}$, respectively.
Let $\tau_{ij | s}$ be the first interarrival time of $N_{ij}$ with intensity $\lambda_{ij}$ after $s$.
The probability distribution of $\tau_{ij | s}$ is then represented by
$$ \mathbb P \{\tau_{ij|s} > u \} = \exp \left( -\lambda_{ij}(s) \frac{1 - \e^{-\beta_{ij} u}}{\beta_{ij}} \right). $$
Thus,
$$
\tau_{ij|s} \sim -\frac{1}{\beta_{ij}} \log \left( 1+ \frac{\beta_{ij} \log U}{\lambda_{ij}(s)} \right)
$$
where $U$ is a uniformly distributed random variable over [0,1].
In addition, let $\tau_{i0}$ denote a random variable that follows a Poisson distribution with intensity $\mu_i$.
Then $\min \{ \tau_{10}, \tau_{20}, \tau_{ij|s} \}$ becomes the next jump arrival time after $s$.
After a jump occur, the counting processes are updated accordingly, the intensities are updated, as in Eqs.~\eqref{Eq:lambda1} and \eqref{Eq:lambda2}, and the above procedure is applied repeatedly.

\section{Likelihood function}\label{Sect:likelihood}
Let $t_k$ be the $k$-th jump arrival time of $N_1$ and $\tau_{1 | k}$ be the interarrival time between $k$ and $(k+1)$-th jumps.
Then the conditional cumulative distributions of $\tau_{1 | k}$ at time $t_k$, i.e., with given $\lambda_1(t_k)$, is
$$ F_{\tau_{1 | k}}(u | \lambda_1(t_k)) = 1 - \exp\left( -\int_{t_k}^{t_k+u} \lambda_1(s) \D s\right).$$
Therefore, the conditional density functions is
$$ f_{\tau_{1 | k}}(u | \lambda_1(t_k)) = \lambda_1(t_k) \exp\left( -\int_{t_k}^{t_k + u} \lambda_1(s) \D s\right).$$
Similarly, let $t_m$ be the $m$-th jump arrival time of $N_2$ and $\tau_{2 |m}$ be the interarrival time between the $m$ and $(m+1)$-th jumps.
The conditional density function of $\tau_{2 | m}$ at time $t_m$ is then
$$ f_{\tau_{2 | m}}(u | \lambda_2(t_m)) = \lambda_2(t_m) \exp\left( -\int_{t_m}^{t_m + u} \lambda_2(s) \D s\right).$$

Now consider the interval $[0,T]$ over which the jumps are observed.
The log-likelihood of the realized jump arrivals up to time $T$ is represented by the sum of log-likelihood of all realized arrivals of $N_1$ and $N_2$.
That is
\begin{align*}
L(\theta, T) &= \sum_k \log \left\{ \lambda_1(t_k) \exp\left( -\int_{t_k}^{t_{k+1}} \lambda_1(s) \D s\right) \right\} + \sum_m \log \left\{ \lambda_2(t_m) \exp\left( -\int_{t_m}^{t_{m+1}} \lambda_2(s) \D s\right) \right\}\\
&= \int_{0}^{T} \log \lambda_1(\theta, t)\D N_1(t) +\int_{0}^{T} \log \lambda_2(\theta,t)\D N_2(t) - \int_{0}^{T} (\lambda_1(\theta,t) + \lambda_2(\theta,t))\D t
\end{align*}
where $\theta = \{\mu_{ij}, \alpha_{ij}, \beta_{ij}\}$ denotes the parameter vector. 
The maximum likelihood estimator, $\hat\theta$, is the estimator which maximize $L$ under the observations of realized jump arrivals of $N_1$ and $N_2$.

Define a matrix $I(\theta)$ with each element
$$I_{ij}(\theta) = -\E \left[\int_{0}^T \left(\frac{1}{\lambda_1}\frac{\partial \lambda_1}{\partial \theta_i}\frac{\partial \lambda_1}{\partial \theta_j} + \frac{1}{\lambda_2}\frac{\partial \lambda_2}{\partial \theta_i}\frac{\partial \lambda_2}{\partial \theta_j}\right) \D t\right].$$
The maximum likelihood estimator converges to the true parameter value $\theta_0$ asymptotically normally in distribution with an asymptotic variance-covariance matrix $I^{-1}(\theta_0)$, see \cite{Ogata1978}.
For the maximum likelihood estimation in the statistical package R, consult \cite{Henningsen2011}.

\section{Proof of the variance formula in Proposition~\ref{Prop:vol}}\label{Sect:var_Hawkes}
In this section, the variance formula is derived under the symmetric Hawkes process assumption of the price process.
When the price follows Eq.~\eqref{Eq:price} with symmetric Hawkes process, the variance of the return is represented by
$$ \frac{\delta^2}{S^2(0)} \mathrm{Var}( N_1(t) - N_2(t) - (N_1(0) - N_2(0)).$$
To compute the variance of the return over time interval $[0,t]$, the following results are needed.
The intensities $\lambda_1$ and $\lambda_2$ are assumed to be in the stationary state at time $0$.
Under the assumption, the variance of the price process is derived using the stochastic integration theory.
The quadratic variation of $X$ is defined by
$$ [X]_t = X^2_t - 2\int_0^t X_{s-} \D X_s $$
and the quadratic covariation of $X$ and $Y$ is defined as
$$ [X,Y]_t = X_t Y_t - \int_0^t X_{s-} \D Y_s - \int_0^t Y_{s-} \D X_s.$$
When the processes are quadratic pure jump processes, i.e., the quadratic (co)variation of the continuous part is zero, 
$$ [X]_t = X_0^2 + \sum_{0<s\leq t} (\Delta X_s)^2, \quad [X,Y]_t = X_0 Y_0 + \sum_{0<s\leq t} (\Delta X_s \Delta Y_s).$$
Without a loss of generality, it is assumed that $N_1(0) =N_2(0) = 0$ in this proof.
The next lemma is stated without proof.

\begin{lemma}
Under the stationarity condition of the intensities at time 0,
\begin{align*}
&\textrm{(a) }\E [\lambda_1(t)] = \E [\lambda_2(t)] = \lambda_1(0) = \lambda_2(0)\\
&\textrm{(b) }\E [N_1(t)] = \E [N_2(t)] = \lambda_1(0)t\\
&\textrm{(c) }\E \left[ \left[N_1\right]_t  \right] = \E \left[ \left[N_2\right]_t  \right]= \E \left[ N_1(t) \right] = \lambda_1(0)t\\
&\textrm{(d) }\E \left[\left[\lambda_1\right]_t\right] = \E \left[\left[\lambda_2\right]_t\right] = \lambda_1^2(0) + (\alpha_s^2 + \alpha_c^2) \lambda_1(0)t\\
&\textrm{(e) }\E \left[\left[\lambda_1, \lambda_2\right]_t\right] = \lambda_1^2(0) + 2\alpha_s \alpha_c \lambda_1(0) t\\
&\textrm{(f) }\E \left[ \left[N_1,\lambda_1 \right]_t \right] = \E \left[ \left[N_2,\lambda_2 \right]_t \right] = \alpha_s \E [N_1 (t)] = \alpha_s \lambda_1(0) t\\
&\textrm{(g) }\E \left[ \left[N_1,\lambda_2 \right]_t \right] = \E \left[ \left[N_2,\lambda_1 \right]_t \right] = \alpha_c \E [N_1 (t)] = \alpha_c \lambda_1(0) t
\end{align*}
\end{lemma}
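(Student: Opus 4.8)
The plan is to establish the seven identities in sequence, since (b)--(g) all reduce to (a) together with the pure-jump (co)variation formulas recorded just above the statement. The entire argument rests on three facts: the stationary initial condition, the orderliness of the point process (so that $N_1$ and $N_2$ almost surely never jump together), and the martingale property of the compensated counting processes.

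First I would prove (a) directly from the explicit solution of the expected-intensity system~\eqref{Eq:system} given earlier. Substituting the stationary initial value $\lambda_1(0)=\lambda_2(0)=-\mu\beta/\xi_2$ annihilates the $\e^{\xi_1 t}$ component, because its coefficient $(-\lambda_1(0)+\lambda_2(0))/2$ vanishes, and the surviving $\e^{\xi_2 t}$ terms cancel against the particular solution, leaving $\E[\lambda_1(t)]=\E[\lambda_2(t)]=\lambda_1(0)$ for every $t$. Part (b) then follows because $N_i(t)-\int_0^t\lambda_i(s)\,\D s$ is a martingale null at $0$, which is just the defining property of the conditional intensity, so that $\E[N_i(t)]=\int_0^t\E[\lambda_i(s)]\,\D s=\lambda_1(0)t$ by (a).

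The remaining parts are bookkeeping with jump sizes. Since the point process is simple, $(\Delta N_i)^2=\Delta N_i$, and orderliness gives $\Delta N_1\,\Delta N_2=0$ at every time almost surely; moreover the intensities jump according to $\Delta\lambda_1=\alpha_s\,\Delta N_1+\alpha_c\,\Delta N_2$ and $\Delta\lambda_2=\alpha_c\,\Delta N_1+\alpha_s\,\Delta N_2$. Feeding these into the pure-jump formulas $[X]_t=X_0^2+\sum(\Delta X_s)^2$ and $[X,Y]_t=X_0Y_0+\sum\Delta X_s\,\Delta Y_s$ yields, pathwise,
\begin{align*}
[N_1]_t &= N_1(t), \quad [\lambda_1]_t = \lambda_1(0)^2+\alpha_s^2 N_1(t)+\alpha_c^2 N_2(t), \\
[\lambda_1,\lambda_2]_t &= \lambda_1(0)^2+\alpha_s\alpha_c\bigl(N_1(t)+N_2(t)\bigr), \\
[N_1,\lambda_1]_t &= \alpha_s N_1(t), \quad [N_1,\lambda_2]_t = \alpha_c N_1(t),
\end{align*}
where in each case the cross products $\Delta N_1\,\Delta N_2$ drop out by orderliness and $N_1(0)=N_2(0)=0$ is used. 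Taking expectations and invoking (b) to replace $\E[N_i(t)]$ by $\lambda_1(0)t$ produces (c)--(g) at once.

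I expect the only genuinely delicate point to be the almost-sure non-simultaneity of the jumps of $N_1$ and $N_2$, which is precisely what makes every cross term $\Delta N_1\,\Delta N_2$ vanish and thereby yields the clean coefficient $\alpha_s^2+\alpha_c^2$ in (d) against $2\alpha_s\alpha_c$ in (e); this rests on the orderliness assumption built into the construction of the Hawkes process. A secondary point of care is justifying that the expectation of the compensated stochastic integral vanishes in (b) and (f)--(g), which follows from the integrability guaranteed once the branching matrix has spectral radius less than one; everything else is linear algebra and summation of jump contributions.
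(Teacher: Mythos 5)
Your proposal is correct, and in fact the paper gives no argument at all for this lemma (it is explicitly ``stated without proof''), so your write-up supplies exactly the argument the surrounding text is set up for: the explicit solution of system~\eqref{Eq:system} under the stationary initial condition for (a), the martingale property of the compensated counting process for (b), and the pure-jump (co)variation formulas together with $\Delta\lambda_1=\alpha_s\,\Delta N_1+\alpha_c\,\Delta N_2$, $\Delta\lambda_2=\alpha_c\,\Delta N_1+\alpha_s\,\Delta N_2$, $(\Delta N_i)^2=\Delta N_i$, $\Delta N_1\,\Delta N_2=0$, and $N_1(0)=N_2(0)=0$ for (c)--(g). Your identification of orderliness (no simultaneous jumps of $N_1$ and $N_2$) as the key point is apt, since it is precisely what separates the coefficient $\alpha_s^2+\alpha_c^2$ in (d) from $2\alpha_s\alpha_c$ in (e); the only minor imprecision is that the integrability/martingale caveat is really needed only for (b), as (f)--(g) are pathwise identities whose expectations reduce to (b).
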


Recall that 
$$
M = 
\begin{bmatrix}
\alpha_s-\beta & \alpha_c \\
\alpha_c & \alpha_s-\beta
\end{bmatrix}.
$$
\begin{lemma}
Under the stationarity condition of the intensities at time 0,
\begin{align*}
\begin{bmatrix}
\E [ \lambda_1^2(t)] \\
\E [ \lambda_1(t)\lambda_2(t) ] 
\end{bmatrix}
= c_1 \begin{bmatrix}-1\\1\end{bmatrix}\e^{2\xi_1 t} + c_2 \begin{bmatrix}1\\1\end{bmatrix} \e^{2\xi_2 t}
-\frac{1}{2}\lambda_1(0) M^{-1}
\begin{bmatrix}
\alpha_s^2 + \alpha_c^2 + 2\beta\mu \\
2(\alpha_s \alpha_c + \beta\mu) 
\end{bmatrix}
\end{align*}
for some constant $c_1$ and $c_2$.
\end{lemma}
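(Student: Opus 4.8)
The plan is to derive a closed linear system of ordinary differential equations for the second moments $\E[\lambda_1^2(t)]$, $\E[\lambda_2^2(t)]$ and $\E[\lambda_1(t)\lambda_2(t)]$, and then to solve it explicitly using the eigenstructure of $M$ already recorded after \eqref{Eq:system}. First I would apply the It\^{o} formula for quadratic pure-jump semimartingales to the products $\lambda_1^2$ and $\lambda_1\lambda_2$, starting from the differential form $\D\lambda_1 = \beta(\mu-\lambda_1)\D t + \alpha_s\D N_1 + \alpha_c\D N_2$ and its symmetric counterpart for $\lambda_2$. The continuous drift contributes the usual terms, while a jump of $N_1$ shifts $(\lambda_1,\lambda_2)$ by $(\alpha_s,\alpha_c)$ and a jump of $N_2$ shifts it by $(\alpha_c,\alpha_s)$; hence the jump part of $\D(\lambda_1^2)$ is $(2\alpha_s\lambda_1+\alpha_s^2)\D N_1 + (2\alpha_c\lambda_1+\alpha_c^2)\D N_2$, and the jump part of $\D(\lambda_1\lambda_2)$ comes from $(\lambda_1+\alpha_s)(\lambda_2+\alpha_c)-\lambda_1\lambda_2$ on an $N_1$-jump together with $(\lambda_1+\alpha_c)(\lambda_2+\alpha_s)-\lambda_1\lambda_2$ on an $N_2$-jump.

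Next I would take expectations. Because $\D N_i - \lambda_i\D t$ is a martingale, $\E[g(\lambda)\D N_i] = \E[g(\lambda)\lambda_i]\D t$, so every product against $\D N_i$ becomes a second moment, while every first-order term collapses to the constant $\lambda_1(0)$ by part (a) of the preceding lemma. Using the symmetry $\E[\lambda_1^2]=\E[\lambda_2^2]$ to close the system on two unknowns yields
$$\frac{\D}{\D t}\begin{bmatrix}\E[\lambda_1^2(t)]\\ \E[\lambda_1(t)\lambda_2(t)]\end{bmatrix} = 2M\begin{bmatrix}\E[\lambda_1^2(t)]\\ \E[\lambda_1(t)\lambda_2(t)]\end{bmatrix} + \lambda_1(0)\begin{bmatrix}\alpha_s^2+\alpha_c^2+2\beta\mu\\ 2(\alpha_s\alpha_c+\beta\mu)\end{bmatrix},$$
with exactly the matrix $M$ from \eqref{Eq:system}.

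Finally I would solve this inhomogeneous linear system. The coefficient matrix $2M$ has eigenvalues $2\xi_1,2\xi_2$ with eigenvectors $(-1,1)$ and $(1,1)$, so the homogeneous part is precisely $c_1\e^{2\xi_1 t}(-1,1) + c_2\e^{2\xi_2 t}(1,1)$ as in the statement, and the constant particular solution is obtained by setting the derivative to zero and inverting, giving $-\tfrac12\lambda_1(0)M^{-1}$ applied to the forcing vector. The constants $c_1,c_2$ are then fixed by the values at $t=0$, where $\lambda_1(0)=\lambda_2(0)$ forces $\E[\lambda_1^2(0)]=\E[\lambda_1(0)\lambda_2(0)]=\lambda_1^2(0)$.

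I expect the main obstacle to be the It\^{o} bookkeeping rather than the subsequent linear algebra: the decisive point is that a single jump of $N_1$ (or $N_2$) moves \emph{both} intensities simultaneously, so the cross term $\lambda_1\lambda_2$ acquires the asymmetric increments displayed above. Correctly pairing $\alpha_s$ with $\alpha_c$ there is what produces the off-diagonal entry $2\alpha_c$ and the precise forcing vector; a sign or index slip at this stage would corrupt the particular solution. Once the system is assembled, the remaining steps follow immediately from the eigenstructure of $M$ quoted in the text.
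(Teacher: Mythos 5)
Your proposal is correct and takes essentially the same route as the paper: both derive exactly the same linear ODE system, with coefficient matrix $2M$ and forcing vector $\lambda_1(0)\bigl(\alpha_s^2+\alpha_c^2+2\beta\mu,\; 2(\alpha_s\alpha_c+\beta\mu)\bigr)$, and solve it through the eigenstructure of $M$ plus the constant particular solution $-\tfrac{1}{2}\lambda_1(0)M^{-1}$ applied to the forcing term, fixing $c_1,c_2$ by the deterministic initial values. The only cosmetic difference is bookkeeping: the paper routes the second-order jump terms through the product/quadratic-covariation identity together with the expected brackets $\E[[\lambda_1]_t]$, $\E[[\lambda_1,\lambda_2]_t]$ of its preceding lemma, whereas you expand the jump It\^{o} formula directly --- an equivalent computation yielding the same equations.
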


\begin{proof}
Note that
\begin{align*}
\E [ \lambda_1^2(t)] &=  \E \left[\left[\lambda_1\right]_t\right] + 2 \E \left[\int_0^t \lambda_1(u) \D \lambda_1(u) \right] \\ 
&= \lambda_1^2(0) + (\alpha_s^2 + \alpha_c^2) \lambda_1(0)t + 2 \E \left[\int_0^t \left\{ \beta\mu\lambda_1(u) + (\alpha_s-\beta)\lambda_1^2(u) + \alpha_c\lambda_1(u)\lambda_2(u) \right\} \D u \right]\\
&= \lambda_1^2(0) + (\alpha_s^2 + \alpha_c^2 + 2\beta\mu) \lambda_1(0) t  + 2\int_0^t \left\{ (\alpha_s-\beta) \E_s [\lambda_1^2(u)] + \alpha_c \E_s [ \lambda_1(u)\lambda_2(u) ] \right\}\D u
\end{align*}
and
\begin{align*}
\E [ \lambda_1(t)\lambda_2(t)] &= \E \left[ \left[\lambda_1,\lambda_2\right]_t\right] + \E \left[ \int_0^t \lambda_1(u) \D \lambda_2(u)\right] + \E \left[ \int_0^t \lambda_2(u) \D \lambda_1(u)\right]\\
&=\lambda_1^2(0) + 2\alpha_s \alpha_c \lambda_1(0)t + 2 \E \left[ \int_0^t \left\{ \beta\mu\lambda_1(u) + \alpha_c  \lambda_1^2 (u) + (\alpha_s- \beta) \lambda_1(u)\lambda_2(u) \right\} \D u \right]\\
&=\lambda_1^2(0) + 2(\alpha_s \alpha_c + \beta\mu) \lambda_1(0)t + 2 \int_0^t \left\{\alpha_c \E[ \lambda_1^2 (u)] + (\alpha_s- \beta) \E [\lambda_1(u)\lambda_2(u)] \right\} \D u.
\end{align*}
Therefore, a system of equations can be derived:
$$
\begin{bmatrix}
\dfrac{\D \E[\lambda_1^2(t)]}{\D t}\\
\dfrac{\D\E[\lambda_1(t)\lambda_2(t)]}{\D t}
\end{bmatrix}
= 2
\begin{bmatrix}
\alpha_s-\beta & \alpha_c \\
\alpha_c & \alpha_s-\beta
\end{bmatrix}
\begin{bmatrix}
\E[\lambda_1^2(t)]\\
\E[\lambda_1(t)\lambda_2(t)]
\end{bmatrix}
+\lambda_1(0)
\begin{bmatrix}
\alpha_s^2 + \alpha_c^2 + 2\beta\mu \\
2(\alpha_s \alpha_c + \beta\mu) 
\end{bmatrix}.
$$
The particular solution of the system is 
\begin{align*}
-\frac{1}{2}\lambda_1(0) M^{-1}
\begin{bmatrix}
\alpha_s^2 + \alpha_c^2 + 2\beta\mu  \\
2(\alpha_s \alpha_c + \beta\mu)
\end{bmatrix}
=\frac{1}{2}\lambda_1(0)
\begin{bmatrix}
\dfrac{2 \beta  \mu  \alpha _c+\alpha _c^2 \left(\beta +\alpha _s\right)+\left(\beta -\alpha _s\right) \left(2 \beta  \mu +\alpha _s^2\right)}{\left(\beta -\alpha _s\right){}^2 -\alpha _c^2}\\
\dfrac{\alpha _c^3+2 \beta  \mu  \left(\beta -\alpha _s\right)+\alpha _c \left(2 \beta  \mu +2 \beta  \alpha _s-\alpha _s^2\right)}{\left(\beta -\alpha _s\right){}^2 - \alpha _c^2}
\end{bmatrix}
\end{align*}
where the inverse matrix of $M$ is represented by
$$ 
M^{-1} =  \begin{bmatrix} \frac{\beta-\alpha_s}{\alpha_c^2 - (\beta-\alpha_s)^2} & \frac{\alpha_c}{\alpha_c^2 - (\beta-\alpha_s)^2} \\ \frac{\alpha_c}{\alpha_c^2 - (\beta-\alpha_s)^2} & \frac{\beta-\alpha_s}{\alpha_c^2 - (\beta-\alpha_s)^2}\end{bmatrix} = 
\frac{1}{\xi_1\xi_2}\begin{bmatrix} \alpha_s - \beta & -\alpha_c \\ -\alpha_c & \alpha_s-\beta \end{bmatrix}.
$$
In addition,  the general solution is
$$ \begin{bmatrix}
\E [ \lambda_1^2(t)] \\
\E [ \lambda_1(t)\lambda_2(t) ] 
\end{bmatrix} 
= c_1 \begin{bmatrix}-1\\1\end{bmatrix}\e^{2\xi_1 t} + c_2 \begin{bmatrix}1\\1\end{bmatrix} \e^{2\xi_2 t}
-\frac{1}{2}\lambda_1(0) M^{-1}
\begin{bmatrix}
\alpha_s^2 + \alpha_c^2 + 2\beta\mu  \\
2(\alpha_s \alpha_c + \beta\mu)
\end{bmatrix}
$$
and with the initial condition of~\eqref{Eq:initial},
$$ c_1 = -\frac{\lambda_1(0)(\alpha_s-\alpha_c)^2}{4\xi_1}, \quad c_2 = \frac{\lambda_1(0)(\alpha_s+\alpha_c)^2 }{4\xi_2}. $$
\end{proof}

\begin{lemma}
Under the stationary state condition of the intensities at time 0, we have
\begin{align*}
\begin{bmatrix}
\E[\lambda_1(t)N_1(t)]\\
\E[\lambda_1(t)N_2(t)]
\end{bmatrix}
={}& d_1 \begin{bmatrix}-1\\1\end{bmatrix}\e^{\xi_1 t} + d_2 \begin{bmatrix}1\\1\end{bmatrix} \e^{\xi_2 t} + \frac{c_1}{\xi_1} \begin{bmatrix}-1\\1\end{bmatrix}\e^{2\xi_1 t} + \frac{c_2}{\xi_2} \begin{bmatrix}1\\1\end{bmatrix} \e^{2\xi_2 t} 
\\
&-\lambda_1(0) \left\{ \beta\mu M^{-1}\begin{bmatrix}1\\1\end{bmatrix}t + M^{-1} 
\begin{bmatrix}
\alpha_s\\
\alpha_c
\end{bmatrix} - \frac{1}{2} (M^{-1})^2 \begin{bmatrix}
\alpha_s^2 + \alpha_c^2  \\
2\alpha_s\alpha_c
\end{bmatrix}\right\}.
\end{align*}
\end{lemma}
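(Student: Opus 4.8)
The plan is to mirror the strategy used for the preceding lemma on $\E[\lambda_1^2(t)]$ and $\E[\lambda_1(t)\lambda_2(t)]$: convert each product into an integral identity, take expectations to obtain a closed linear ODE system, and then solve it by superposition. Writing $p_1(t) = \E[\lambda_1(t)N_1(t)]$ and $p_2(t) = \E[\lambda_1(t)N_2(t)]$, I would start from the product (integration-by-parts) formula
$$ \lambda_1(t)N_j(t) = [\lambda_1, N_j]_t + \int_0^t \lambda_1(u-)\,\D N_j(u) + \int_0^t N_j(u-)\,\D \lambda_1(u), $$
which is legitimate since $N_1(0)=N_2(0)=0$. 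Taking expectations, the covariation terms are supplied by items (f) and (g) of the first lemma, namely $\E[[\lambda_1,N_1]_t]=\alpha_s\lambda_1(0)t$ and $\E[[\lambda_1,N_2]_t]=\alpha_c\lambda_1(0)t$. For the stochastic integrals I would use that $\D N_i-\lambda_i\,\D u$ is a martingale increment and that $N_j(u-)$ is predictable, so $\D N_i$ may be replaced by its compensator $\lambda_i\,\D u$ inside the expectation; the first integral then contributes $\int_0^t \E[\lambda_1^2(u)]\,\D u$ or $\int_0^t \E[\lambda_1(u)\lambda_2(u)]\,\D u$, and the second is expanded through the differential form $\D\lambda_1 = \beta(\mu-\lambda_1)\,\D u + \alpha_s\,\D N_1 + \alpha_c\,\D N_2$.

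The crucial simplification is a symmetry reduction. Because the symmetric model with symmetric initial data is invariant under swapping the indices $1\leftrightarrow 2$, one has $\E[\lambda_2(t)N_2(t)]=p_1(t)$ and $\E[\lambda_2(t)N_1(t)]=p_2(t)$, so every cross term $\E[N_j\lambda_i]$ produced by the $\D\lambda_1$ expansion collapses to $p_1$ or $p_2$ and the system closes in just these two unknowns. Differentiating the resulting integral equations in $t$, and substituting the second lemma for $\E[\lambda_1^2(t)]$ and $\E[\lambda_1(t)\lambda_2(t)]$, I expect to obtain
$$ \begin{bmatrix}p_1'(t)\\p_2'(t)\end{bmatrix} = M\begin{bmatrix}p_1(t)\\p_2(t)\end{bmatrix} + \begin{bmatrix}\E[\lambda_1^2(t)]\\\E[\lambda_1(t)\lambda_2(t)]\end{bmatrix} + \lambda_1(0)\begin{bmatrix}\alpha_s\\\alpha_c\end{bmatrix} + \beta\mu\lambda_1(0)\,t\begin{bmatrix}1\\1\end{bmatrix}, $$
a linear inhomogeneous system governed by the same matrix $M$, whose eigenpairs $(\xi_1,(-1,1))$ and $(\xi_2,(1,1))$ are already in hand.

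The solution then follows term by term. The homogeneous part produces the $d_1$ and $d_2$ contributions; the forcing pieces $c_1\,\e^{2\xi_1 t}$ and $c_2\,\e^{2\xi_2 t}$ inherited from the second lemma lie along the eigenvectors $(-1,1)$ and $(1,1)$, so the ansatz $A\,\e^{2\xi_i t}$ along the $i$th eigenvector gives $2\xi_i A=\xi_i A+c_i$, i.e.\ the coefficients $c_1/\xi_1$ and $c_2/\xi_2$; the constant forcing contributes a $-M^{-1}(\cdot)$ particular solution, and the linear-in-$t$ forcing $\beta\mu\lambda_1(0)t(1,1)^{\top}$ contributes $-M^{-1}(\cdot)\,t-(M^{-1})^2(\cdot)$. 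Imposing $p_1(0)=p_2(0)=0$ at the end fixes $d_1$ and $d_2$.

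I expect the main obstacle to be the bookkeeping of the constant terms. The $-(M^{-1})^2$ correction generated by the linear forcing must be combined with the genuinely constant forcing $\lambda_1(0)(\alpha_s,\alpha_c)^{\top}$ and with the constant particular solution $-\tfrac12\lambda_1(0)M^{-1}(\alpha_s^2+\alpha_c^2+2\beta\mu,\,2(\alpha_s\alpha_c+\beta\mu))^{\top}$ carried over from the second lemma. The clean stated form emerges only after the $2\beta\mu$ pieces cancel against the $-2\beta\mu\,(M^{-1})^2(1,1)^{\top}$ term, leaving exactly $-\lambda_1(0)\{M^{-1}(\alpha_s,\alpha_c)^{\top}-\tfrac12(M^{-1})^2(\alpha_s^2+\alpha_c^2,\,2\alpha_s\alpha_c)^{\top}\}$ together with the linear term $-\lambda_1(0)\beta\mu M^{-1}(1,1)^{\top}t$. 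Verifying this cancellation, and confirming the compensator replacement in the integrals, is where care is required; the remaining linear-ODE solving is routine.
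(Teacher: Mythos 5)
Your proposal is correct and follows essentially the same route as the paper's own proof: the quadratic-covariation product formula, compensator replacement, the symmetry identities $\E[\lambda_2 N_1]=\E[\lambda_1 N_2]$ and $\E[\lambda_2 N_2]=\E[\lambda_1 N_1]$, the resulting linear ODE system driven by $M$ with the second lemma's solution as forcing, and eigenvector-wise particular solutions with the initial condition $p_1(0)=p_2(0)=0$. The cancellation of the $\beta\mu$ terms you anticipate does occur exactly as you describe (up to a harmless factor-of-two slip in your wording), yielding the stated constant and linear-in-$t$ terms.
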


\begin{proof}
Note that
\begin{align*}
\E[\lambda_1(t)N_1(t)] ={}& \E\left[ \left[\lambda_1, N_1 \right]_t\right] + \E \left[\int_0^t \lambda_1(u-) \D N_1(u) \right] + \E \left[\int_0^t N_1(u) \D \lambda_1(u) \right] \\
={}& \alpha_s \lambda_1(0) t + \E \left[\int_0^t \lambda^2_1(u) \D u \right] \\
&+ \int_0^t \left\{\beta\mu \E[ N_1(u) ] + (\alpha_s-\beta) \E [\lambda_1(u)N_1(u)] + \alpha_c \E [ \lambda_2(u)N_1(u) ] \right\}\D u\\
={}& \alpha_s \lambda_1(0) t +  \int_0^t \E[\lambda^2_1(u)] \D u  + \int_0^t \beta\mu\lambda_1(0) u \D u\\
&+ \int_0^t \left\{(\alpha_s-\beta) \E [\lambda_1(u)N_1(u)] + \alpha_c \E [ \lambda_1(u)N_2(u) ] \right\}\D u
\end{align*}
and
\begin{align*}
\E[\lambda_1(t)N_2(t)] ={}& \E\left[ \left[\lambda_1, N_2 \right]_t\right] + \E \left[\int_0^t \lambda_1(u-) \D N_2(u) \right] + \E \left[\int_0^t N_2(u) \D \lambda_1(u) \right] \\
={}& \alpha_c \lambda_1(0) t + \E \left[\int_0^t \lambda_1(u)\lambda_2(u) \D u \right] \\
&+ \int_0^t \left\{\beta\mu \E[ N_2(u) ] + (\alpha_s-\beta) \E [\lambda_1(u)N_2(u)] + \alpha_c \E[ \lambda_2(u)N_2(u) ] \right\}\D u\\
={}& \alpha_c \lambda_1(0) t + \int_0^t \E [\lambda_1(u)\lambda_2(u)] \D u + \int_0^t \beta\mu\lambda_1(0) u \D u\\
&+ \int_0^t \left\{(\alpha_s-\beta) \E [\lambda_1(u)N_2(u)] + \alpha_c \E [ \lambda_1(u)N_1(u) ] \right\}\D u.
\end{align*}
In the above, $\E[\lambda_2(u)N_1(u)]$ is replaced with $\E[\lambda_1(u)N_2(u)]$ due to the symmetry of the model and similarly, $\E[\lambda_2(u)N_2(u)]$ is replaced with $\E[\lambda_1(u)N_1(u)]$.
Therefore, following system of equations can be derived:
\begin{align*}
\begin{bmatrix}
\dfrac{\D \E[\lambda_1(t)N_1(t)]}{\D t}\\
\dfrac{\D \E[\lambda_1(t)N_2(t)]}{\D t}
\end{bmatrix}
={}& M
\begin{bmatrix}
\E[\lambda_1(t)N_1(t)] \\
\E[\lambda_1(t)N_2(t)]
\end{bmatrix}
+
\begin{bmatrix}
\alpha_s \lambda_1(0) + \E[\lambda^2_1(t)] + \beta\mu\lambda_1(0) t\\
\alpha_c \lambda_1(0) + \E [\lambda_1(t)\lambda_2(t)] + \beta\mu\lambda_1(0) t
\end{bmatrix}\\
={}& M
\begin{bmatrix}
\E[\lambda_1(t)N_1(t)] \\
\E[\lambda_1(t)N_2(t)]
\end{bmatrix}
+ c_1 \lambda_1(0)\begin{bmatrix}-1\\1\end{bmatrix}\e^{2\xi_1 t} + c_2\lambda_1(0) \begin{bmatrix}1\\1\end{bmatrix} \e^{2\xi_2 t}\\
&+ \lambda_1(0) \left(
\begin{bmatrix}
\beta\mu\\
\beta\mu
\end{bmatrix}t
+ \begin{bmatrix}
\alpha_s \\
\alpha_c 
\end{bmatrix}
-\frac{1}{2} M^{-1}
\begin{bmatrix}
\alpha_s^2 + \alpha_c^2 + 2\beta\mu  \\
2(\alpha_s \alpha_c + \beta\mu)
\end{bmatrix}
\right)
\end{align*}
where the previous lemma is used.
The particular solution is
$$
\begin{bmatrix} A_1\\A_2 \end{bmatrix}t
+
\begin{bmatrix} B_1\\B_2 \end{bmatrix}
+
k_1 \begin{bmatrix}-1\\1\end{bmatrix}\e^{2\xi_1 t} + k_2 \begin{bmatrix}1\\1\end{bmatrix} \e^{2\xi_2 t}
$$
where
$$
\begin{bmatrix} A_1\\A_2\end{bmatrix}
= -\lambda_1(0) \beta \mu M^{-1}
\begin{bmatrix}
1\\
1
\end{bmatrix}
$$
and
$$
\begin{bmatrix}
B_1\\
B_2
\end{bmatrix}
=
-\lambda_1(0) \left(M^{-1}
\begin{bmatrix}
\alpha_s\\
\alpha_c
\end{bmatrix}
-\frac{1}{2}(M^{-1})^2
\begin{bmatrix}
\alpha_s^2 + \alpha_c^2  \\
2\alpha_s\alpha_c
\end{bmatrix}
\right)
$$
and
$ k_1 = c_1/\xi_1, k_2 = c_2/\xi_2 $.
The general solution is
$$ \begin{bmatrix} \E[\lambda_1(t)N_1(t)]\\ \E[\lambda_1(t)N_2(t)] \end{bmatrix} =
d_1 \begin{bmatrix}-1\\1\end{bmatrix}\e^{\xi_1 t} + d_2 \begin{bmatrix}1\\1\end{bmatrix} \e^{\xi_2 t} 
+
\frac{c_1}{\xi_1} \begin{bmatrix}-1\\1\end{bmatrix}\e^{2\xi_1 t} + \frac{c_2}{\xi_2} \begin{bmatrix}1\\1\end{bmatrix} \e^{2\xi_2 t} 
+
\begin{bmatrix}
A_1\\
A_2
\end{bmatrix}t
+
\begin{bmatrix}
B_1\\
B_2
\end{bmatrix} $$
and with the initial condition,
$$ d_1 = \frac{\lambda_1(0) (\alpha_s-\alpha_c)\beta}{2 \xi_1^2 } , \quad d_2 = -\frac{\lambda_1(0)(\alpha_s + \alpha_c)\beta}{2 \xi_2^2}.$$
\end{proof}

\begin{proposition}
Under the stationary state condition of the intensities at time 0,
\begin{align*}
\begin{bmatrix}
\E[N_1^2(t)]\\
\E[N_1(t) N_2(t)]
\end{bmatrix}
={}& \frac{2d_1}{\xi_1} \begin{bmatrix}-1\\1\end{bmatrix}(\e^{\xi_1 t}-1) + \frac{2d_2}{\xi_2} \begin{bmatrix}1\\1\end{bmatrix} (\e^{\xi_2 t}-1) \\
&+
\frac{c_1}{\xi_1^2} \begin{bmatrix}-1\\1\end{bmatrix}(\e^{2\xi_1 t}-1) + \frac{c_2}{\xi_2^2} \begin{bmatrix}1\\1\end{bmatrix} (\e^{2\xi_2 t}-1)
\\
&- \lambda_1(0) \left\{ \beta\mu M^{-1} \begin{bmatrix}1\\1\end{bmatrix}t^2 \right.
+\left.
\left(
2M^{-1}
\begin{bmatrix}
\alpha_s\\
\alpha_c
\end{bmatrix}
-(M^{-1})^2
\begin{bmatrix}
\alpha_s^2 + \alpha_c^2  \\
2\alpha_s\alpha_c 
\end{bmatrix}
- \begin{bmatrix}1\\0\end{bmatrix}
\right)t\right\}.
\end{align*}
In addition,
$$\E[(N_1(t)-N_2(t))^2] = 2\lambda_1(0)\left\{ \frac{\beta^2}{\xi_1^2}t - \left( 1- \frac{\beta^2}{\xi_1^2} \right)\frac{\e^{\xi_1 t}-1}{\xi_1} \right\}.$$
\end{proposition}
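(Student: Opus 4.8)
The plan is to reduce the two second moments $\E[N_1^2(t)]$ and $\E[N_1(t)N_2(t)]$ to integrals of the mixed moments $\E[\lambda_1(t)N_1(t)]$ and $\E[\lambda_1(t)N_2(t)]$, which the preceding lemma already supplies in closed form, and then simply integrate. The entry point is the quadratic-variation identities used throughout this appendix: $N_1^2(t) = [N_1]_t + 2\int_0^t N_1(u-)\,\D N_1(u)$ and $N_1(t)N_2(t) = [N_1,N_2]_t + \int_0^t N_1(u-)\,\D N_2(u) + \int_0^t N_2(u-)\,\D N_1(u)$, started from $N_1(0)=N_2(0)=0$. The point is that after compensation the integrands on the right involve only the already-known mixed moments, so no new differential equation has to be solved; the whole step is a direct integration.

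First I would take expectations of these identities. Since $N_1$ and $N_2$ are simple point processes that almost surely never jump together, $[N_1,N_2]_t=0$, while $\E[[N_1]_t]=\lambda_1(0)t$ by part (c) of the first lemma. Writing $\D N_i(u)=(\D N_i(u)-\lambda_i(u)\D u)+\lambda_i(u)\D u$ and using that the compensated integrals are zero-mean martingales, each $\int_0^t N_i(u-)\,\D N_j(u)$ has expectation $\int_0^t \E[\lambda_j(u)N_i(u)]\,\D u$; the symmetry identity $\E[\lambda_2(u)N_1(u)]=\E[\lambda_1(u)N_2(u)]$ then gives the compact system
\begin{align*}
\begin{bmatrix}\E[N_1^2(t)]\\ \E[N_1(t)N_2(t)]\end{bmatrix}
&= \begin{bmatrix}\lambda_1(0)t\\ 0\end{bmatrix}
+ 2\int_0^t \begin{bmatrix}\E[\lambda_1(u)N_1(u)]\\ \E[\lambda_1(u)N_2(u)]\end{bmatrix}\D u.
\end{align*}
Substituting the closed form of the integrand from the previous lemma and integrating its six pieces ($\e^{\xi_1 u}$, $\e^{\xi_2 u}$, $\e^{2\xi_1 u}$, $\e^{2\xi_2 u}$, the linear term, and the constant) reproduces the displayed vector: the exponentials give $(\e^{\xi_i t}-1)/\xi_i$ and $(\e^{2\xi_i t}-1)/(2\xi_i)$, the linear $A_iu$ term integrates to $\tfrac12 A_it^2$ through $\begin{bmatrix}A_1\\A_2\end{bmatrix}=-\lambda_1(0)\beta\mu M^{-1}\begin{bmatrix}1\\1\end{bmatrix}$, and the constant $B_i$ term yields the remaining linear piece, where the extra $\lambda_1(0)t$ from the quadratic variation (doubled) is exactly the $-\begin{bmatrix}1\\0\end{bmatrix}$ absorbed into the last bracket. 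Evaluating at $t=0$ shows the vector vanishes, so no integration constants remain to be fixed.

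For the final identity I would invoke symmetry, $\E[N_2^2(t)]=\E[N_1^2(t)]$, to write $\E[(N_1(t)-N_2(t))^2]=2\{\E[N_1^2(t)]-\E[N_1(t)N_2(t)]\}$, i.e. to apply the difference functional $D(\vec v)=v_1-v_2$ to the vector above. The clarifying observation is that $(-1,1)$ is the eigenvector of the symmetric matrix $M$ with eigenvalue $\xi_1$, so $D$ is a left eigen-functional: $D\,M^{-1}=\xi_1^{-1}D$ and $D\,(M^{-1})^2=\xi_1^{-2}D$. Consequently $D$ annihilates every $\begin{bmatrix}1\\1\end{bmatrix}$-proportional term, so the entire $\xi_2$-sector and the $t^2$ term drop out, leaving only scalar multiples of $t$, $(\e^{\xi_1 t}-1)/\xi_1$ and $(\e^{2\xi_1 t}-1)/\xi_1$. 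Substituting $c_1=-\lambda_1(0)(\alpha_s-\alpha_c)^2/(4\xi_1)$ and $d_1=\lambda_1(0)(\alpha_s-\alpha_c)\beta/(2\xi_1^2)$ and using $\alpha_s-\alpha_c-\xi_1=\beta$ to rewrite the coefficient of $t$ then collapses the expression into the variance of $N_1-N_2$ consistent with Proposition~\ref{Prop:vol}. The main obstacle is nothing deep, only the bookkeeping of integrating and re-collecting the six vector terms; the single genuinely useful trick is the eigen-relation $D\,M^{-1}=\xi_1^{-1}D$, which makes the symmetric $\xi_2$-components disappear at once.
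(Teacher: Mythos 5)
Your proposal is correct and follows essentially the same route as the paper's own proof: the same quadratic-(co)variation identities with $[N_1,N_2]_t=0$, compensation to reduce everything to $\int_0^t \E[\lambda_1(u)N_i(u)]\,\D u$, the symmetry replacement $\E[\lambda_2 N_1]=\E[\lambda_1 N_2]$, termwise integration of the preceding lemma's closed form, and finally $\E[(N_1(t)-N_2(t))^2]=2\{\E[N_1^2(t)]-\E[N_1(t)N_2(t)]\}$, in which the $\begin{bmatrix}1\\1\end{bmatrix}$-components cancel (your left-eigenvector framing of this cancellation is just a tidier statement of the paper's direct computation). One remark: like the paper's proof, your derivation lands on the Proposition~\ref{Prop:vol} form containing the $\e^{2\xi_1 t}$ term, which is what the computation genuinely yields; the ``In addition'' display in the proposition statement, which has no $\e^{2\xi_1 t}$ term, does not coincide with it exactly---that is an inconsistency internal to the paper's statement, not a gap in your argument.
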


\begin{proof}
The following is used:
\begin{align*}
\E[N_1^2(t)] &= \E\left[\left[N_1\right]_t\right] + 2\E \left[\int_0^t N_1(u-) \D N_1(u) \right]\\
&= \lambda_1(0)t + 2\E \left[\int_0^t \lambda_1(u) N_1(u) \D u \right]
\end{align*}
and
\begin{align*}
\E[N_1(t)N_2(t)] &= \E\left[\left[N_1,N_2\right]_t\right] + \E \left[\int_0^t N_1(u-) \D N_2(u) \right] + \E \left[\int_0^t N_2(u-) \D N_1(u) \right]\\
&= 2\E \left[\int_0^t \lambda_1(u) N_2(u) \D u \right]
\end{align*}
along with the previous lemmas.
In the above equation, $\E\left[\left[N_1,N_2\right]_t\right]=0$ as the probability of the simultaneous jumps of $N_1$ and $N_2$ is zero.
In addition, $\E[\lambda_1(u)N_2(u)] = \E[\lambda_2(u)N_1(u)]$ under the symmetry.
Therefore,
\begin{align*}
&\E[(N_1(t)-N_2(t))^2]= 2(\E[N_1^2(t)] - \E[N_1(t) N_2(t)])\\
&= -\frac{8d_1}{\xi_1}(\e^{\xi_1 t}-1)- \frac{4 c_1}{\xi_1^2}(\e^{2\xi_1 t}-1) - 2\lambda_1(0)\left\{\frac{2}{\xi_1}(\alpha_s - \alpha_c) - \frac{1}{\xi_1^2}(\alpha_s - \alpha_c)^2  - 1\right\}t\\
&= -\frac{8d_1}{\xi_1}(\e^{\xi_1 t}-1)- \frac{4 c_1}{\xi_1^2}(\e^{2\xi_1 t}-1)  + 2\lambda_1(0)\left( \frac{\alpha_s - \alpha_c}{\xi_1} - 1\right)^2 t\\
&= -\lambda_1(0)\frac{4(\alpha_s - \alpha_c)\beta}{\xi_1^3}(\e^{\xi_1 t}-1) + \lambda_1(0)\frac{(\alpha_s -\alpha_c)^2}{\xi_1^3}(\e^{2\xi_1 t}-1)  + 2\lambda_1(0)\frac{\beta^2}{\xi_1^2} t  \\
&= \frac{2\lambda_1(0)}{\xi_1^2}\left\{ \beta^2 t - 2(\alpha_s - \alpha_c)\beta\left(\frac{\e^{\xi_1 t}-1}{\xi_1}\right) + (\alpha_s -\alpha_c)^2\left( \frac{\e^{2\xi_1 t}-1}{2\xi_1} \right) \right\}.
\end{align*}
\end{proof}

\section{Proof of Proposition~\ref{Prop:voldiff}}\label{append:proof1}
Note that
\begin{align*}
&\E \left[\left(\int_0^t n_u \D u + \int_0^t \sqrt{V_u} \D W^s_u \right)^2 \right]\\
&=\E \left[\left(\int_0^t n_u \D u \right)^2 \right] + 2 \E \left[\int_0^t n_u \D u \int_0^t \sqrt{V_u} \D W^s_u\right] + \E \left[ \left( \int_0^t \sqrt{V_u} \D W^s_u \right)^2 \right]\\
&=\E \left[\left(\int_0^t n_u \D u \right)^2 \right] + 2 \E \left[\int_0^t n_u \D u \int_0^t \sqrt{V_u} \D W^s_u\right] + \E \left[ \int_0^t V_u \D u  \right]\\
&=\E \left[\left(\int_0^t n_u \D u \right)^2 \right] + 2 \E \left[\int_0^t n_u \D u \int_0^t \sqrt{V_u} \D W^s_u\right] + \theta t.
\end{align*}
Let
\begin{align*}
x(t):= \E \left[\left(\int_0^t n_u \D u \right)^2 \right] = 2 \E \left[\int_0^t n_s \left(\int_0^s n_u \D u \right) \D s \right]
\end{align*}
and
\begin{align*}
y(t):=&{}\E \left[\int_0^t n_u \D u \int_0^t \sqrt{V_u} \D W_u^s \right] \\
=&{} \E \left[\int_0^t\left(\int_0^s n_u \D u \right)\sqrt{V_s} \D W^s_s \right] + \E \left[\int_0^t \left(\int_0^s \sqrt{V_u} \D W^s_u \right)n_s \D s \right]\\
=&{} \E \left[\int_0^t \left(\int_0^s \sqrt{V_u} \D W^s_u \right)n_s \D s \right].
\end{align*}
By assumming $n_0 = 0$,
\begin{align*}
\E\left[n_s \int_0^s \sqrt{V_u} \D W^s_u \right] ={}& \E \left[-\kappa_1 \int_0^s  n_u \D u \int_0^s \sqrt{V_u} \D W^s_u \right] + \E \left[ \phi \int_0^s \sqrt{V_u} \D W^s_u \int_0^s \sqrt{V_u} \D W^s_u \right]\\
={}& -\kappa_1 y(s) + \phi \int_0^s \E[V_u] \D u\\
={}& -\kappa_1 y(s) + \phi \theta s.
\end{align*}
Thus, 
$$ y'(t) = -\kappa_1 y(t) + \phi \theta t$$
and
$$ y(t) = \frac{\phi \theta \left(\kappa_1 t-1+e^{-\kappa t}\right) }{\kappa_1^2}.$$
In addition,
\begin{align*}
\E \left[ n_s \left(\int_0^s n_u \D u \right)\right]&= \E\left[ -\kappa \left(\int_0^s n_u\D u\right)^2 + \phi \int_0^s \sqrt{V_u}\D W_u\int_0^s n_u \D u  \right]\\
&= -\kappa_1 x(s) + \phi y(s)
\end{align*}
Thus,
$$x(t) = 2 \int_0^t (-\kappa_1 x(s) + \phi y(s)) \D s$$
and
$$x'(t) = - 2 \kappa_1 x(t) + 2\phi y(t) .$$
Therefore,
$$ x(t) = \frac{\theta\phi ^2\left( - e^{-2\kappa_1 t} + 4 e^{-\kappa_1 t} - 3 +2  \kappa_1 t\right) }{2 \kappa_1^3}$$
and the desired result is obtained.

\section{Proof of Proposition~\ref{Prop:tmv}}\label{sect:proof2}
As
$$ \D R_t = \frac{1}{S_0}\D S_t$$
and
\begin{align*}
\D R^2_t &= 2 R_t \D R_t + \D [R]_t \\
&= \frac{2(S_t-S_0)}{S_0^2}\D S_t + \frac{1}{S_0^2}\D [S]_t ,
\end{align*}
we have
$$ \D [R,R^2]_t = \frac{2(S_t-S_0)}{S_0^3}\D [S]_t = \frac{2(S_t-S_0)V_t}{S_0^3}\D t.$$
Note that
\begin{align*}
\E [S_t V_t] ={}& \E \left[\left(S_0 + \int_0^t n_s \D s + \int_0^t \sqrt{V_s}\D W^s_s \right)\left(V_0 + \int_0^t \kappa_2 (\theta-V_s)\D s + \int_0^t \gamma \sqrt{V_s}\D W_s^v \right) \right]\\
={}&\E \left[S_0 V_0 + S_0\int_0^t \kappa_2 (\theta-V_s)\D s + S_0 \int_0^t \gamma \sqrt{V_s}\D W_s^v  \right.\\
&+ V_0 \int_0^t n_s \D s + \kappa_2 \theta t\int_0^t n_s \D s - \int_0^t n_s \D s\int_0^t\kappa_2 V_s\D s + \int_0^t n_s \D s\int_0^t \gamma \sqrt{V_s}\D W_s^v \\
&+ V_0\int_0^t\sqrt{V_s} \D W^s_s + \kappa_2\theta t\int_0^t \sqrt{V_s} \D W_s^s -\kappa_2\int_0^t V_s \D s \int_0^t \sqrt{V_s} \D W_s^s \\
&\left.+ \int_0^t\sqrt{V_s}\D W_s^s \int_0^t\gamma\sqrt{V_s} \D W^v_s \right]\\
={}&\E \left[S_0 \theta  - \int_0^t n_s \D s\int_0^t\kappa_2 V_s\D s + \int_0^t n_s \D s\int_0^t \gamma \sqrt{V_s}\D W_s^v \right.\\
&\left. -\kappa_2\int_0^t V_s \D s \int_0^t \sqrt{V_s} \D W_s^s + \int_0^t\sqrt{V_s}\D W_s^s \int_0^t\gamma\sqrt{V_s} \D W^v_s \right].
\end{align*}
For the last equality, the following assumption is used:
$$ \E[V_s]=V_0=\theta, \quad \E[n_s]=0. $$
The following can be derived:
\begin{align*}
w(t) :={}& \E \left[\int_0^t V_s \D s \int_0^t \sqrt{V_s} \D W_s^s \right] = \E \left[\int_0^t V_s \left(\int_0^s \sqrt{V_u} \D W_u^s \right) \D s\right]\\
={}&\E\left[\int_0^t \left( \int_0^s \kappa_2 (\theta - V_u) \D u \int_0^s \sqrt{V_u} \D W_u^s + \int_0^s \gamma \sqrt{V_u} \D W^v_u \int_0^s \sqrt{V_u} \D W_u^s \right) \D s \right]\\
={}& \E \left[\int_0^t -\kappa_2 \left( \int_0^s V_u \D u \int_0^s \sqrt{V_u} \D W_u^s \right) \D s \right] + \rho\gamma\int_0^t\int_0^s \E[V_u] \D u \D s\\
={}&\int_0^t (-\kappa_2 w(s) + \theta\gamma\rho s) \D s\\
={}&\frac{\gamma \rho \theta}{\kappa_2^2}(\kappa_2 t -1 + \e^{-\kappa_2 t}).
\end{align*}
Similarly
\begin{align*}
q(t) := {}& \E \left[ \int_0^t n_s \D s \int_0^t \sqrt{V_s}\D W_s^v \right] = \E \left[\int_0^t n_s \left(\int_0^s \sqrt{V_u} \D W^v_u \right) \D s \right] \\
={}&\E \left[-\kappa_1 \int_0^t \left(\int_0^s  n_u \D u \int_0^s \sqrt{V_u} \D W^v_u \right) \D s\right] + \E \left[ \phi \int_0^t \left( \int_0^s \sqrt{V_u} \D W^v_u \int_0^s \sqrt{V_u} \D W^s_u  \right) \D s\right]\\
={}& \int_0^t (- \kappa_1 q(s) + \phi \rho \theta s) \D s\\
={}& \frac{\phi \rho \theta}{\kappa_1^2}(\kappa_1 t -1 + \e^{-\kappa_1 t}).
\end{align*}
Let
\begin{align*}
z(t) := \E\left[ \int_0^t n_s \D s \int_0^t V_s \D s \right] = \E \left[ \int_0^t \left(\int_0^s n_u \D u \right) V_s \D s \right] + \E \left[ \int_0^t \left(\int_0^s V_u \D u \right) n_s \D s \right]
\end{align*}
and
\begin{align*}
\E \left[ V_s \int_0^s n_u \D u \right] ={}& \E \left[ \left( V_0 + \int_0^s \kappa_2(\theta-V_u) \D u + \int_0^s \gamma \sqrt{V_u} \D W_u^v \right)\int_0^s n_u \D u \right]\\
={}& \E \left[ -\int_0^s \kappa_2 V_u \D u \int_0^s n_u \D u + \int_0^s \gamma \sqrt{V_u} \D W_u^v \int_0^s n_u \D u \right]\\
={}& -\kappa_2 z(s) + \frac{\gamma\rho\phi\theta}{\kappa_1^2}(\kappa_1 s-1+\e^{-\kappa_1 s})
\end{align*}
and
\begin{align*}
\E \left[ n_s \int_0^s V_u \D u \right] ={}& \E \left[ \left(- \int_0^s \kappa n_u \D u + \int_0^s \phi\sqrt{V_u} \D W^s_u\right) \int_0^s V_u \D u \right]\\
={}& -\kappa z(s) + \frac{\gamma \rho \phi \theta}{\kappa_2^2}(\kappa_2 s -1 + \e^{-\kappa_2 s}).
\end{align*}
Thus,
$$ z(t) = \int_0^t \left\{ - (\kappa_1 + \kappa_2) z(s) + \frac{\gamma\rho\phi\theta}{\kappa^2}(\kappa s-1+\e^{-\kappa s})+ \frac{\gamma \rho \phi \theta}{\kappa_2^2}(\kappa_2 s -1 + \e^{-\kappa_2 s}) \right\} \D s$$
and
$$z(t) = \gamma\theta\rho\phi\frac{-\kappa_1^2-\kappa_2^2-\kappa_1\kappa_2 +(\kappa_1^2\kappa_2 + \kappa_1\kappa_2^2)t + (\kappa_2^2+\kappa_1\kappa_2)\e^{-\kappa_1 t} + (\kappa_1^2+\kappa_1\kappa_2)\e^{-\kappa_2 t} -\kappa_1\kappa_2\e^{-(\kappa_1+\kappa_2)t}}{\kappa_1^2\kappa_2^2(\kappa_1+\kappa_2)}.$$
Therefore,
\begin{align*}
\E[S_t V_t] = S_0 \theta - \kappa_2 z(t) + \frac{\gamma\rho\phi\theta}{\kappa_1^2}(\kappa_1 t-1+\e^{-\kappa_1 t}) - \frac{\gamma \rho \theta}{\kappa_2}(\kappa_2 t -1 + \e^{-\kappa_2 t}) + \gamma\theta\rho t
\end{align*}
and hence
\begin{align*}
\int_0^t \E[S_u V_u] \D u ={}& S_0 \theta t + \frac{\gamma\theta\rho}{2} t^2 + \frac{\gamma\rho\phi\theta}{\kappa^3}
\left\{ \frac{\kappa_1^2}{2} t^2 - \kappa_1 t + 1 - \e^{-\kappa_1 t} \right\} - \frac{\gamma \rho \theta}{\kappa_2^2}\left\{ \frac{\kappa_2^2}{2} t^2 - \kappa_2 t +1 - \e^{-\kappa_2 t} \right\}\\
&- \frac{ \gamma\theta\rho\phi }{\kappa_1^2 \kappa_2 (\kappa_1 +\kappa_2)}\left\{ (-\kappa_1^2-\kappa_2^2-\kappa_1\kappa_2)t + \frac{1}{2} (\kappa_1^2\kappa_2 + \kappa_1\kappa_2^2)t^2 \right.\\
&\left. - \frac{  \kappa_2^2+\kappa_1\kappa_2}{\kappa_1} (\e^{-\kappa_1 t}-1) - \frac{\kappa_1^2+\kappa_1\kappa_2}{\kappa_2} (\e^{-\kappa_2 t}-1) + \frac{\kappa_1\kappa_2}{\kappa_1+\kappa_2}(\e^{-(\kappa_1+\kappa_2) t}-1)\right\}.
\end{align*}
Finally,
\begin{align*}
\E[[R,R^2]_t] = \frac{2}{S_0^3}\int_0^t \E[S_u V_u] \D u - \frac{2 \theta t}{S_0^2}.
\end{align*}

\end{document}